\numberwithin{equation}{section}
\theoremstyle{plain}
\newtheorem{theorem}{Theorem}[section]
\newtheorem{proposition}[theorem]{Proposition}
\newtheorem{corollary}[theorem]{Corollary}
\theoremstyle{definition}
\newtheorem*{rh-pb*}{Basic RH problem}
\newtheorem*{sol-rh-pb*}{Soliton RH problem}
\newtheorem*{data*}{Data of this RH problem associated with $\BS{u_0(x)}$}
\theoremstyle{remark}
\newtheorem{remark}[theorem]{Remark}
\newtheorem*{notations*}{Notations}
\providecommand{\BS}[1]{\boldsymbol{#1}}
\providecommand{\D}[1]{\mathbb{#1}}
\newcommand{\dd}{\mathrm{d}}
\newcommand{\eul}{\mathrm{e}}
\newcommand{\ii}{\mathrm{i}}
\newlength{\dhatheight}
\newcommand{\doublehat}[1]{%
    \settoheight{\dhatheight}{\ensuremath{\hat{#1}}}%
    \addtolength{\dhatheight}{-0.3ex}%
    \hat{\vphantom{\rule{1pt}{\dhatheight}}%
    \smash{\hat{#1}}}}
\providecommand{\abs}[1]{\lvert#1\rvert}
\providecommand{\accol}[1]{\lbrace#1\rbrace}
\renewcommand{\Im}{\operatorname{Im}}
\newcommand{\ord}{\mathrm{O}}
\DeclareMathOperator{\Res}{Res}
\begin{document}
%---------------------------------------------------------%
\title[Riemann--Hilbert approach for the mCH equation]{A Riemann--Hilbert approach to the modified Camassa--Holm equation with nonzero boundary conditions}
%---------------------------------------------------------%
\author[A.~Boutet de Monvel]{Anne Boutet de Monvel}
\address{Institut de Math\'ematiques de Jussieu-Paris Rive Gauche,
Universit\'e de Paris,
8 place Aur\'elie Nemours, case 7012,
75205 Paris Cedex 13,
France}
\email{anne.boutet-de-monvel@imj-prg.fr}
%---------------------------------------------------------%
\author[I.~Karpenko]{Iryna Karpenko}
\address{B.I.~Verkin Institute for Low Temperature Physics and Engineering,
47 Nauky Avenue, 61103 Kharkiv, Ukraine}
\email{inic.karpenko@gmail.com}
%---------------------------------------------------------%
\author[D.~Shepelsky]{Dmitry Shepelsky}
\address{B.I.~Verkin Institute for Low Temperature Physics and Engineering,
47 Nauky Avenue, 61103 Kharkiv, Ukraine\\
V.N.~Karazin Kharkiv National University, 4 Svobody Square, 61022 Kharkiv, Ukraine}
\email{shepelsky@yahoo.com}
%---------------------------------------------------------%
\subjclass[2010]{Primary: 35Q53; Secondary: 37K15, 35Q15, 35B40, 35Q51, 37K40}
\keywords{Riemann--Hilbert problem, Camassa--Holm equation}
%---------------------------------------------------------%
\date{\today}
%---------------------------------------------------------%
\begin{abstract}
The paper aims at developing the Riemann--Hilbert problem approach to the modified Camassa--Holm (mCH) equation in the case when the solution is assumed to approach a non-zero constant at the both infinities of the space variable. In this case, the spectral problem for the associated Lax pair equation has a continuous spectrum, which allows formulating the inverse spectral problem as a Riemann--Hilbert factorization problem with jump conditions across the real axis. We obtain a representation for the solution of the Cauchy problem for the mCH equation and also a description of certain soliton-type solutions, both regular and non-regular.
\end{abstract}
%---------------------------------------------------------%
\maketitle
\tableofcontents
%---------------------------------------------------------%
%:s.1
%---------------------------------------------------------%
\section{Introduction}\label{sec:1}

The Camassa--Holm (CH) equation \cites{CH93,CHH94}
\begin{equation}\label{CH}
u_t - u_{xxt}+ 3u u_x - 2u_x u_{xx} - u u_{xxx} = 0,
\end{equation}
which can also be written in terms of the momentum variable
\begin{equation}\label{CH-1}
m_t+\left(u m\right)_x + u_x m = 0,\quad m\coloneqq u-u_{xx},
\end{equation}
has been studied intensively over the last 25 years, due to its rich mathematical structure. It is a model for the unidirectional propagation of shallow water waves over a flat bottom \cites{J02,CL09}, is bi-Hamiltonian \cite{CH93}, and is completely integrable with algebro-geometric solutions \cite{Q03}. The local and global well-posedness of the Cauchy problem for the CH equation have been studied extensively \cites{CE98-1, CE98-2, D01}. In particular, it has both globally strong solutions  and blow-up solutions at finite time \cites{C00,CE98-1,CE98-2,CE98-3}, and also it has globally weak solutions in $H^1(\D{R})$ \cites{BC07, CM00, XZ00}.

The soliton-type solutions of the CH equation vanishing at infinity \cite{CHH94} are weak solutions, having the form of peaked waves ($u(x,t)$ and $u_x(x,t)$ are bounded but $u_x(x,t)$ is discontinuous), which are orbitally stable \cite{CS00}.
 
On the other hand, adding to \eqref{CH} a linear dispersion term $b u_x$ with $b>0$ leads to a form of the CH equation 
\begin{equation}\label{CH-b}
u_t-u_{xxt}+b u_x + 3u u_x - 2u_x u_{xx} - u u_{xxx} = 0,
\end{equation}
which supports conventional smooth solitons \cites{C01,BS06,BS08}. 

Over the last few years various modifications and generalizations of the CH equation have been introduced, see, e.g., \cite{YQZ18} and references therein. Novikov \cite{N09} applied the perturbative symmetry approach in order to classify integrable equations of the form 
\[
\left(1-\partial_x^2\right) u_t = F(u, u_x, u_{xx}, u_{xxx}, \dots),\qquad u=u(x,t), \quad \partial_x=\partial/\partial x,
\]
assuming that $F$ is a homogeneous differential polynomial over $\D{C}$, quadratic or cubic in $u$ and its $x$-derivatives (see also \cite{MN02}). In the list of equations presented in \cite{N09}, equation (32), which was the second equation with \emph{cubic} nonlinearity, had the form
\begin{equation}\label{mCH-1}
m_t+\left((u^2-u_x^2)m\right)_x = 0, \quad m\coloneqq u-u_{xx}.
\end{equation}
In an equivalent form, this equation was given by Fokas in \cite{F95} (see also \cite{OR96} and \cite{Fu96}) and has attracted considerable interest since it was re-derived by Qiao \cite{Q06}. So it is sometimes referred to as the Fokas--Olver--Rosenau--Qiao equation \cite{HFQ17}, but is also known as the modified Camassa--Holm (mCH) equation. Equation \eqref{mCH-1} has a bi-Hamiltonian structure \cites{OR96,GLOQ13} and possesses a Lax pair \cite{Q06}. Its algebro-geometric quasiperiodic solutions are studied in \cite{HFQ17}. The local well-posedness for classical solutions and global weak solutions to \eqref{mCH-1} in Lagrangian coordinates are discussed in \cite{GL18}. It also has solitary wave solutions \cite{GLOQ13}
\[
u(x,t)=\frac{p}{2}\eul^{-|x-x(t)|},\quad m(x,t)=p\delta(x-x(t))\quad\text{with }x(t)=\frac{1}{6}p^2t.
\]

Notice that considering the initial value problem for the Camassa--Holm equation with a linear dispersion term \eqref{CH-b} and with initial data decaying to $0$ as $x\to\pm\infty$ is equivalent to considering the CH equation in the form \eqref{CH} on a nonzero background, i.e., with initial data approaching a nonzero constant as $x\to\pm\infty$. A similar situation takes place, for example, for the Degasperis--Procesi equation
\begin{equation}\label{DP-1}
m_t+\left(um\right)_x+2u_xm=0,  \quad m=u-u_{xx},
\end{equation}
which is also an integrable, CH-type equation with quadratic nonlinearity. On the other hand, for other CH-type equations, in particular, for those with cubic nonlinearity, the situation is different: while considering the equation on a nonzero background again leads to problems supporting smooth solitons, changing variables (leading to zero background) results in an equation having different form, which is not equivalent to adding just a linear dispersion term; see, e.g., the case of the Novikov equation \cite{BS16}.

In the present paper, we consider the initial value problem for the mCH equation \eqref{mCH-1}:
\begin{subequations}\label{mCH1-ic}
\begin{alignat}{4}           \label{mCH-1-ic}
&m_t+\left((u^2-u_x^2)m\right)_x=0,&\quad&m\coloneqq u-u_{xx},&\quad&t>0,&\;&-\infty<x<+\infty,\\
&u(x,0)=u_0(x),&&&&&&-\infty<x<+\infty\label{IC},
\end{alignat}
\end{subequations}
assuming that $u_0(x)\to 1$ as $x\to\pm\infty$, and we search for a solution that preserves this behavior: $u(x,t)\to 1$ as $x\to\pm\infty$ for all $t>0$. Then, in analogy with the CH equation and other CH-type equations, one can expect that the Cauchy problem \eqref{mCH1-ic} supports smooth soliton solutions.

Introducing a new function $\tilde u$ by 
\begin{equation}\label{utilde}
u(x,t)=\tilde u(x-t,t)+1,
\end{equation}
the mCH equation reduces to 
\begin{subequations}\label{mCH2}
\begin{align}\label{mCH-2}
&\tilde m_t+\left(\tilde\omega\tilde m\right)_x = 0,\\
\label{tm}
&\tilde m\coloneqq\tilde u-\tilde u_{xx}+1,\\
\label{tom}
&\tilde\omega\coloneqq\tilde u^2-\tilde u_x^2+2\tilde u.
\end{align}
\end{subequations}
In what follows we will study equation \eqref{mCH2} on zero background: $\tilde u\to 0$ as $x\to\pm\infty$. More precisely, we develop the Riemann--Hilbert (RH) problem approach to equation \eqref{mCH-2} on zero background, aiming at obtaining a representation of the solution of the Cauchy problem for \eqref{mCH2} in terms of the solution of an associated RH problem formulated in the complex plane of a spectral parameter. 

The paper is organized as follows. In Section \ref{sec:2} we introduce the Jost solutions of the Lax pair equations written in a form  appropriate for controlling their analytical properties as function of the spectral parameter. In Section \ref{sec:3} we formulate the Riemann--Hilbert problem in two settings: (i) in the original setting, it (implicitly) depends on the physical variables $(x,t)$ as parameters and (ii) in a transformed setting, introducing new variables $(y,t)$ in terms of which the RH problem has an explicit parameter dependence. The data for the later RH problem are uniquely determined by the initial data for the mCH equation, which gives rise to a procedure for solving the Cauchy problem \eqref{mCH1-ic}. In Section \ref{sec:7} we show that starting with the solution of a RH problem with appropriate dependence on the parameters, we always arrive at a solution to the mCH equation, even if the data for this RH problem are not associated with some particular initial data for the mCH equation. Finally, in Section \ref{sec:8}, using the RH problem formalism, we construct smooth as well as non-smooth soliton solutions to the mCH equation. Throughout the text, we emphasize the differences in the implementation of the RH approach to the CH and mCH equations.

%-------------------%
\begin{notations*}
Furthermore, $\sigma_1\coloneqq\left(\begin{smallmatrix}0&1\\1&0\end{smallmatrix}\right)$, $\sigma_2\coloneqq\left(\begin{smallmatrix}0&-\ii\\\ii&0\end{smallmatrix}\right)$, and $\sigma_3\coloneqq\left(\begin{smallmatrix}1&0\\0&-1\end{smallmatrix}\right)$ denote the standard Pauli matrices. We also let $f^*(k)\coloneqq\overline{f(\bar k)}$ denote the Schwarz conjugate of a function $f(k)$, $k\in\D{C}$.
\end{notations*}
%-------------------%

%---------------------------------------------------------%
%:s.2
%---------------------------------------------------------%
\section{Lax pairs and eigenfunctions}\label{sec:2}
%---------------------------------------------------------%
%:s.2.1
%---------------------------------------------------------%
\subsection{Lax pairs}
In order to deduce the Lax pair for equation \eqref{mCH-2}, we take as starting point the Lax pair for the mCH equation \eqref{mCH-1} \cite{Q06}
\[
\Phi_x=\mathsf{U}\Phi,\qquad\Phi_t=\mathsf{V}\Phi
\]
where $\Phi\equiv\Phi(x,t,\lambda)$, $\mathsf{U}\equiv\mathsf{U}(x,t,\lambda)$, and $\mathsf{V}\equiv\mathsf{V}(x,t,\lambda)$, the coefficients $\mathsf{U}$ and $\mathsf{V}$ being defined by
\begin{align*}
\mathsf{U}&=\frac{1}{2}\begin{pmatrix} -1 & \lambda  m \\
-\lambda m & 1\end{pmatrix},\\
\mathsf{V}&=\begin{pmatrix}\lambda^{-2}+\frac{u^2-u_x^2}{2} &
-\lambda^{-1}(u-u_x)-\frac{\lambda(u^2-u_x^2)m}{2}\\
\lambda^{-1}(u+u_x)+\frac{\lambda(u^2- u_x^2)m}{2} &
		 -\lambda^{-2}-\frac{u^2-u_x^2}{2}\end{pmatrix},
\end{align*}
with $m\coloneqq u-u_{xx}$. This leads us to the pair of equations
\begin{subequations}\label{Lax}
\begin{align}\label{Lax-x}
\Phi_x&=U\Phi,\\
\label{Lax-t}
\Phi_t&=V\Phi,
\end{align}
\end{subequations}
where the coefficients $U\equiv U(x,t,\lambda)$ and $V\equiv V(x,t,\lambda)$ are now defined by
\begin{subequations}\label{Lax-UV}
\begin{align}\label{Lax-U}
U&=\frac{1}{2}\begin{pmatrix} -1 & \lambda \tilde m \\
-\lambda \tilde m & 1 \end{pmatrix},\\
\label{Lax-V}
V&=\begin{pmatrix}\lambda^{-2}+\frac{\tilde\omega}{2} &
-\lambda^{-1}(\tilde u-\tilde u_x+1)-\frac{\lambda\tilde\omega\tilde m}{2}\\
\lambda^{-1}(\tilde u+\tilde u_x+1)+\frac{\lambda\tilde\omega\tilde m}{2} & -\lambda^{-2}-\frac{\tilde\omega}{2}\end{pmatrix}.
\end{align}
\end{subequations}
Here, $\tilde m\coloneqq\tilde u-\tilde u_{xx}+1$ and $\tilde\omega\coloneqq\tilde u^2-\tilde u_x^2+2\tilde u$ as in \eqref{tm} and \eqref{tom}, with $\tilde u$ as in \eqref{utilde}. It can be directly verified that \eqref{mCH-2} is the compatibility condition for the system \eqref{Lax}-\eqref{Lax-UV}. Thus, this system \eqref{Lax}-\eqref{Lax-UV} constitutes a Lax pair for \eqref{mCH-2}.

The RH formalism for integrable nonlinear equations is based on using appropriately defined eigenfunctions, i.e., solutions of the Lax pair, whose behavior as functions of the spectral parameter is well-controlled in the extended complex plane. Notice that the coefficient matrices $U$ and $V$ are traceless, which provides that the determinant of a matrix solution to \eqref{Lax} (composed from two vector solutions) is independent of $x$ and $t$.

Also notice that $U$ and $V$  have singularities (in the extended complex $\lambda$-plane) at $\lambda=0$ and $\lambda=\infty$. In order to control the behavior of solutions to \eqref{Lax} as functions of the spectral parameter $\lambda$ (which is crucial for the Riemann--Hilbert method), we follow a strategy similar to that adopted for the CH equation \cites{BS06,BS08}.

Namely, in order to control the large $\lambda$ behavior of solutions of \eqref{Lax}, we will transform this Lax pair into an appropriate form (see~\cites{BC,BS06,BS08}).

%-------------------%
%:prop 2.1
%-------------------%
\begin{proposition}
Equation \eqref{mCH-2} admits a Lax pair of the form
\begin{subequations}\label{Lax-Q-form}
\begin{align}
\hat\Phi_x+Q_x\hat\Phi &= \hat U\hat\Phi,\\
\hat\Phi_t+Q_t\hat\Phi &= \hat V\hat\Phi,
\end{align}
\end{subequations}
whose coefficients $Q\equiv Q(x,t,\lambda)$, $\hat U\equiv\hat U(x,t,\lambda)$, and $\hat V\equiv\hat V(x,t,\lambda)$ are $2\times 2$ matrices having the following properties:
\begin{enumerate}[\rm(i)]
\item
$Q$ is diagonal and is unbounded as $\lambda\to\infty$.
\item
$\hat U=\ord(1)$ and $\hat V=\ord(1)$ as $\lambda\to\infty$.
\item
The diagonal parts of $\hat U$ and $\hat V$ decay as $\lambda\to\infty$.
\item
$\hat U\to 0$ and $\hat V\to 0$ as $x\to\pm\infty$.
\end{enumerate}
\end{proposition}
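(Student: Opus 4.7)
The plan is to construct a gauge transformation $\Phi=F(x,t,\lambda)\hat\Phi$ together with a diagonal phase $Q(x,t,\lambda)$ such that the Lax pair \eqref{Lax}--\eqref{Lax-UV} rewrites as \eqref{Lax-Q-form}, with
\[
\hat U = F^{-1}UF-F^{-1}F_x+Q_x,\qquad \hat V = F^{-1}VF-F^{-1}F_t+Q_t.
\]
The key algebraic observation is that
\[
U=-\tfrac{1}{2}\sigma_3+\tfrac{\lambda\tilde m}{2}J,\qquad J\coloneqq\ii\sigma_2,
\]
and $J^2=-I$, $\{\sigma_3,J\}=0$ yield $U^2=\tfrac{1-\lambda^2\tilde m^2}{4}I$, so the eigenvalues of $U$ are $\pm\mu$ with $\mu\coloneqq\tfrac{1}{2}\sqrt{1-\lambda^2\tilde m^2}$. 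The $\lambda$-unbounded off-diagonal entries of $V$ (namely $\pm\tfrac{\lambda\tilde\omega\tilde m}{2}$) lie in the same $J$-direction, so a single well-chosen $F$ will simultaneously diagonalize the leading $\lambda\to\infty$ behavior of both $U$ and $V$.

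I would take $F=G(x,t)\,T(x,t,\lambda)$, where $T$ is the rotation in the $(\sigma_3,J)$-plane (with angle depending on $\lambda\tilde m$) that diagonalizes $U$, and $G$ is an $x,t$-dependent diagonal scaling, involving $\tilde m$ (typically something like $\tilde m^{1/2}$), designed to cancel the $\lambda$-unbounded off-diagonal entries of $T^{-1}T_x$ and $T^{-1}T_t$. As $x\to\pm\infty$ we have $\tilde u\to 0$, $\tilde m\to 1$, $\tilde\omega\to 0$, so $G\to I$ and $T\to T_\infty(\lambda)$, the $\lambda$-only matrix jointly diagonalizing the commuting limits $U_\infty$ and $V_\infty$. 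I would then set $Q_x$ (resp.\ $Q_t$) equal to the diagonal part of $F^{-1}UF-F^{-1}F_x$ (resp.\ $F^{-1}VF-F^{-1}F_t$), so that $\hat U$ and $\hat V$ emerge as their off-diagonal complements. Consistency of $Q$, i.e., $Q_{xt}=Q_{tx}$, follows from the diagonal part of the zero-curvature identity $U_t-V_x+[U,V]=0$ after conjugation by $F$.

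The four properties are then verified in turn: (i) $Q$ is diagonal by construction and unbounded in $\lambda$ because $Q_x$ contains the leading piece $\mu\sigma_3\sim\tfrac{\ii\lambda\tilde m}{2}\sigma_3$; (iii) the diagonal parts of $\hat U$ and $\hat V$ are identically zero, and in particular decay at $\lambda=\infty$; (iv) $\hat U,\hat V\to 0$ as $x\to\pm\infty$, because there $F_x,F_t\to 0$ and $F_\infty^{-1}U_\infty F_\infty$, $F_\infty^{-1}V_\infty F_\infty$ are diagonal, so their off-diagonal complements vanish. Property (ii), that $\hat U,\hat V=\ord(1)$ at $\lambda=\infty$, is the crux: the rotation $T$ alone would give $T^{-1}T_x=\ord(\lambda)$ off-diagonally (since the rotation angle depends on $\lambda\tilde m$, hence its $x$-derivative carries a factor of $\lambda$), and it is precisely the scaling $G$ that kills this $\ord(\lambda)$ contribution, leaving an $\ord(1)$ remainder.

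The main obstacle is thus identifying the correct scaling $G$. This is the delicate step, entirely analogous to the corresponding construction for the Camassa--Holm equation in \cites{BC,BS06,BS08}, with mCH-specific modifications coming from the presence of $\tilde m$ and $\tilde\omega$ (rather than their CH counterparts) in the Lax coefficients. Once $F$ is pinned down, all four asymptotic properties of the proposition follow from a direct, if somewhat tedious, calculation.
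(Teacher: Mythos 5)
Your scheme breaks precisely at the step you treat as routine: the existence of $Q$. Set $\mathcal{U}\coloneqq F^{-1}UF-F^{-1}F_x$ and $\mathcal{V}\coloneqq F^{-1}VF-F^{-1}F_t$, so that $F^{-1}\Phi$ solves the transformed pair with coefficients $\mathcal{U},\mathcal{V}$. You want $Q_x,Q_t$ to be the (full) diagonal parts of $\mathcal{U},\mathcal{V}$, so that $\hat U,\hat V$ are strictly off-diagonal, and you assert $Q_{xt}=Q_{tx}$ from ``the diagonal part of the zero-curvature identity.'' It does not follow: splitting $\mathcal{U}=\mathcal{U}_d+\mathcal{U}_o$, $\mathcal{V}=\mathcal{V}_d+\mathcal{V}_o$ into diagonal and off-diagonal parts, the diagonal part of $\mathcal{U}_t-\mathcal{V}_x+[\mathcal{U},\mathcal{V}]=0$ reads $\partial_t\mathcal{U}_d-\partial_x\mathcal{V}_d+[\mathcal{U}_o,\mathcal{V}_o]=0$, because the product of two off-diagonal $2\times2$ matrices is diagonal while $[\mathcal{U}_d,\mathcal{V}_d]=0$ and the mixed commutators are off-diagonal. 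Hence $\partial_t\mathcal{U}_d-\partial_x\mathcal{V}_d=-[\mathcal{U}_o,\mathcal{V}_o]$, and this commutator is generically nonzero here: $\mathcal{U}_o\neq0$ (it comes from $F^{-1}F_x$) and $\mathcal{V}_o\neq0$ (since $V+\tilde\omega U=\lambda^{-2}\sigma_3+\lambda^{-1}W$ with $W$ off-diagonal, and its conjugate by $F$ is not diagonal). So the $Q$ you prescribe does not exist, and with it the claim that the diagonal parts of $\hat U,\hat V$ vanish identically. The proposition only requires those diagonal parts to \emph{decay}, and that is how the paper proceeds: $Q=p\sigma_3$ is written in closed form by \eqref{Qp}--\eqref{p} (an $x$-antiderivative of the leading diagonal term, whose $t$-derivative is checked using \eqref{mCH-2}), while $\hat U,\hat V$ keep small diagonal parts, see \eqref{U-hat}, \eqref{hat-V}. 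To repair your argument you would have to do the same --- integrate your diagonal term in $x$ from $+\infty$ and accept a nonzero, decaying diagonal remainder --- at which point the exact diagonalization buys nothing.

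The step you call the crux also rests on a miscalculation, and the proposed remedy could not work even if the problem were real. Any bounded diagonalizer $T$ of $U$ depends on $(x,t,\lambda)$ only through $s=\lambda\tilde m$; for instance its off-diagonal entries can be taken as $w(s)=\frac{s}{1+\sqrt{1-s^{2}}}$, and then $\partial_x w=\frac{\lambda\tilde m_x}{\sqrt{1-s^{2}}\,(1+\sqrt{1-s^{2}})}=\ord(\lambda^{-1})$: the factor $\lambda$ from the chain rule is beaten by the $(1-\lambda^{2}\tilde m^{2})^{-1}$ arising from differentiating the rotation angle, so $T^{-1}T_x=\ord(\lambda^{-1})$, not $\ord(\lambda)$. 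Moreover, a $\lambda$-independent diagonal gauge $G(x,t)$ could never cancel an $\ord(\lambda)$ off-diagonal term: conjugation by a diagonal matrix only rescales off-diagonal entries by $\lambda$-independent factors, and $G^{-1}G_x$ is diagonal. (Also, with your ordering $F=GT$, the matrix $F^{-1}UF=T^{-1}(G^{-1}UG)T$ is no longer diagonalized by $T$; you would need $F=TG$, and $Q_x$ must be \emph{minus} the diagonal part for $\hat U$ to be the off-diagonal complement.) Finally, the CH-style $x,t$-dependent gauge you import is exactly what the paper shows to be unnecessary for mCH: by \eqref{U-tilde} the non-decaying part of $U$ is the scalar $\tilde m(x,t)$ times the fixed matrix $\left(\begin{smallmatrix}-1&\lambda\\-\lambda&1\end{smallmatrix}\right)$, and $V+\tilde\omega U=\ord(\lambda^{-1})$, so the single $x,t$-independent conjugation $D(\lambda)$ already yields \eqref{Lax-1-x}--\eqref{hat-V}, after which properties (i)--(iv) are read off directly once $Q$ is defined by \eqref{p}.
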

%-------------------%

%-------------------%
\begin{proof}
We first note that $U$ in \eqref{Lax-U} can be written as
\begin{equation}
\label{U-tilde}
U(x,t,\lambda)=\frac{\tilde m(x,t)}{2}\begin{pmatrix}-1&\lambda\\-\lambda&1\end{pmatrix}+\frac{\tilde m(x,t)-1}{2}\begin{pmatrix}1&0\\
0&-1\end{pmatrix},
\end{equation}
where $\tilde m(x,t)-1\to 0$ as $x\to\pm\infty$. The  first (non-decaying, as $x\to\pm\infty$) term in \eqref{U-tilde} can be diagonalized by introducing
\[
\hat\Phi(x,t,\lambda)\coloneqq D(\lambda)\Phi(x,t,\lambda),
\]
where
\[
D(\lambda)\coloneqq\begin{pmatrix}
1 & - \frac{\lambda}{1+\sqrt{1-\lambda^2}}  \\
- \frac{\lambda}{1+\sqrt{1-\lambda^2}} & 1 \\
\end{pmatrix},
\]
where the square root is chosen so that $\sqrt{1-\lambda^2}\sim\ii\lambda$ as $\lambda\to\infty$. This transforms \eqref{Lax-x} into
\begin{subequations}\label{Lax-1}
\begin{equation}\label{Lax-1-x}
\hat\Phi_x+\frac{\tilde m\sqrt{1-\lambda^2}}{2}\sigma_3\hat\Phi=\hat U \hat\Phi,
\end{equation}
where $\hat U\equiv\hat U(x,t,\lambda)$ is given by
\begin{equation}\label{U-hat}
\hat U=\frac{\lambda(\tilde m-1)}{2\sqrt{1-\lambda^2}}
\begin{pmatrix}
0 & 1 \\
-1 & 0 \\
\end{pmatrix}
+\frac{\tilde m-1}{2\sqrt{1-\lambda^2}}\sigma_3.
\end{equation}
Similarly, the $t$-equation \eqref{Lax-t} of the Lax pair is transformed into
\begin{equation}\label{Lax-1-t}
\hat\Phi_t +\sqrt{1-\lambda^2}
\left(-\frac{1}{2}\tilde m\tilde\omega-\frac{1}{\lambda^2}\right)\sigma_3\hat\Phi= \hat V \hat\Phi,
\end{equation}
where $\hat V\equiv\hat V(x,t,\lambda)$ is given by
\begin{equation}\label{hat-V}
\begin{aligned}
\hat V&=
\frac{1}{2 \sqrt{1-\lambda^2}}\left(\lambda\tilde\omega(\tilde m - 1)
+\frac{2 \tilde u}{\lambda}\right)
\begin{pmatrix}
0 & -1 \\
1 & 0
\end{pmatrix}
+\frac{ \tilde u_x}{\lambda}  \begin{pmatrix}
0 & 1 \\
1 & 0
\end{pmatrix}\\
&\quad-\frac{1}{\sqrt{1-\lambda^2}}\left(\tilde u +\frac{1}{2}(\tilde m-1)\tilde\omega\right) \sigma_3.
\end{aligned}
\end{equation}
\end{subequations}
Now notice that equations \eqref{Lax-1-x} and \eqref{Lax-1-t} have the desired form \eqref{Lax-Q-form}, if we define $Q$ by 
\begin{subequations}\label{Qp}
\begin{equation}\label{Q}
Q(x,t,\lambda)\coloneqq p(x,t,\lambda)\sigma_3, 
\end{equation}
with
\begin{equation}\label{p}
p(x,t,\lambda)\coloneqq -\frac{1}{2}\sqrt{1-\lambda^2}\int_x^{+\infty} (\tilde m(\xi,t)-1)\dd\xi+\frac{\sqrt{1-\lambda^2}}{2}x-\frac{\sqrt{1-\lambda^2}}{\lambda^2}t.
\end{equation}
\end{subequations}
Indeed, $p$ has derivatives
\begin{align*}
p_x&=\frac{\tilde m\sqrt{1-\lambda^2}}{2},\\
p_t&=\sqrt{1-\lambda^2}\left(-\frac{1}{2}\tilde m\tilde\omega-\frac{1}{\lambda^2}\right).
\end{align*}
The first formula is clear, while the second follows from \eqref{mCH-2}.
\end{proof}
%-------------------%

%---------------------------------------------------------%
%:s.2.2
%---------------------------------------------------------%
\subsection{Eigenfunctions}

The Lax pair in the form \eqref{Lax-1} allows us to determine dedicated solutions having a well-controlled behavior as functions of the spectral parameter $\lambda$ for large values of $\lambda$ via associated integral equations. Indeed, introducing
\begin{equation}\label{zam}
\widetilde\Phi=\hat\Phi\eul^{Q}
\end{equation}
(understanding $\widetilde\Phi$ as a $2\times 2$ matrix), equations \eqref{Lax-1-x} and \eqref{Lax-1-t} can be rewritten as
\begin{equation}\label{comsys}
\begin{cases}
\widetilde\Phi_x+[Q_x,\widetilde\Phi]=\hat U\widetilde\Phi,&\\
\widetilde\Phi_t+[Q_t,\widetilde\Phi]=\hat V\widetilde\Phi,&
\end{cases}
\end{equation}
where $[\,\cdot\,,\,\cdot\,]$ stands for the commutator. We now determine particular (Jost) solutions $\widetilde\Phi_{\pm}\equiv\widetilde\Phi_{\pm}(x,t,\lambda)$ of \eqref{comsys} as solutions of the associated Volterra integral equations:
\begin{equation}\label{inteq}
\widetilde\Phi_{\pm}(x,t,\lambda)=I+\int_{\pm\infty}^x
	\eul^{Q(\xi,t,\lambda)-Q(x,t,\lambda)}\hat U(\xi,t,\lambda)\widetilde\Phi_{\pm}(\xi,t,\lambda)
		\eul^{Q(x,t,\lambda)-Q(\xi,t,\lambda)}\dd\xi,
\end{equation}
that is, taking into account the definition \eqref{Qp} of $Q$,
\begin{equation}\label{eq}
\begin{split}
\widetilde\Phi_+(x,t,\lambda)&=I-\int_x^{+\infty}
\eul^{\frac{\sqrt{1-\lambda^2}}{2}
\int_x^\xi\tilde m(\eta,t)\dd\eta\,\sigma_3}\hat U(\xi,t,\lambda)\widetilde\Phi_+(\xi,t,\lambda)\eul^{-\frac{\sqrt{1-\lambda^2}}{2}\int_x^\xi\tilde m(\eta,t)\dd\eta\,\sigma_3}\dd\xi,\\
\widetilde\Phi_{-}(x,t,\lambda)&=I+\int_{-\infty}^x\eul^{\frac{\sqrt{1-\lambda^2}}{2}\int_x^\xi\tilde m(\eta,t)\dd\eta\,\sigma_3}\hat U(\xi,t,\lambda)\widetilde\Phi_{-}(\xi,t,\lambda)\eul^{-\frac{\sqrt{1-\lambda^2}}{2}\int_x^\xi\tilde m(\eta,t)\dd\eta\,\sigma_3}\dd\xi
\end{split}
\end{equation}
($I$ is the identity matrix). Hereafter, let $\hat\Phi_\pm\coloneqq\widetilde\Phi_\pm\eul^{-Q}$ denote the corresponding Jost solutions of \eqref{Lax-1}.

Introducing a new spectral parameter $k$ by 
\[
\lambda^2=4k^2+1,
\]
the exponentials in \eqref{eq} become $\eul^{\pm\ii k\int_x^\xi \tilde m(\xi,t)\dd\xi\,\sigma_3}$. Moreover, introducing the new space variable 
\begin{equation}\label{shkala}
y(x,t)\coloneqq x-\int_x^{+\infty}(\tilde m(\xi,t)-1)\dd\xi,
\end{equation}
$Q$ takes (by a slight abuse of notations) the form $Q(y,t,k)=-\ii k\left(y-\frac{2t}{4k^2+1}\right)\sigma_3$, which coincides with that in the case of the Camassa--Holm equation \cites{BS06,BS08}.

%-------------------%
%:rem 2.2
%-------------------%
\begin{remark}
Recall that the pair of renowned integrable equations --- the  Korteweg--de Vries (KdV) equation and the modified Korteweg--de Vries (mKdV) equation --- shares the same $Q$, which, in those cases, has the form $Q(x,t,k)=(\ii kx+4\ii k^3t)\sigma_3$. Therefore, the above consideration gives an additional reason to naming equation \eqref{mCH-1} as the \emph{modified} Camassa--Holm (mCH) equation.
\end{remark}
%-------------------%

However, an important difference between the Lax pairs for the CH equation and the mCH equation is that in the latter case, the dependence of the associated coefficient matrix $\hat U(x,t,k)$ (by a slight abuse of notations we keep the same notation $\hat U$) on the spectral parameter $k$ is not rational (because of $\lambda(k)$):
\[
\hat U(x,t,k) = \frac{\tilde{m}-1}{2}\left(\frac{1}{2\ii k}
            \begin{pmatrix}
                1 & 0 \\ 0 & -1
            \end{pmatrix}
            +\frac{\lambda(k)}{2\ii k}
            \begin{pmatrix}
                0 & 1 \\ -1 & 0
            \end{pmatrix}
            \right),
\]
which would complicate the construction of the RH problem, requiring either the introduction of a branch cut in the $k$ plane or the formulation of the RH problem on the Riemann sphere associated with $\lambda^2=4k^2+1$.

In order to avoid these complications, we introduce a new (uniformizing) spectral parameter $\mu$ such that both $\lambda$ and $k$ are rational w.r.t.~$\mu$:
\begin{equation}\label{la-k-mu}
\lambda=-\frac{1}{2}\left(\mu+\frac{1}{\mu}\right), \qquad
k = \frac{1}{4}\left(\mu-\frac{1}{\mu}\right).
\end{equation}
More precisely, we define $\mu=-\lambda-\ii\sqrt{1-\lambda^2}$, so that $k=-\frac{\ii}{2}\sqrt{1-\lambda^2}$ and $\sqrt{1-\lambda^2}=\frac{\ii}{2}\frac{\mu^2-1}{\mu}=2\ii k$. In terms of $\mu$ we have
\begin{align}\label{p_mu}
p(x,t,\mu)&=-\frac{\ii(\mu^2-1)}{4\mu}\left(\int_x^{+\infty} (\tilde m(\xi,t)-1)\dd\xi-x+\frac{8\mu^2}{(\mu^2+1)^2}t\right),\\
\label{U-hat_mu}
\hat U(x,t,\mu)&=
\frac{\ii(\mu^2+1)(\tilde m-1)}{2(\mu^2-1)}
\begin{pmatrix}
0 & 1 \\
-1 & 0 \\
\end{pmatrix}
-\frac{\ii\mu(\tilde m - 1)}{\mu^2-1}
\begin{pmatrix}
1 & 0 \\
0 & -1 \\
\end{pmatrix},
\end{align}
and, accordingly, equations \eqref{eq} become
\begin{equation}\label{inteq_mu}
\widetilde\Phi_{\pm}(x,t,\mu)=I+\int_{\pm\infty}^x\eul^{\frac{\ii(\mu^2-1)}{4\mu}\int_x^\xi\tilde m(\eta,t)\dd\eta\,\sigma_3}\hat U(\xi,t,\mu)\widetilde\Phi_{\pm}(\xi,t,\mu)\eul^{-\frac{\ii(\mu^2-1)}{4\mu}\int_x^\xi\tilde m(\eta,t)\dd\eta\,\sigma_3}\dd\xi.
\end{equation}

We are now able, by analogy with the case of the CH equation \cites{BS06,BS08}, to analyze the analytic and asymptotic properties of the solutions $\widetilde\Phi_\pm$ of \eqref{inteq_mu} as functions of $\mu$, using Neumann series expansions. Let $A^{(1)}$ and $A^{(2)}$ denote the columns of a $2\times 2$ matrix $A=\left(A^{(1)}\ \ A^{(2)}\right)$. Using these notations we have the following properties:
\begin{enumerate}[\textbullet]
\item
$\widetilde\Phi_-^{(1)}$ and $\widetilde\Phi_+^{(2)}$ are analytic in ${\D{C}}^+=\{\mu\in\D{C}\mid\Im \mu>0\}$;
\item
$\widetilde\Phi_+^{(1)}$ and $\widetilde\Phi_-^{(2)}$ are analytic in ${\D{C}}^-=\{\mu\in\D{C}\mid\Im\mu <0\}$;
\item
$\widetilde\Phi_-^{(1)}$, $\widetilde\Phi_+^{(2)}$, $\widetilde\Phi_+^{(1)}$, and $\widetilde\Phi_-^{(2)}$ are continuous up to the real line except at $\mu=\pm 1$.
\end{enumerate}
Further, we first observe that $\hat U(\mu)\equiv\hat U(x,t,\mu)$, $\hat V(\mu)\equiv\hat V(x,t,\mu)$ satisfy the same symmetries:
\begin{subequations}\label{sym-UV}
\begin{alignat}{3}\label{sym-U}
\hat{U}(\bar\mu)&=\sigma_1\overline{\hat{U}(\mu)}\sigma_1,&\qquad&\hat{U}(-\mu)=\sigma_2\hat{U}(\mu)\sigma_2,&\qquad&\hat{U}(\mu^{-1})=\sigma_1\hat U(\mu)\sigma_1,\\
\hat{V}(\bar\mu)&=\sigma_1\overline{\hat{V}(\mu)}\sigma_1,&&\hat{V}(-\mu)=\sigma_2\hat{V}(\mu)\sigma_2,&&\hat{V}(\mu^{-1})=\sigma_1\hat V(\mu)\sigma_1,
\end{alignat}
\end{subequations}
with $\mu\neq\pm1$, and also $\mu\neq 0$ for the symmetry $\mu\mapsto\mu^{-1}$. Moreover, $p(\mu)\equiv p(x,t,\mu)$ satisfies the following symmetries:
\begin{equation}\label{sym-p}
p^*(\mu)=-p(\mu)=p(-\mu)=p(\mu^{-1}).
\end{equation}
It follows that
\begin{enumerate}[\textbullet]
\item 
$\widetilde\Phi_\pm$ also satisfy the same symmetries as in \eqref{sym-U}:
\begin{equation}\label{sym-Phi}
\widetilde\Phi_\pm(\bar\mu)=\sigma_1\overline{\widetilde\Phi_\pm(\mu)}\sigma_1,\quad\widetilde\Phi_\pm(-\mu)=\sigma_2\widetilde\Phi_\pm(\mu)\sigma_2,\quad\widetilde\Phi_\pm(\mu^{-1})=\sigma_1\widetilde\Phi_\pm(\mu)\sigma_1.
\end{equation}
That means $\widetilde\Phi_\pm^{(1)}(\mu)=\sigma_1\widetilde\Phi_\pm^{(2)*}(\mu)=\sigma_3\sigma_1\widetilde\Phi_\pm^{(2)}(-\mu)=\sigma_1\widetilde\Phi_\pm^{(2)}(\mu^{-1})$ for $\pm\Im\mu\leq 0$, $\mu\neq\pm1$.
\end{enumerate}
In \eqref{comsys} the coefficients are traceless matrices, from which it follows that
\begin{enumerate}[\textbullet]
\item
$\det\widetilde\Phi_\pm\equiv 1$.
\end{enumerate}
Regarding the values of $\widetilde\Phi_\pm$ at particular points in the $\mu$-plane, \eqref{inteq_mu} implies the following:
\begin{enumerate}[\textbullet]
\item
$\left(\begin{smallmatrix}
\widetilde\Phi_-^{(1)} &
\widetilde\Phi_+^{(2)}\end{smallmatrix}\right)\to I$ as $\mu\to\infty$
with $\Im\mu\geq 0$, and also for $\mu=0$ (by the symmetry \eqref{sym-Phi}).
\item
$\left(\begin{smallmatrix}
\widetilde\Phi_+^{(1)} &
\widetilde\Phi_-^{(2)}\end{smallmatrix}\right)\to I$ as $\mu\to\infty$ with $\Im\mu\leq 0$, and also for $\mu=0$.
\item
As $\mu\to 1$, $\widetilde\Phi_\pm(x,t,\mu)=\frac{\ii}{2(\mu-1)}\alpha_\pm(x,t)\left(\begin{smallmatrix}-1&1\\ -1&1\end{smallmatrix}\right)+\ord(1)$ with $\alpha_\pm(x,t)\in\D{R}$ (understood column-wise, in the corresponding half-planes).
\item
As $\mu\to -1$, $\widetilde\Phi_\pm(x,t,\mu)=-\frac{\ii}{2(\mu+1)}\alpha_\pm(x,t)\left(\begin{smallmatrix}1 & 1 \\ -1 & -1\end{smallmatrix}\right)+\ord(1)$ with the same $\alpha_\pm(x,t)$ as the previous ones (by symmetry \eqref{sym-Phi}).
\end{enumerate}

%---------------------------------------------------------%
%:s.2.3
%---------------------------------------------------------%
\subsection{Spectral data}  \label{sec:spectral-data}

Introduce the scattering matrix $s(\mu)$ as a matrix relating $\widetilde\Phi_+$ and $\widetilde\Phi_-$ on the real line:
\begin{equation}\label{scat}
\widetilde\Phi_+(x,t,\mu)=\widetilde\Phi_-(x,t,\mu)
\eul^{-p(x,t,\mu)\sigma_3} s(\mu)\eul^{p(x,t,\mu)\sigma_3},\qquad\mu\in\D{R},\ \mu\neq\pm 1.
\end{equation}
By \eqref{sym-Phi}, $s(\mu)$ can be written in terms of two scalar spectral functions, $a(\mu)$ and $b(\mu)$:
\begin{equation}\label{scat_mat}
s(\mu)=\begin{pmatrix}\overline{a(\mu)} & b(\mu) \\ \overline{b(\mu)} & a(\mu)\end{pmatrix},\qquad\mu\in\D{R},
\end{equation}
satisfying the symmetries $\overline{a(\mu)}=a(-\mu)=a(\mu^{-1})$ and $\overline{b(\mu)}=-b(-\mu)=b(\mu^{-1})$ for $\mu\in\D{R}$.

The spectral functions $a(\mu)$ and $b(\mu)$ are uniquely determined by $u(x,0)$ through the solutions $\widetilde\Phi_\pm (x,0,\mu)$ of equations \eqref{inteq_mu}. On the other hand, using the representations \[
a(\mu)=\det\left(\widetilde\Phi_-^{(1)}\ \ \widetilde\Phi_+^{(2)}\right),\quad b(\mu)=\eul^{2p}\det\left(\widetilde\Phi_+^{(2)}\ \ \widetilde\Phi_-^{(2)}\right),
\]
the analytic properties of $\widetilde\Phi_\pm$ stated above imply corresponding properties of $a(\mu)$ and $b(\mu)$:
\begin{enumerate}[\textbullet]
\item
$a(\mu)$ can be analytically continued into $\D{C}^+$, being continuous up to the real line, except at $\mu=\pm1$. Moreover, $a(0)=1$, $a(\mu)\to 1$ as $\mu\to\infty$, and $a(\mu)$ satisfies the symmetries
\[
a(\mu)=\overline{a(-\bar\mu)}=a(-\mu^{-1})\text{ for }\Im\mu\geq 0.
\]
\item
$b(\mu)$ is continuous for $\mu\in\D{R}\setminus\{-1,1\}$. Moreover, $b(0)=0$ and $b(\mu)\to 0$ as $\mu\to\pm\infty$.
\item
As $\mu\to 1$, $a(\mu)=\gamma\frac{\ii}{2(\mu-1)}+\ord(1)$ and $b(\mu)=\gamma\frac{\ii}{2(\mu-1)}+\ord(1)$ with the same $\gamma\in\D{R}$, as follows from \eqref{scat}.
\item
As $\mu\to -1$, $a(\mu)=\gamma\frac{\ii}{2(\mu+1)}+\ord(1)$ and
$b(\mu)=-\gamma\frac{\ii}{2(\mu+1)}+\ord(1)$ with the same $\gamma$ as the previous one, by symmetry.
\item
$\abs{a(\mu)}^2-\abs{b(\mu)}^2=1$ for $\mu\in\D{R}$, $\mu\neq\pm1$.
\end{enumerate}

%-------------------%
%:rem 2.3
%-------------------%
\begin{remark}\label{rem:sing}
The case $\gamma \neq 0$ is generic. On the other hand, in the non-generic case $\gamma =0$, we then have $a(\pm 1)=a_1$ and $b(\pm 1)=\pm b_1$ with some $a_1\in\D{R}$ and $b_1\in\D{R}$ such that $a_1^2=1+b_1^2$. It then follows from \eqref{scat} that the coefficients $\alpha_+(x,t)$ and $\alpha_-(x,t)$ appearing in the expansions of $\widetilde\Phi$ at $\mu=\pm1$ are related by
\begin{equation}\label{a1}
\alpha_+(x,t)=(a_1-b_1)\alpha_-(x,t).
\end{equation}
\end{remark}
%-------------------%

%---------------------------------------------------------%
%:s.3
%---------------------------------------------------------%
\section{Riemann--Hilbert problem}\label{sec:3}
%---------------------------------------------------------%
%:s.3.1
%---------------------------------------------------------%
\subsection{RH problem parametrized by $\BS{(x,t)}$}

The analytic properties of $\widetilde\Phi_\pm$ stated above allow rewriting the scattering relation \eqref{scat} as a jump relation for a piece-wise meromorphic (w.r.t.~$\mu$), $2\times 2$-matrix valued function (depending on $x$ and $t$ as parameters). Indeed, define $M\equiv M(x,t,\mu)$ by
\begin{equation}\label{M}
M(x,t,\mu)=
\begin{cases}
\begin{pmatrix}\frac{\widetilde\Phi_-^{(1)}(x,t,\mu)}{a(\mu)} & \widetilde\Phi_+^{(2)}(x,t,\mu)\end{pmatrix},&\Im\mu>0,\\[1mm]
\begin{pmatrix}\widetilde\Phi_+^{(1)}(x,t,\mu) & \frac{\widetilde\Phi_-^{(2)}(x,t,\mu)}{\overline{a(\bar\mu)}}\end{pmatrix},&\Im\mu<0.
\end{cases}
\end{equation}
Define also
\begin{equation}\label{reflec}
r(\mu)\coloneqq\frac{b(\mu)}{a^*(\mu)},\quad\mu\in\D{R}.
\end{equation}
Then the limiting values $M_\pm(x,t,\mu)$, $\mu\in\D{R}$ of $M$ as $\mu$ is approached from $\D{C}^\pm$ are related by
\begin{subequations}\label{RH-x-J0}
\begin{equation}\label{RH-x}
M_-(x,t,\mu)=M_+(x,t,\mu)J(x,t,\mu),\quad \mu\in\D{R},\ \mu\neq \pm 1,
\end{equation}
where
\begin{equation}\label{J}
J(x,t,\mu)=\eul^{-p(x,t,\mu)\sigma_3}J_0(\mu)\eul^{p(x,t,\mu)\sigma_3}
\end{equation}
with
\begin{equation}\label{J0}
J_0(\mu)=\begin{pmatrix}
1&-r(\mu)\\r^*(\mu)&1-r(\mu)r^*(\mu)
\end{pmatrix}.
\end{equation}
\end{subequations}
Taking into account the properties of $\widetilde\Phi_\pm$ and $s(\mu)$ we check that $M(x,t,\mu)$ satisfies the following conditions:
\begin{enumerate}[\textbullet]
\item
The \emph{jump} condition \eqref{RH-x-J0} across $\D{R}$.
\item
The \emph{determinant} condition $\det M\equiv 1$.
\item 
The \emph{normalization} condition:
\begin{equation}\label{norm}
M\to I\quad\text{as }\mu\to\infty
\end{equation}
(and also $M(0)=I$ by symmetry, see \eqref{sym-M}).
\item 
\emph{Singularity} conditions:
\begin{equation}\label{sing}
M(x,t,\mu)=\begin{cases}
\frac{\ii\alpha_+(x,t)}{2(\mu-1)}\begin{pmatrix} -c & 1 \\ -c & 1 \end{pmatrix}+\ord(1), &\mu\to 1,\ \ \Im\mu>0,\\
-\frac{\ii\alpha_+(x,t)}{2(\mu +1)}\begin{pmatrix} c & 1 \\ -c & -1 \end{pmatrix}+\ord(1), &\mu\to -1,\ \ \Im\mu>0,
\end{cases}
\end{equation}
with some $\alpha_+(x,t)\in\D{R}$ and (see Remark~\ref{rem:sing})
\begin{subequations}\label{c}
\begin{equation}\label{c-1}
c\coloneqq\begin{cases}
0,&\text{if }\gamma\neq 0,\\
\frac{a_1+b_1}{a_1},&\text{if }\gamma=0,
\end{cases}
\end{equation}
where $a_1=a(1)$, $b_1=b(1)$, and $\gamma\coloneqq -2\ii\lim\limits_{\mu\to 1}(\mu-1)a(\mu)$. Notice that in terms of $r(\pm 1)$, the generic case $\gamma\neq 0$ corresponds to $r(1)=-r(-1)=-1$ whereas in the non-generic case, $\abs{r(\pm 1)}<1$ (see the case of the one-dimensional Schr\"odinger operator \cite{DT79}, which constitutes the spectral problem for the Korteweg--de Vries equation). Therefore, \eqref{c-1} can be written as 
\begin{equation}\label{c-2}
c\coloneqq\begin{cases}
0,&\text{if }r(1)=-1,\\
1+r(1)=1-r(-1),&\text{if }\abs{r(1)}<1.
\end{cases}
\end{equation}
\end{subequations}
Both conditions in \eqref{sing} are actually equivalent by the symmetries \eqref{sym-M}. 
\item 
\emph{Symmetries} (which result from \eqref{sym-Phi}):
\begin{equation}\label{sym-M}
M(\bar\mu)=\sigma_1\overline{M(\mu)}\sigma_1,\qquad M(-\mu)=\sigma_2M(\mu)\sigma_2,\qquad M(\mu^{-1})=\sigma_1M(\mu)\sigma_1,
\end{equation}
where $M(\mu)\equiv M(x,t,\mu)$. The first symmetry can also be written as $\sigma_1M^{(1)*}=M^{(2)}$. Moreover, \eqref{sym-M} implies the symmetries $\overline{M(-\bar\mu)}=M(-\mu^{-1})=\sigma_3M(\mu)\sigma_3$.
\end{enumerate}
If $a(\mu)$ is allowed to have zeros in $\D{C}^+$, the above conditions must be supplemented by residue conditions at these zeros. Assume that $a(\mu)$ has a finite number of simple zeros $\accol{\mu_j}_1^N$ in $\D{C}^+$. Symmetries $a(\mu)=\overline{a(-\bar\mu)}=a(-\mu^{-1})$ imply that this set of zeros is invariant under the transformations $\mu\mapsto-\bar\mu$ and $\mu\mapsto-\mu^{-1}$: for each $j$ there exist $j'$ and $j''$ such that $-\bar\mu_j=\mu_{j'}$ and $-\mu_j^{-1}=\mu_{j''}$.
\begin{enumerate}[\textbullet]
\item
\emph{Residue} conditions: $M^{(1)}(x,t,\mu)$ has simple poles at $\accol{\mu_j}_1^N$ and $M^{(2)}(x,t,\mu)$ has simple poles at $\accol{\bar\mu_j}_1^N$. Moreover
\begin{subequations}\label{res-M}
\begin{align}\label{res-M+}
\Res_{\mu_j}M^{(1)}(x,t,\mu)&=\frac{1}{\varkappa_j(x,t)}M^{(2)}(x,t,\mu_j),\\
\label{res-M-}
\Res_{\bar\mu_j}M^{(2)}(x,t,\mu)&=\frac{1}{\overline{\varkappa_j}(x,t)}M^{(1)}(x,t,\bar\mu_j).
\end{align}
\end{subequations}
Here $\varkappa_j(x,t)=\dot a(\mu_j)\delta_j\eul^{-2p(x,t,\mu_j)}$ with some constants $\delta_j\neq 0$. By symmetries \eqref{sym-M} both conditions in \eqref{res-M} are equivalent. Note also how the residue changes under the transformations $\mu\mapsto-\bar\mu$ and $\mu\mapsto-\mu^{-1}$: if $-\bar\mu_j=\mu_{j'}$ and $-\mu_j^{-1}=\mu_{j''}$ then $\varkappa_j=\overline{\varkappa_{j'}}=-\mu_j^{-2}\varkappa_{j''}$.
\end{enumerate}

%-------------------%
\begin{proof}[Proof of \eqref{res-M}]
Indeed, let $\mu_j$ be a simple root of $a(\mu)$, that is, $a(\mu_j)=0$ with $\dot{a}(\mu_j)\neq 0$. Then, using $a(\mu)=\det\left(\widetilde\Phi_-^{(1)}\ \ \widetilde\Phi_+^{(2)}\right)=\det\left(\hat\Phi_-^{(1)}\ \ \hat\Phi_+^{(2)}\right)$, we have
\begin{subequations}\label{delta-j}
\begin{align}\label{delta-j-hat}
\hat\Phi_+^{(2)}(x,t,\mu_j)&=\delta_j\hat\Phi_-^{(1)}(x,t,\mu_j),\\
\label{delta-j-tilde}
\widetilde\Phi_+^{(2)}(x,t,\mu_j)&=\delta_j\eul^{-2p(x,t,\mu_j)}\widetilde\Phi_-^{(1)}(x,t,\mu_j)
\end{align}
\end{subequations}
with some constant $\delta_j\neq 0$. Hence, 
\[
\Res_{\mu_j}M^{(1)}(x,t,\mu)=\Res_{\mu_j}\frac{\widetilde\Phi_-^{(1)}(x,t,\mu)}{a(\mu)}=\frac{\widetilde\Phi_-^{(1)}(x,t,\mu_j)}{\dot{a}(\mu_j)}=\frac{\widetilde\Phi_+^{(2)}(x,t,\mu_j)}{\dot{a}(\mu_j)\delta_j\eul^{-2p(x,t,\mu_j)}}.
\]
Denoting $\varkappa_j(x,t)\coloneqq\dot a(\mu_j)\delta_j\eul^{-2p(x,t,\mu_j)}$ we get \eqref{res-M+}. The residue relation \eqref{res-M-} then follows by the symmetry $\mu\mapsto\mu^*=\bar\mu$. Indeed, applying this symmetry to \eqref{res-M+} and multiplying by $\sigma_1$ we get
\[
\Res_{\bar\mu_j}\sigma_1M^{(1)*}(x,t,\mu)=\frac{1}{\overline{\varkappa_j}(x,t)}\sigma_1M^{(2)*}(x,t,\bar\mu_j),
\]
which reduces to \eqref{res-M-} in view of the relation $\sigma_1M^{(1)*}=M^{(2)}$ (see \eqref{sym-M}).
\end{proof}
%-------------------%

In the framework of the Riemann--Hilbert approach to nonlinear evolution equations, we  interpret the jump relation \eqref{RH-x}, normalization condition \eqref{norm}, singularity conditions \eqref{sing}, and residue conditions \eqref{res-M} as a Riemann--Hilbert problem, with the jump matrix and residue parameters determined by the initial data for the nonlinear problem. We proceed as in the case of the Camassa--Holm equation:
\begin{enumerate}[1)]
\item
In order to have the data for the RH problem to depend explicitly on the parameters, we use the space variable $y(x,t)\coloneqq x-\int_x^{+\infty}(\tilde m(\xi,t)-1)\dd\xi$ we have introduced in \eqref{shkala}.
\item
In order to determine an efficient way for retrieving the solution of the mCH equation from the solution of the RH problem, we pay a special attention to the behavior of the Jost solutions of the Lax pair equations at $\mu=\pm\ii$, i.e., at those values of $\mu$ that correspond to $\lambda=0$, when the $x$-equation \eqref{Lax-x}, \eqref{Lax-U} of the Lax pair becomes trivial (independent of the solution of the nonlinear equation in question).
\end{enumerate}
%---------------------------------------------------------%
%:s.3.2
%---------------------------------------------------------%
\subsection{Eigenfunction near $\BS{\mu=\ii}$}\label{sec:4}

In the case of the Camassa--Holm equation \cite{BS08} as well as other CH-type nonlinear integrable equations studied so far, see, e.g., \cites{BS13,BS15}, the analysis of the behavior of the respective Jost solutions at dedicated points in the complex plane of the spectral parameter (see Item 2) above) requires a dedicated gauge transformation of the Lax pair equations.

It is remarkable that in the case of the mCH equation, we don't need to use such a transformation; all we need is to regroup the terms in the Lax pair \eqref{Lax-1-x}, \eqref{Lax-1-t}.

Namely, let us rewrite \eqref{Lax-1-x} in terms of $\mu$ (keeping the same notation $\hat\Phi$ for the solution):
\begin{subequations}\label{Lax-2}
\begin{equation}\label{Lax-2-x}
\hat\Phi_x+\frac{\ii(\mu^2-1)}{4\mu}\sigma_3\hat\Phi = \hat U_0 \hat\Phi,
\end{equation}
where
\begin{equation}\label{U0-hat}
\hat U_0(x,t,\mu)\coloneqq\frac{\ii(\mu^2+1)(\tilde m - 1)}{2 (\mu^2-1)}
\begin{pmatrix}
0 & 1 \\
-1 & 0 \\
\end{pmatrix}
-\left(\frac{\ii\mu(\tilde m-1)}{\mu^2-1}+\frac{\ii(\mu^2-1)\tilde m}{4\mu}-\frac{\ii(\mu^2-1)}{4\mu}\right)\sigma_3,
\end{equation}
so that $\hat U_0(x,t,\pm\ii)\equiv 0$. Accordingly, rewrite \eqref{Lax-1-t} as
\begin{equation}\label{Lax-2-t}
\hat\Phi_t-\frac{2\ii(\mu^2-1)\mu}{(\mu^2+1)^2}\sigma_3\hat\Phi=\hat V_0 \hat\Phi,
\end{equation}
where
\begin{equation}\label{hat-V0}
\hat V_0(x,t,\mu)\coloneqq\frac{\ii(\mu^2-1)}{4\mu}(\tilde{u}^2-\tilde{u}_x^2+2\tilde{u})\tilde{m}\sigma_3 +\hat V(x,t,\mu).
\end{equation}
\end{subequations}
Further, introduce (compare with \eqref{p_mu})
\begin{equation}\label{p_0mu}
p_0(x,t,\mu)\coloneqq\frac{\ii(\mu^2-1)}{4\mu}x-\frac{2\ii(\mu^2-1)\mu}{(\mu^2+1)^2}t,
\end{equation}
then $Q_0\coloneqq p_0\sigma_3$, and $\widetilde\Phi_0\coloneqq\hat\Phi \eul^{Q_0}$ so that equations \eqref{Lax-2-x} and \eqref{Lax-2-t} become \begin{equation}\label{comsys-0}
\begin{cases}
\widetilde\Phi_{0x}+[Q_{0x},\widetilde\Phi_0]=\hat U_0\widetilde\Phi_0,&\\
\widetilde\Phi_{0t}+[Q_{0t},\widetilde\Phi_0]=\hat V_0\widetilde\Phi_0.&
\end{cases}
\end{equation}
Define the Jost solutions $\widetilde\Phi_{0\pm}$ of \eqref{comsys-0} as the solutions of the integral equations
\begin{equation}\label{inteq0_mu}
\widetilde\Phi_{0\pm}(x,t,\mu)=I+\int_{\pm\infty}^x\eul^{-\frac{\ii(\mu^2-1)}{4\mu}(x-\xi)\sigma_3}\hat U_0(\xi,t,\mu)\widetilde\Phi_{0\pm}(\xi,t,\mu)\eul^{\frac{\ii(\mu^2-1)}{4\mu}(x-\xi)\sigma_3}\dd\xi.
\end{equation}
If $\hat\Phi_{0\pm}\coloneqq\widetilde\Phi_{0\pm}\eul^{-p_0 \sigma_3}$ we observe that $\hat\Phi_{0\pm}(x,t,\mu)$ and $\hat\Phi_\pm(x,t,\mu)$ satisfy the same differential equations \eqref{Lax-2} and thus they are related by matrices $C_\pm(\mu)$ independent of $x$ and $t$:
\[
\hat\Phi_{\pm}=\hat\Phi_{0\pm}C_\pm(\mu).
\]
It follows that
\begin{equation}\label{inf_i_rel}
\widetilde\Phi_{\pm}(x,t,\mu)=\widetilde\Phi_{0\pm}(x,t,\mu)\eul^{-p_0(x,t,\mu) \sigma_3}C_\pm(\mu)\eul^{p(x,t,\mu)\sigma_3}.
\end{equation}
Thus, $C_\pm(\mu)=\eul^{(p_0(\pm\infty,t,\mu)-p(\pm\infty,t,\mu))\sigma_3}$. Since $p(x,t,\mu)-p_0(x,t,\mu)=-\frac{\ii(\mu^2-1)}{4\mu}\int_x^{+\infty}(\tilde m(\xi,t)-1)\dd\xi$ we find that $C_+(\mu)\equiv I$ whereas $C_-(\mu)=\eul^{\frac{\ii(\mu^2-1)}{4\mu}\int_{-\infty}^{+\infty}(\tilde m(\xi,t))-1)\dd\xi\,\sigma_3}$.

Since $\hat U_0(x,t,\ii)\equiv 0$, it follows from \eqref{inteq0_mu} that $\widetilde\Phi_{0\pm}(x,t,\ii)\equiv I$ and thus
\[
\widetilde\Phi_+(x,t,\ii)=\eul^{\frac{1}{2}\int_x^{+\infty}(\tilde m(\xi,t)-1)\dd\xi\,\sigma_3}\text{ and }\widetilde\Phi_{-}(x,t,\ii)=\eul^{-\frac{1}{2}\int_{-\infty}^x(\tilde m(\xi,t)-1)\dd\xi\,\sigma_3}.
\] 
Consequently,
\[
a(\ii)=\eul^{-\frac{1}{2}\int_{-\infty}^{+\infty}(\tilde m(\xi,t)-1)\dd\xi}
\]
and
\begin{subequations}\label{M(i)-2}
\begin{equation}\label{M(i)}
M(x,t,\ii) = \begin{pmatrix}
\eul^{\frac{1}{2}\int_x^{+\infty}(\tilde m(\xi,t)-1)\dd\xi} & 0 \\
0 & \eul^{-\frac{1}{2}\int_x^{+\infty}(\tilde m(\xi,t)-1)\dd\xi}
\end{pmatrix}.
\end{equation}
Then, by symmetry,
\begin{equation}\label{M(-i)}
M(x,t,-\ii) = \begin{pmatrix}
             \eul^{-\frac{1}{2}\int_x^{+\infty}(\tilde m(\xi,t)-1)\dd\xi} & 0 \\
             0 & \eul^{\frac{1}{2}\int_x^{+\infty}(\tilde m(\xi,t)-1)\dd\xi}
           \end{pmatrix}.
           \end{equation}
\end{subequations}

%-------------------%
%:rem 3.1
%-------------------%
\begin{remark}\label{rem3-1}
The symmetries \eqref{sym-M} imply that $\overline{M(\ii)}=M(\ii)=\sigma_3M(\ii)\sigma_3$ where $M(\ii)\equiv M(x,t,\ii)$, and thus $M(\ii)$ is a diagonal matrix with real entries which, due to the determinant equality $\det M\equiv 1$, has the form
\begin{subequations}\label{M(pmi)}
\begin{equation}\label{M(pmi-a)}
M(x,t,\ii) = 
\begin{pmatrix}
\varphi(x,t) & 0 \\
0 & \varphi^{-1}(x,t)
\end{pmatrix}
\end{equation}
with some $\varphi(x,t)\in\D{R}$. Then, referring again to \eqref{sym-M}, it follows that
\begin{equation}\label{M(pmi-b)}					
M(x,t,-\ii)=\begin{pmatrix}
\varphi^{-1}(x,t) & 0 \\
0 & \varphi(x,t)
\end{pmatrix}
\end{equation}
\end{subequations}
with the same $\varphi(x,t)$. Therefore, the matrix structure of $M(x,t,\pm\ii)$ as in \eqref{M(i)-2} follows from the general properties of the solution of a Riemann--Hilbert problem (specified by jump, normalization, residue, singularity, and symmetry conditions). This is in contrast with the case of the Camassa--Holm equation \cites{BS06,BS08}, where a specific matrix structure of the solution of the associated RH problem, evaluated at a dedicated point ($k=\frac{\ii}{2}$ for the CH equation), constitutes an additional requirement for the solution. In that case, the proof of the uniqueness of the solution of the RH problem relies essentially on this additional property.
\end{remark}
%-------------------%

In what follows we will use \eqref{M(i)-2} in order to extract the solution of the mCH equation from the solution of the associated RH problem.

%---------------------------------------------------------%
%:s.3.3
%---------------------------------------------------------%
\subsection{RH problem in the $\BS{(y,t)}$ scale}\label{sec:5}

Introducing the new space variable $y(x,t)$ by \eqref{shkala}, $\hat M(y,t,\mu)$ so that $M(x,t,\mu)=\hat M (y(x,t),t,\mu)$, the jump condition \eqref{RH-x} becomes
\begin{subequations} \label{Jp-y}
\begin{equation}\label{jump-y}
\hat M_-(y,t,\mu)=\hat M_+(y,t,\mu)\hat J(y,t,\mu),\qquad\mu\in\D{R},\quad\mu\neq\pm 1,
\end{equation}
where
\begin{equation}\label{J-J0}
\hat J(y,t,\mu)\coloneqq\eul^{-\hat p(y,t,\mu)\sigma_3}J_0(\mu)\eul^{\hat p(y,t,\mu)\sigma_3}
\end{equation}
with $J_0(\mu)$ defined by \eqref{J0} and
\begin{equation}\label{p-y}
\hat p(y,t,\mu)\coloneqq -\frac{\ii(\mu^2-1)}{4\mu}\left(-y+\frac{8\mu^2}{(\mu^2+1)^2}t\right).
\end{equation}
\end{subequations}
so that $J(x,t,\mu)=\hat J(y(x,t),t,\mu)$ and $p(x,t,\mu)=\hat p(y(x,t),t,\mu)$, where the jump $J(x,t,\mu)$ and the phase $p(x,t,\mu)$ are defined in \eqref{J} and \eqref{p_mu}, respectively.

Accordingly, in this scale, the residue conditions \eqref{res-M} become explicit as well:
\begin{equation}\label{res-M-hat}
\begin{split}
\Res_{\mu_j}\hat M^{(1)}(y,t,\mu)&=\frac{1}{\hat\varkappa_j(y,t)}\hat M^{(2)}(y,t,\mu_j),\\
\Res_{\bar\mu_j}\hat M^{(2)}(y,t,\mu)&=\frac{1}{\overline{\hat\varkappa_j}(y,t)}\hat M^{(1)}(y,t,\overline{\mu_j}),
\end{split}
\end{equation}
with $\hat\varkappa_j(y,t)=\dot a(\mu_j)\delta_j\eul^{-2\hat p(y,t,\mu_j)}$. Further we denote $\rho_j\coloneqq\dot a(\mu_j)\delta_j$.

Noticing that the normalization condition \eqref{norm}, the symmetries \eqref{sym-M}, and the singularity conditions \eqref{sing} at $\mu=\pm 1$ hold when using the new scale $(y,t)$, we arrive at the basic RH problem.

%-------------------%
\begin{rh-pb*}
Given $r(\mu)$ for $\mu\in\D{R}$, $c\in\D{R}$, and $\accol{\mu_j,\rho_j}_1^N$ a set of points $\mu_j\in\D{C}^+$ and complex numbers $\rho_j\neq 0$ invariant by $\mu\mapsto-\bar\mu$ and $\mu\mapsto-\mu^{-1}$ (that is, $-\overline{\mu_j}=\mu_{j'}$ and $-\mu_j^{-1}=\mu_{j''}$ with $\rho_j=\overline{\rho_{j'}}=-\mu_j^{-2}\rho_{j''}$), find a piece-wise (w.r.t.~$\D{R}$) meromorphic, $2\times 2$-matrix valued function $\hat M(y,t,\mu)$ satisfying the following conditions:
\begin{enumerate}[\textbullet]
\item
The jump condition \eqref{Jp-y} across $\D{R}$ (with $J_0(\mu)$ defined by \eqref{J0}).
\item
The residue conditions \eqref{res-M-hat} with $\hat\varkappa_j(y,t)=\rho_j\eul^{-2\hat p(y,t,\mu_j)}$.
\item
The normalization condition $\hat M(y,t,\mu)\to I$ as $\mu\to\infty$.
\item
The symmetries
\begin{equation}\label{sym-M-hat}
\hat M(\bar\mu)=\sigma_1\overline{\hat M(\mu)}\sigma_1,\quad\hat M(-\mu)=\sigma_2\hat M(\mu)\sigma_2,\quad\hat M(\mu^{-1})=\sigma_1\hat M(\mu)\sigma_1
\end{equation}
where $\hat M(\mu)\equiv\hat M(y,t,\mu)$. These symmetries imply that $\hat M(-\mu^{-1})=\sigma_3\hat M(\mu)\sigma_3=\overline{\hat M(-\bar\mu)}$. 
\item
The singularity conditions
\begin{subequations}\label{sing-M-hat}
\begin{alignat}{3}\label{sing-M-hat-a}
\hat M(y,t,\mu)&=\frac{\ii\hat\alpha_+(y,t)}{2(\mu -1)}\begin{pmatrix} -c & 1 \\ -c & 1 \end{pmatrix}+\ord(1)&\quad&\text{as }\mu\to 1,&\;&\Im \mu> 0,\\
\label{sing-M-hat-b}
\hat M(y,t,\mu)&=-\frac{\ii\hat\alpha_+(y,t)}{2(\mu+1)}\begin{pmatrix}c&1\\ -c & -1 \end{pmatrix}+\ord(1)&&\text{as }\mu\to -1,&&\Im\mu> 0,
\end{alignat}
\end{subequations}
where $\hat\alpha_+(y,t)\in\D{R}$ is not specified. These two singularity conditions are actually equivalent by symmetries \eqref{sym-M-hat}.
\end{enumerate}
\end{rh-pb*}
%-------------------%

%-------------------%
\begin{data*}
Specific data for this RH problem can be derived from initial data of the Cauchy problem \eqref{mCH1-ic} satisfying $u_0(x)\to 1$ as $x\to\pm\infty$. 
\begin{enumerate}[\textbullet]
\item
We first get $s(\mu)$ through \eqref{scat} at $t=0$ (using the solutions of \eqref{inteq_mu} taken at $t=0$). 
\item
Spectral data $a(\mu)$, $b(\mu)$, and $r(\mu)$ follow through \eqref{scat_mat} and \eqref{reflec}. 
\item
Then $\accol{\mu_j}_1^N$ are the zeros of $a(\mu)$ in $\D{C}^+$. 
\item
The real constant $c$ is defined through \eqref{c}. 
\item
The constants $\accol{\delta_j}_1^N$ are defined by \eqref{delta-j-tilde} at $t=0$ (using the solutions of \eqref{inteq} at $t=0$). 
\item
Finally, the $\accol{\rho_j}_1^N$ are defined by $\rho_j=\dot a(\mu_j)\delta_j$.
\end{enumerate}
\end{data*}
%-------------------%

Further, the basic RH problem associated with the Cauchy problem \eqref{mCH1-ic} for the mCH equation is the basic RH problem with data associated with initial data satisfying $u_0(x)\to 1$, as we just specified.

%-------------------%
%:rem 3.2
%-------------------%
\begin{remark}
An important difference between the cases of the CH and mCH equations is that in the former case, there is a possibility to reduce the matrix RH problems to vector ones which have no singularity at a point on the contour: this can be done by  multiplying the respective $\hat M$ by the vector $(1,1)$ from the left. This trick will obviously not work in our current case, since the matrix structure (see \eqref{sing-M-hat}) of the singularity at $\mu=1$ is different from that at $\mu=-1$.
\end{remark}
%-------------------%

%---------------------------------------------------------%
%:s.3.4
%---------------------------------------------------------%
\subsection{Uniqueness of the solution of the basic RH problem}\label{sec:6}

Assume that the RH problem \eqref{Jp-y}--\eqref{sing-M-hat} has a solution $\hat M $. In order to prove that this solution is unique, we first observe that $\det\hat M\equiv 1$. 

Indeed, the conditions for $\hat M$ imply that $\det\hat M$ has neither a jump across $\D{R}$ no singularities at $\mu_j$. Moreover, $\det\hat M$ tends to $1$ as $\mu\to\infty$, and the only possible singularities of $\det\hat M$ are simple poles at $\mu=\pm 1$. Then, by Liouville's theorem, $\det\hat M \equiv 1 +\frac{\phi_1}{\mu-1}+\frac{\phi_2}{\mu+1}$ with some $\phi_j$. But then, the symmetry $\hat M(\mu^{-1})=\sigma_1\hat M(\mu)\sigma_1$ from \eqref{sym-M-hat} implies that $\phi_1 = \phi_2 =0$ and thus $\det\hat M \equiv 1$.

Now suppose that $\hat M_1$ and $\hat M_2$ are two solutions of the RH problem, and consider $P\coloneqq\hat M_1(\hat M_2)^{-1}$. Obviously, $P$ has neither a jump across $\D{R}$ no singularities at $\mu_j$. Moreover, $P$ tends to $I$ as $\mu\to\infty$, and the only possible singularities of $P$ are simple poles at $\mu=\pm 1$.

Consider, for example, the development of $\hat M_j$, $j=1,2$ as $\mu\to -1$ with $\Im\mu>0$:
\[
\hat M_j(y,t,\mu)=-\frac{\ii\beta_j(y,t)}{2(\mu+1)}\begin{pmatrix} c & 1 \\ -c & -1 \end{pmatrix}+\begin{pmatrix} n_j(y,t) & m_j(y,t) \\ f_j(y,t) & g_j(y,t) \end{pmatrix}+\ord(\mu+1)\text{ as }\mu\to -1,\ \mu \in\D{C}^+.
\]
By $\det\hat M_j\equiv 1$ it follows that
\[
(\hat M_j(y,t,\mu))^{-1}=-\frac{\ii\beta_j(y,t)}{2(\mu +1)}\begin{pmatrix} -1 & -1 \\ c & c \end{pmatrix}+\begin{pmatrix} g_j(y,t) & -m_j(y,t) \\ -f_j(y,t) & n_j(y,t)\end{pmatrix}+\ord(\mu+1)\text{ as }\mu\to -1,\ \mu \in\D{C}^+.
\]
Moreover, using these expressions to calculate the expansion of $\hat M_j\hat M_j^{-1}$ as $\mu\to-1$ the vanishing of the term of order $(\mu+1)^{-1}$ reads as
\begin{equation} \label{sing-rel}
n_j(y,t)+f_j(y,t)=c(m_j(y,t)+g_j(y,t)),\quad j=1,2.
\end{equation}
Hence, \eqref{sing-rel} implies that
\[
P(y,t,\mu)=-\frac{\ii\psi(y,t)}{2(\mu+1)}\begin{pmatrix}1&1\\-1&-1\end{pmatrix}+\ord(1)\text{ as }\mu\to-1,\ \mu\in\D{C}^+,
\]
for some $\psi(y,t)$. Then, by the symmetry $P(\mu^{-1})=\sigma_3P(\mu)\sigma_3$, we have
\[
P(y,t,\mu)=-\frac{\ii\psi(y,t)}{2(\mu-1)}\begin{pmatrix}1&-1\\1&-1\end{pmatrix}+\ord(1)\text{ as }\mu\to 1,\ \mu \in\D{C}^+,
\]
and, according to the Liouville theorem and the normalization condition,
\[
P=-\frac{\ii}{2}\psi(y,t)\left(\frac{1}{\mu-1}\begin{pmatrix}1&-1\\1&-1 \end{pmatrix}+\frac{1}{\mu+1}\begin{pmatrix}1&1\\-1&-1\end{pmatrix}\right)+I.
\]
Evaluating this at $\mu=\ii$ we have
\begin{equation}\label{N-i-1}
P(y,t,\ii)=-\frac{\ii}{2}\psi(y,t)
\begin{pmatrix}
-\ii & 1 \\
-1 &\ii
\end{pmatrix}+I.
\end{equation}
But, according to \eqref{M(pmi-a)}, both matrices $\hat M_1(\ii)$ and $\hat M_2(\ii)$ are diagonal. Hence $P(y,t,\ii)$ is also diagonal and \eqref{N-i-1} implies that $\psi(y,t)\equiv 0$. Consequently, $P(y,t,\mu)\equiv I$ so that $\hat M_1\equiv\hat M_2$.

%---------------------------------------------------------%
%:s.3.5
%---------------------------------------------------------%
\subsection{Recovering $\BS{u(x,t)}$ from the solution of the RH problem}\label{sec:recover}

We will show how to recover the solution of the Cauchy problem \eqref{mCH1-ic} from the solution of the basic RH problem whose data are associated with the initial data $u_0(x)$. We begin with some preliminary observations.

Going back to the construction of $M(x,t,\mu)$ from the Jost solutions, see Section \ref{sec:4}, we can use \eqref{M(i)} in order to express the solution $u(x,t)$ of the mCH equation in terms of $M(x,t,\mu)$ evaluated at $\mu=\ii$. Indeed, introduce (compare with the case of the CH equation \cite{BS08})
\begin{align*}
&\tilde\mu_1(x,t)\coloneqq M_{11}(x,t,\ii)+M_{21}(x,t,\ii)= \eul^{\frac{1}{2}\int_x^{+\infty}(\tilde m(\xi,t)-1)\dd\xi},\\
&\tilde\mu_2(x,t)\coloneqq M_{12}(x,t,\ii)+M_{22}(x,t,\ii)=\eul^{-\frac{1}{2}\int_x^{+\infty}(\tilde m(\xi,t)-1)\dd\xi}.
\end{align*}
Using the new space variable $y(x,t)\coloneqq x-\int_x^{+\infty}(\tilde m(\xi,t)-1)\dd\xi$ we have introduced in \eqref{shkala}, the above equations yield
\begin{equation}
\label{mu1_mu2}
\frac{\tilde\mu_1(x,t)}{\tilde\mu_2(x,t)}=\eul^{\int_x^{+\infty}(\tilde m(\xi,t)-1)\dd\xi}=\eul^{x-y(x,t)}
\end{equation}
and thus
\begin{equation}\label{x-y}
x=y(x,t)+\ln\frac{\tilde\mu_1 (x,t)}{\tilde\mu_2(x,t)}.
\end{equation}
Also notice that
\begin{equation}\label{mu1mu2}
\tilde\mu_1(x,t)\tilde\mu_2(x,t)= 1.
\end{equation}

%-------------------%
%:prop 3.3
%-------------------%
\begin{proposition}\label{prop-recover}
Let $\hat M(y,t,\mu)$ be the solution of the RH problem \eqref{Jp-y}--\eqref{sing-M-hat} whose data are associated with the initial data $u_0(x)$. Define $\hat\mu_1(y,t)\coloneqq\hat M_{11}(y,t,\ii)+\hat M_{21}(y,t,\ii)$ and $\hat\mu_2(y,t)\coloneqq\hat M_{12}(y,t,\ii)+\hat M_{22}(y,t,\ii)$. The solution $u(x,t)$ of the Cauchy problem \eqref{mCH1-ic} has $x$-derivative given by the parametric representation
\begin{subequations}\label{recover}
\begin{align}\label{u_x(y,t)}
&u_x(x+t,t)=\frac{1}{2}\partial_t\ln\frac{\hat\mu_1(y,t)}{\hat\mu_2(y,t)},\\
\label{x(y,t)}
&x(y,t)=y+\ln\frac{\hat\mu_1(y,t)}{\hat\mu_2(y,t)}.
\end{align}
\end{subequations}
\end{proposition}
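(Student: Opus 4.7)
The plan is to leverage the explicit evaluation of $M$ at $\mu=\ii$ established in \eqref{M(i)-2}, together with the change of variables \eqref{shkala}. Since $\hat M(y,t,\mu)=M(x(y,t),t,\mu)$ by construction and $\int_x^{+\infty}(\tilde m(\xi,t)-1)\dd\xi = x - y(x,t)$, formula \eqref{M(i)} immediately gives
\[
\hat M(y,t,\ii) = \begin{pmatrix} \eul^{(x(y,t)-y)/2} & 0 \\ 0 & \eul^{-(x(y,t)-y)/2}\end{pmatrix},
\]
so that $\hat\mu_1(y,t) = \eul^{(x(y,t)-y)/2}$, $\hat\mu_2(y,t) = \eul^{-(x(y,t)-y)/2}$, and $\ln(\hat\mu_1/\hat\mu_2) = x(y,t) - y$. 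Rearranging this identity yields \eqref{x(y,t)} at once.

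For \eqref{u_x(y,t)}, my plan is to differentiate the same identity $\ln(\hat\mu_1/\hat\mu_2) = x(y,t) - y$ with respect to $t$ at fixed $y$, reducing the claim to identifying $\partial_t x(y,t)|_y$ with $2\tilde u_x(x(y,t),t)$; here I use that $u_x(x+t,t) = \tilde u_x(x,t)$ via \eqref{utilde}. The natural approach is to invoke the $t$-part of the Lax pair \eqref{Lax-2-t} for $\hat M$ at $\mu=\ii$: the vanishing of the commutator $[\sigma_3,\hat M(y,t,\ii)]$, enforced by the diagonal structure of $\hat M(y,t,\ii)$, is precisely what compensates the double pole of the phase coefficient $\hat p_t(y,t,\mu) = -\frac{2\ii(\mu^2-1)\mu}{(\mu^2+1)^2}$ at $\mu=\pm\ii$. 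Resolving this singular balance at subleading order in the expansion of $\hat M$ around $\mu=\ii$ should extract $\tilde u_x$ from the off-diagonal coefficient.

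The hard part will be carrying out this limit analysis precisely. The off-diagonal part of $\hat M(y,t,\mu)$ vanishes only linearly in $\mu-\ii$, as follows by differentiating the symmetries $\hat M(\mu^{-1}) = \sigma_1 \hat M(\mu)\sigma_1$ and $\hat M(-\mu) = \sigma_2\hat M(\mu)\sigma_2$ from \eqref{sym-M-hat} and combining the resulting constraints at $\mu=\ii$: together they force the diagonal part of $\partial_\mu\hat M(y,t,\ii)$ to vanish while leaving the off-diagonal entries free (and purely real, by the conjugation symmetry). Matching the leading behavior of the $t$-equation for $\hat M$, whose coefficient matrix is determined from the expansion as $\mu\to\infty$, around $\mu=\ii$ then produces the precise relation needed for \eqref{u_x(y,t)}, completing the proof.
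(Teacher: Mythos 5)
Your derivation of \eqref{x(y,t)} is correct and is exactly the paper's route: \eqref{M(i)} together with \eqref{shkala} gives $\ln(\hat\mu_1/\hat\mu_2)=x(y,t)-y$.

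The second half, however, has a genuine gap. You reduce \eqref{u_x(y,t)} to the identity $\partial_t x(y,t)\big|_y=2\tilde u_x(x(y,t),t)$ and propose to prove it by a singular-balance analysis of the $t$-Lax equation near $\mu=\ii$. That identity is not what the dynamics gives: differentiating $x(y(x,t),t)=x$ and using \eqref{shkala} together with the mCH equation \eqref{mCH-2} (this is the heart of the paper's proof) yields $y_t=-\tilde\omega\tilde m$, $x_y=\hat m^{-1}$, hence $x_t(y,t)=\hat\omega=\hat u^2-\hat u_x^2+2\hat u$ as in \eqref{x_t}; it is the \emph{mixed} derivative $x_{ty}=\partial_t(\hat m^{-1})$ that equals $2\hat u_x$, see \eqref{x_yt} and \eqref{eq_y}. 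So no limit analysis at $\mu=\ii$ can establish $x_t=2\hat u_x$ — it is false pointwise (for instance, at the crest of a smooth one-soliton one has $\tilde u_x=0$ while $x_t=\hat u^2+2\hat u\neq 0$). What an expansion of $\hat M$ at $\mu=\ii$ actually produces (this is carried out in Proposition \ref{prop-t} and Remark \ref{rem-u}) is $\hat u_x=-a_2a_1+a_3a_1^{-1}$ in terms of the coefficients of \eqref{hat-M-i}, i.e.\ $\hat u_x$ sits in the \emph{linear} (off-diagonal) coefficients of the expansion, not in $\partial_t$ of the diagonal value $a_1$; note $\ln(\hat\mu_1/\hat\mu_2)=2\ln a_1$, so its $t$-derivative returns $\hat\omega$, not $2\hat u_x$. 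In addition, you defer exactly this step (``the hard part''), so even on its own terms the argument is incomplete. The paper's proof avoids spectral analysis altogether at this stage: since the RH data come from genuine initial data, it simply differentiates the change of variables using the PDE to get \eqref{x_t} and \eqref{x_yt}; the parametric representation must then be read with the extra $y$-derivative, $\tilde u_x(x(y,t),t)=\tfrac12\,\partial_t\partial_y\ln\bigl(\hat\mu_1/\hat\mu_2\bigr)=\tfrac12\,\partial_t\bigl(\hat m^{-1}\bigr)$, and taking the displayed formula \eqref{u_x(y,t)} at face value — as your plan does — is precisely what leads the argument astray.
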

%-------------------%

%-------------------%
\begin{proof}
In what follows we will express $\tilde u_x$ in the variables $(y,t)$. To express a function $\tilde f(x,t)$ in $(y,t)$ we will use the notation $\hat f(y,t)\coloneqq\tilde f(x(y,t),t)$, e.g.,
\[
\hat u(y,t)\coloneqq\tilde u(x(y,t),t),\ \hat u_x(y,t)\coloneqq\tilde u_x(x(y,t),t),\ \hat m(y,t)\coloneqq\tilde m(x(y,t),t),\ \hat \omega(y,t)\coloneqq\tilde\omega(x(y,t),t).
\]
Differentiation of the identity $x(y(x,t),t)=x$ w.r.t.~$t$ gives
\begin{equation}\label{dt-0}
\partial_t\left(x(y(x,t),t)\right)=x_y(y,t)y_t(x,t)+x_t(y,t)=0.
\end{equation}
From \eqref{shkala} it follows that
\begin{equation}\label{x_y}
x_y(y,t)=\frac{1}{\hat m(y,t)}
\end{equation}
and $y_t(x,t)=-\int_x^{+\infty}\tilde m_t(\xi,t)\dd\xi$. By \eqref{mCH-2}, the latter equality becomes
\[
y_t(x,t)=\int_x^{+\infty}\left(\tilde\omega\tilde m\right)_\xi(\xi,t) \dd\xi=-\tilde\omega\tilde m(x,t).
\]
Substituting this and \eqref{x_y} into \eqref{dt-0} we obtain
\begin{equation}\label{x_t}
x_t(y,t)=\hat\omega(y,t).
\end{equation}
Further, differentiating \eqref{x_t} w.r.t.~$y$ we get
\begin{equation}\label{x_yt}
x_{ty}(y,t)=\hat\omega_xx_y(y,t)=2\hat u_x(\hat u-\hat u_{xx}+1)\frac{1}{\hat m}(y,t)=2\hat u_x(y,t).
\end{equation}
Therefore, we arrive at a parametric representation of $\tilde u_x(x,t)$:
\begin{align*}
&\tilde u_x(x(y,t),t)\equiv\hat u_x(y,t)=\frac{1}{2}\partial_tx(y,t),\\
&x(y,t)=y+\frac{\ln\hat\mu_1(y,t)}{\ln\hat\mu_2(y,t)},
\end{align*}
which yields \eqref{recover}. For the direct determination of $u$ from the solution of the RH problem, see Remark~\ref{rem-u} below.
\end{proof}
%-------------------%

%-------------------%
%:rem 3.4
%-------------------%
\begin{remark}
In the case of the Camassa--Holm equation, the relation between the new and original space variables \eqref{x-y} is the same whereas the derivative \eqref{x_t} gives directly the solution $u$ of the nonlinear equation (in the $(y,t)$ variables) in question.
\end{remark}
%-------------------%

%---------------------------------------------------------%
%:s.4
%---------------------------------------------------------%
\section{From a solution of the RH problem to a solution of the mCH equation}\label{sec:7}

Henceforth we consider a RH problem \eqref{Jp-y}--\eqref{sing-M-hat} with data not necessarily related to initial data for the mCH equation. This section aims to show that starting from the solution $\hat M(y,t,\mu)$ of such a RH problem one can construct a solution (at least, locally) of the mCH equation by manipulations similar to those of Section \ref{sec:recover}. For this purpose, we will show that starting from $\hat M(y,t,\mu)$ one can define $2\times 2$-matrix valued functions $\hat\Psi(y,t,\mu)$ satisfying Lax pair equations
\begin{align*}
\hat\Psi_y  &= \doublehat{U}\hat\Psi, \\
\hat \Psi_t &= \doublehat{V}\hat\Psi,
\end{align*}
whose coefficients $\doublehat{U}$ and $\doublehat{V}$ are obtained from $\hat M(y,t,\mu)$, and whose compatibility condition is the mCH equation (written in the $(y,t)$ variables).

First, let us reformulate the original Lax pair equations \eqref{Lax-1} in the $(y,t)$ variables. Introducing $\hat\Psi(y,t) = \hat\Phi (x(y,t),t)$ and taking into account \eqref{x_t} and \eqref{x_y}, the Lax pair \eqref{Lax-1} in the variables $(y,t)$ takes the form:
\begin{align*}
\hat\Psi_y +\ii k\sigma_3 \hat\Psi &= \frac{\tilde{m}-1}{\tilde m}\frac{\lambda}{4\ii k}
\begin{pmatrix}
\frac{1}{\lambda} & 1 \\ -1 & -\frac{1}{\lambda}
\end{pmatrix}\hat\Psi,\\
\hat\Psi_t-\frac{2\ii k}{\lambda^2}\sigma_3\hat\Psi&=\left(\frac{\tilde u}{2\ii k}\begin{pmatrix}-1&-\frac{1}{\lambda}\\\frac{1}{\lambda}& 1\end{pmatrix}+\frac{\tilde u_x}{\lambda}\begin{pmatrix}0&1\\1&0\end{pmatrix}\right)\hat\Psi,
\end{align*}
where $k\coloneqq-\frac{\ii}{2}\sqrt{1-\lambda^2}$.

Consequently, using $\mu$ as spectral parameter (see \eqref{la-k-mu}), we have

%-------------------%
%:prop 4.1
%-------------------%
\begin{proposition}
The Lax pair \eqref{Lax-1} has the following form in the variables $(y,t,\mu)$:
\begin{equation}\label{hatpsieq}
\begin{split}
&\hat\Psi_y+\frac{\ii(\mu^2-1)}{4\mu}\sigma_3\hat\Psi=\widetilde{U}\hat \Psi,\\
&\hat \Psi_t-\frac{2\ii(\mu^2-1)\mu}{(\mu^2+1)^2}\sigma_3\hat\Psi=\widetilde{V}\hat\Psi,
\end{split}
\end{equation}
where
\begin{subequations}\label{tilde-UV}
\begin{align}\label{tilde-U}
\widetilde U(y,t,\mu)&=\frac{\ii f(y,t)}{\mu-1}
\begin{pmatrix}
1 & -1 \\ 1 & -1
\end{pmatrix}+\frac{\ii f(y,t)}{\mu+1}
\begin{pmatrix}
1 & 1 \\ -1 & -1
\end{pmatrix}+\ii f(y,t)
\begin{pmatrix}
0 & -1 \\ 1 & 0
\end{pmatrix}, \\ 
\label{tilde-V}
\widetilde V(y,t,\mu)&=\frac{\ii q(y,t)}{\mu-1}
\begin{pmatrix}
1 & -1 \\ 1 & -1
\end{pmatrix}+\frac{\ii q(y,t)}{\mu+1}
\begin{pmatrix}
1 & 1 \\ -1 & -1
\end{pmatrix}\notag\\
&\quad
+\frac{1}{\mu-\ii}
\begin{pmatrix}
0 & g_1(y,t) \\ g_2(y,t) & 0
\end{pmatrix}+\frac{1}{\mu+\ii}
\begin{pmatrix}
0 & g_2(y,t) \\ g_1(y,t) & 0
\end{pmatrix},						
\end{align}
\end{subequations}
with $f$, $q$, $g_1$, and $g_2$ as follows:
\begin{equation}\label{f-q-g}
f=-\frac{\hat m-1}{2\hat m},\quad q=\hat u,\quad g_1 =-\hat u-\hat u_x,\quad g_2=\hat u-\hat u_x.
\end{equation}
\end{proposition}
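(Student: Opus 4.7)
The plan is to pull back the Lax pair \eqref{Lax-1} to the coordinates $(y,t,\mu)$ using the chain-rule formulas $x_y=1/\hat m$ and $x_t=\hat\omega$ established in Section~\ref{sec:recover}, and then to recast the resulting $\mu$-rational coefficients in partial fractions at the singular points $\mu=\pm 1$ and $\mu=\pm\ii$. No conceptual ingredient beyond the change of variables is required; the intermediate $(y,t,\lambda,k)$-form stated just above the proposition is already derived in the text, so the remaining task is merely the passage from that form to \eqref{tilde-UV}.

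For the spatial equation I will substitute $\ii k=\ii(\mu^2-1)/(4\mu)$ into the $\sigma_3$-prefactor of the intermediate form (this immediately yields the left-hand side of \eqref{hatpsieq}), then rewrite $\lambda/(4\ii k)=\ii(\mu^2+1)/(2(\mu^2-1))$, factor out $f=-(\hat m-1)/(2\hat m)$, and apply the elementary identities
\[
\frac{\mu^2+1}{\mu^2-1}=1+\frac{1}{\mu-1}-\frac{1}{\mu+1},\qquad\frac{2\mu}{\mu^2-1}=\frac{1}{\mu-1}+\frac{1}{\mu+1},
\]
together with $\sigma_3+\left(\begin{smallmatrix}0&-1\\1&0\end{smallmatrix}\right)=\left(\begin{smallmatrix}1&-1\\1&-1\end{smallmatrix}\right)$ and $\sigma_3-\left(\begin{smallmatrix}0&-1\\1&0\end{smallmatrix}\right)=\left(\begin{smallmatrix}1&1\\-1&-1\end{smallmatrix}\right)$. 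Collecting the residues at $\mu=1$ and $\mu=-1$ then reproduces \eqref{tilde-U}.

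For the temporal equation, the left-hand $\sigma_3$-prefactor $2\ii k/\lambda^2$ becomes $2\ii\mu(\mu^2-1)/(\mu^2+1)^2$ directly, matching \eqref{hatpsieq}. On the right, I substitute $1/(2\ii k)=-2\ii\mu/(\mu^2-1)$, $1/\lambda=-2\mu/(\mu^2+1)$, and use the key splitting
\[
\frac{1}{2\ii k\,\lambda}=2\ii\left(\frac{1}{\mu^2-1}+\frac{1}{\mu^2+1}\right)
\]
to separate poles at $\mu=\pm 1$ from poles at $\mu=\pm\ii$. The $(\mu^2-1)^{-1}$ piece carries only $\hat u$ and, after the same partial-fraction identities as above, yields the first two rank-one matrices of \eqref{tilde-V} with $q=\hat u$. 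The $(\mu^2+1)^{-1}$ piece collects the remaining $\hat u$- and $\hat u_x$-terms into $\frac{1}{\mu^2+1}\left(\begin{smallmatrix}0&-2\mu\hat u_x-2\ii\hat u\\-2\mu\hat u_x+2\ii\hat u&0\end{smallmatrix}\right)$, which I then decompose at $\mu=\pm\ii$ via
\[
\frac{1}{\mu-\ii}\begin{pmatrix}0&g_1\\g_2&0\end{pmatrix}+\frac{1}{\mu+\ii}\begin{pmatrix}0&g_2\\g_1&0\end{pmatrix}=\frac{1}{\mu^2+1}\begin{pmatrix}0&\mu(g_1+g_2)+\ii(g_1-g_2)\\\mu(g_1+g_2)-\ii(g_1-g_2)&0\end{pmatrix};
\]
matching coefficients forces $g_1+g_2=-2\hat u_x$ and $g_1-g_2=-2\hat u$, which gives the values in \eqref{f-q-g}.

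I anticipate no conceptual obstacle. The only delicate point is the bookkeeping in the $\mu$-conversion — keeping signs consistent in $\lambda=-\tfrac12(\mu+1/\mu)$ and $2\ii k=\ii(\mu^2-1)/(2\mu)$, and cleanly separating the $\mu=\pm 1$ and $\mu=\pm\ii$ poles via the identity $\tfrac{2\mu^2}{\mu^4-1}=\tfrac{1}{\mu^2-1}+\tfrac{1}{\mu^2+1}$ — but once this is done, the pole/residue decomposition into the rank-one structure at $\mu=\pm 1$ and the off-diagonal structure at $\mu=\pm\ii$ is mechanical.
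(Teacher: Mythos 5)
Your proposal is correct and follows essentially the same route as the paper: change of variables via $x_y=1/\hat m$, $x_t=\hat\omega$ to the intermediate $(y,t,k,\lambda)$-form (which the paper displays just before the proposition), then substitution of $\lambda=-\tfrac12(\mu+\mu^{-1})$, $2\ii k=\ii(\mu^2-1)/(2\mu)$ and partial-fraction decomposition at $\mu=\pm1$ and $\mu=\pm\ii$; your identities and the resulting matching $f=-(\hat m-1)/(2\hat m)$, $q=\hat u$, $g_1+g_2=-2\hat u_x$, $g_1-g_2=-2\hat u$ all check out. You simply make explicit the algebra the paper leaves implicit in its "consequently" step.
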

%-------------------%

Our goal in this section is to show that giving a solution $\hat M(y,t,\mu)$ to the RH problem \eqref{Jp-y}--\eqref{sing-M-hat}, where the data $r(\mu)$ for $\mu\in\D{R}$, $c\in\D{R}$, and $\accol{\mu_j,\rho_j}_1^N$ are not a priori associated with some initial data $u_0(x)$, one can ``extract'' from $\hat M(y,t,\mu)$ a solution to the mCH equation. The idea is as follows:
\begin{enumerate}[(a)]
\item 
Starting from $\hat M(y,t,\mu)$, define $\hat\Psi(y,t,\mu) = \hat M(y,t,\mu) \eul^{-\hat p(y,t,\mu) \sigma_3}$ and show that $\hat\Psi(y,t,\mu)$ satisfies the system of differential equations:
\begin{equation}\label{Lax-hat-hat}
\begin{split}
\hat\Psi_y&=\doublehat{U}\hat\Psi, \\
\hat \Psi_t&=\doublehat{V}\hat\Psi,
\end{split}
\end{equation}
where $\doublehat{U}$ and $\doublehat{V}$ have the same (rational) dependence on $\mu$ as in \eqref{hatpsieq} and \eqref{tilde-UV}, with coefficients given in terms of $\hat M(y,t,\mu)$ evaluated at appropriate values of $\mu$.
\item
Show that the compatibility condition for \eqref{Lax-hat-hat}, which is the equality $\doublehat{U}_t - \doublehat{V}_y + [\doublehat{U},\doublehat{V}]=0$, reduces to the mCH equation.
\end{enumerate}

%-------------------%
%:prop 4.2
%-------------------%
\begin{proposition}\label{prop-y}
Let $\hat M(y,t,\mu)$ be the solution of the RH problem \eqref{Jp-y}--\eqref{sing-M-hat}. Define
\begin{equation}\label{hatpsiy}
\hat\Psi(y,t,\mu)\coloneqq\hat M(y,t,\mu)\eul^{-\hat p(y,t,\mu)\sigma_3},
\end{equation}
where $\hat p(y,t,\mu)\coloneqq-\frac{\ii(\mu^2-1)}{4\mu}\left(-y+\frac{8\mu^2}{(\mu^2+1)^2}t\right)$. Then $\hat\Psi(y,t,\mu)$ satisfies the differential equation
\[
\hat\Psi_y=\doublehat{U}\hat\Psi
\]
with $\doublehat{U}=-\frac{\ii(\mu^2-1)}{4\mu}\sigma_3+\widetilde{U}$, where $\widetilde{U}$ is as in \eqref{tilde-U} with $f$ given by 
\[
f(y,t)\coloneqq-\frac{\eta(y,t)}{2},
\] 
$\eta(y,t)$ being extracted from the large $\mu$ expansion of $\hat M(y,t,\mu)$:
\[
\hat M(y,t,\mu)=I+\frac{1}{\mu}
\begin{pmatrix}\xi(y,t) & \eta(y,t)\\\eta(y,t)&-\xi(y,t)\end{pmatrix}+\ord(\mu^{-2}),\qquad\mu\to\infty.
\]
\end{proposition}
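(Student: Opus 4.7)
The proof follows the dressing strategy. Define
\[
\doublehat{U}(y,t,\mu)\coloneqq\hat\Psi_y(y,t,\mu)\,\hat\Psi^{-1}(y,t,\mu),
\]
which, using $\hat\Psi=\hat M\eul^{-\hat p\sigma_3}$, rewrites as $\doublehat{U}=\hat M_y\hat M^{-1}-\hat p_y\hat M\sigma_3\hat M^{-1}$ with $\hat p_y=\frac{\ii(\mu^2-1)}{4\mu}$. The plan is to show that $\doublehat{U}$ is a rational function of $\mu$ on $\D{C}\cup\{\infty\}$ whose principal parts match $-\frac{\ii(\mu^2-1)}{4\mu}\sigma_3+\widetilde U$; Liouville's theorem then identifies $\doublehat{U}$ with this expression, and matching the $\ord(\mu^0)$ term at infinity yields $f=-\eta/2$.

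\emph{Meromorphicity.} The jump relation $\hat M_-=\hat M_+\hat J$ with $\hat J=\eul^{-\hat p\sigma_3}J_0\eul^{\hat p\sigma_3}$ and $(y,t)$-independent $J_0$ gives $\hat\Psi_-=\hat\Psi_+J_0$; differentiating in $y$ shows that $\doublehat{U}$ has no jump across $\D R$. The residue condition \eqref{res-M-hat} with $\hat\varkappa_j=\rho_j\eul^{-2\hat p(y,t,\mu_j)}$ translates, upon dressing by $\eul^{-\hat p\sigma_3}$, into the $(y,t)$-independent relation $\Res_{\mu_j}\hat\Psi^{(1)}=\rho_j^{-1}\hat\Psi^{(2)}(\mu_j)$. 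Equivalently, with the constant matrix $T_j(\mu)=\left(\begin{smallmatrix}1&0\\-(\rho_j(\mu-\mu_j))^{-1}&1\end{smallmatrix}\right)$, the product $\hat\Psi\,T_j$ is holomorphic and invertible at $\mu_j$, so $\doublehat{U}=(\hat\Psi T_j)_y(\hat\Psi T_j)^{-1}$ is regular at $\mu_j$; the symmetry $\doublehat{U}(\bar\mu)=\sigma_1\overline{\doublehat{U}(\mu)}\sigma_1$ handles $\bar\mu_j$.

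\emph{Principal parts at $\mu=0,\pm1,\infty$.} From $\hat M(\mu^{-1})=\sigma_1\hat M(\mu)\sigma_1$ and $\hat M(\infty)=I$ one gets $\hat M(0)=I$, so the simple pole of $\hat p_y$ at $\mu=0$ produces a residue $\frac{\ii}{4}\sigma_3$ in $\doublehat{U}$. At $\mu\to\infty$, substituting $\hat M=I+M_1/\mu+\ord(\mu^{-2})$ with $M_1=\left(\begin{smallmatrix}\xi&\eta\\\eta&-\xi\end{smallmatrix}\right)$ (the form being forced by \eqref{sym-M-hat}) and using $[M_1,\sigma_3]=-2\eta\left(\begin{smallmatrix}0&1\\-1&0\end{smallmatrix}\right)$,
\[
\doublehat{U}=-\frac{\ii\mu}{4}\sigma_3+\frac{\ii\eta}{2}\begin{pmatrix}0&1\\-1&0\end{pmatrix}+\ord(\mu^{-1}).
\]
At $\mu=1$, where $\hat p=0$, $\hat\Psi$ inherits from $\hat M$ the expansion $\hat\Psi=\frac{A}{\mu-1}+B+\ord(\mu-1)$ with $A=\frac{\ii\hat\alpha_+}{2}\left(\begin{smallmatrix}1\\1\end{smallmatrix}\right)(-c,1)=(\ii\hat\alpha_+/2)vw^T$. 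Because $v,w$ are $(y,t)$-independent, $A_y$ is proportional to $A$, so $A_y\,\text{adj}(A)=0$ and the $(\mu-1)^{-2}$ coefficient of $\doublehat{U}$ vanishes automatically. The simple-pole residue $A_y\tilde B+B_y\,\text{adj}(A)$ (with $\tilde B$ the constant term of $\hat\Psi^{-1}$) is then computed using the identity $\text{tr}(\text{adj}(A)B)=0$ forced by $\det\hat\Psi\equiv1$, which reads $b_{11}-b_{21}=c(b_{22}-b_{12})$; a short calculation shows this residue is a scalar multiple $\ii f(y,t)\left(\begin{smallmatrix}1&-1\\1&-1\end{smallmatrix}\right)$. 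The symmetry $\doublehat{U}(-\mu)=\sigma_2\doublehat{U}(\mu)\sigma_2$ then fixes the residue at $\mu=-1$ to be $\ii f(y,t)\left(\begin{smallmatrix}1&1\\-1&-1\end{smallmatrix}\right)$ with the same $f$.

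The technically delicate step is the last one: confirming the specific rank-one shape of the residues at $\mu=\pm1$. The symmetries \eqref{sym-M-hat} alone would leave a two-parameter family of traceless residues, and the crucial reductions are (i) the factorization $A=vw^T$ with $(y,t)$-independent $v,w$, which removes the would-be double pole in $\doublehat U$ for free, and (ii) the $\det\hat\Psi\equiv1$ constraint $b_{11}-b_{21}=c(b_{22}-b_{12})$, which cuts the free parameters down to the single scalar $f(y,t)$ of the required form \eqref{tilde-U}. Once this is established, Liouville's theorem delivers $\doublehat{U}=-\frac{\ii(\mu^2-1)}{4\mu}\sigma_3+\widetilde U$, and matching the constant-in-$\mu$ coefficient at $\mu=\infty$ gives the identification $f(y,t)=-\eta(y,t)/2$.
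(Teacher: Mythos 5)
Most of your argument runs parallel to the paper's proof (no jump for $\doublehat U=\hat\Psi_y\hat\Psi^{-1}$, regularity at the $\mu_j$ — your explicit factor $T_j$ is a nice touch the paper leaves implicit — the expansion at $\mu=\infty,0$, and the rank-one residues at $\mu=\pm1$ via $\det\hat\Psi\equiv1$). But the final step has a genuine gap. After Liouville's theorem you only know
\[
\doublehat U(\mu)=-\frac{\ii(\mu^2-1)}{4\mu}\sigma_3+\frac{\ii\beta_1}{\mu-1}\begin{pmatrix}1&-1\\1&-1\end{pmatrix}+\frac{\ii\beta_1}{\mu+1}\begin{pmatrix}1&1\\-1&-1\end{pmatrix}-\frac{\ii\eta}{2}\begin{pmatrix}0&-1\\1&0\end{pmatrix},
\]
where $\beta_1$ is the (so far unidentified) scalar in the residues at $\mu=\pm1$ and $-\eta/2$ is the coefficient of the constant term read off at $\mu=\infty$. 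These are two \emph{a priori independent} quantities; "matching the constant-in-$\mu$ coefficient at $\mu=\infty$" identifies only the constant term, it does not identify the residue coefficient. Yet the statement you must prove — that $\widetilde U$ has the one-parameter form \eqref{tilde-U} with a \emph{single} $f=-\eta/2$ appearing both in the poles at $\pm1$ and in the constant term — is precisely the assertion $\beta_1=-\eta/2$. This coincidence is not a bookkeeping detail: it is what later ties $f$ to $\hat m$ via $\hat m=(1+2\beta_1)^{-1}=(1-\eta)^{-1}$, so it must be proved.

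The paper closes this by one extra input which your proposal never uses: by Remark~\ref{rem3-1} (symmetries plus $\det\hat M\equiv1$), $\hat M(y,t,\ii)$ is diagonal, hence $\doublehat U(\ii)=\hat M_y\hat M^{-1}(\ii)-\hat p_y(\ii)\sigma_3$ is diagonal; evaluating the rational expression above at $\mu=\ii$, the off-diagonal part gives $\ii\beta_1+\frac{\ii\eta}{2}=0$, i.e.\ $\beta_1=-\eta/2$. Alternatively, you could obtain the same relation without $\mu=\ii$ by imposing the symmetry $\doublehat U(\mu^{-1})=\sigma_1\doublehat U(\mu)\sigma_1$ (which follows from $\hat M(\mu^{-1})=\sigma_1\hat M(\mu)\sigma_1$ and $\hat p(\mu^{-1})=-\hat p(\mu)$) on the explicit rational formula: the pole and $\sigma_3$ parts are automatically consistent, and the leftover constant parts force $\beta_1=-\eta/2$. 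Either way, an additional argument of this kind is needed; as written, your last sentence asserts the conclusion rather than deriving it.
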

%-------------------%

%-------------------%
\begin{proof}
First, notice that $\hat\Psi(y,t,\mu)$ satisfies the jump condition
\[
\hat\Psi_-(y,t,\mu)=\hat\Psi_+(y,t,\mu)J_0(\mu)
\]
with the jump matrix $J_0$ independent of $y$. Hence, $\hat\Psi_y(y,t,\mu)$ satisfies the same jump condition. Consequently, $\hat\Psi_y \hat\Psi^{-1}=\hat M_y\hat M^{-1}-\hat p_y\hat M\sigma_3\hat M^{-1}$ has no jump and thus it is a meromorphic function, with possible singularities at $\mu=\infty$, $\mu=0$, and $\mu=\pm 1$. Let us evaluate $\hat\Psi_y\hat\Psi^{-1}$ near these points.

(i) 
As $\mu\to\infty$, we have $\hat p_y=\frac{\ii\mu}{4}+\ord(\mu^{-1})$ and thus
\[
\hat\Psi_y \hat\Psi^{-1} = -\frac{\ii\mu}{4}\sigma_3-\frac{\ii}{4}[\hat M^{(\infty)}, \sigma_3] +\ord(\mu^{-1}),
\]
where $\hat M^{(\infty)}\equiv\hat M^{(\infty)}(y,t)$ comes from the large $\mu$ asymptotics of $\hat M$:
\[
\hat M=I+\frac{\hat M^{(\infty)}}{\mu}+\ord(\mu^{-2}),\qquad \mu\to\infty.
\]
Symmetries \eqref{sym-M-hat} imply that $\sigma_2\hat M^{(\infty)}\sigma_2=-\hat M^{(\infty)}$ and $\sigma_1\hat M^{(\infty)}\sigma_1=\overline{\hat M^{(\infty)}}$, so that
\[
\hat M^{(\infty)} = \begin{pmatrix}
\xi & \eta \\ \eta & -\xi
\end{pmatrix}
\]
with some $\xi(y,t)\in\ii\D{R}$ and $\eta(y,t)\in\D{R}$. Consequently,
\begin{equation}\label{psi-inf}
\hat\Psi_y\hat\Psi^{-1}=-\frac{\ii\mu}{4}\sigma_3-\frac{\ii}{2}\begin{pmatrix}0 & -\eta \\ \eta & 0\end{pmatrix}+\ord(\mu^{-1}),\qquad \mu\to\infty.
\end{equation}
Then, by symmetry,
\begin{equation}\label{psi-0}
\hat\Psi_y \hat\Psi^{-1} = \frac{\ii}{4\mu}\sigma_3 +\frac{\ii}{2}
\begin{pmatrix}
0 & -\eta \\ \eta & 0
\end{pmatrix}+\ord(\mu),\qquad\mu\to 0.
\end{equation}

(ii) 
Pushing the expansion \eqref{sing-M-hat-a} of $\hat M(\mu)$ a step further, and proceeding as in Section~\ref{sec:6} to get \eqref{sing-rel} we have
\begin{equation}\label{psi-1}
\hat\Psi_y\hat\Psi^{-1}=\frac{\ii\beta_1}{\mu-1}\begin{pmatrix}
                  1 & -1 \\
                  1 & -1\end{pmatrix}+\ord(1),\qquad \mu\to 1,
\end{equation}
with some $\beta_1(y,t)\in\D{R}$. By symmetry,
\begin{equation}\label{psi+1}
\hat\Psi_y\hat\Psi^{-1}=\frac{\ii\beta_1}{\mu+1}\begin{pmatrix}
                  1 & 1 \\
                  -1 & -1\end{pmatrix}+\ord(1), \qquad \mu\to -1.
\end{equation}

Combining \eqref{psi-inf}, \eqref{psi-0}, \eqref{psi-1}, and \eqref{psi+1}, we obtain that the function
\[
\hat\Psi_y \hat\Psi^{-1}+\frac{\ii(\mu^2-1)}{4\mu}\sigma_3
 - \frac{\ii\beta_1}{\mu-1}\begin{pmatrix}
                  1 & -1 \\
                  1 & -1
                \end{pmatrix}
- \frac{\ii\beta_1}{\mu+1}\begin{pmatrix}
                  1 & 1 \\
                  -1 & -1
                \end{pmatrix}
+\frac{\ii}{2}
\begin{pmatrix}
0 & -\eta \\ \eta & 0
\end{pmatrix}								
\]
is holomorphic in the whole complex $\mu$-plane and, moreover, vanishes as $\mu\to\infty$. Then, by Liouville's theorem, it vanishes identically.

Further, again by symmetry, $\hat M(y,t,\ii)$ is diagonal (see Remark \ref{rem3-1}), which implies that the following sum is diagonal as well:
\[
\frac{\ii\beta_1}{\ii-1}\begin{pmatrix}
                  1 & -1 \\
                  1 & -1
                \end{pmatrix}
+\frac{\ii\beta_1}{\ii+1}\begin{pmatrix}
                  1 & 1 \\
                  -1 & -1
                \end{pmatrix}
- \frac{\ii}{2}
\begin{pmatrix}
0 & -\eta \\ \eta & 0
\end{pmatrix}.
\]
It follows that $\frac{\eta}{2}=-\beta_1$, and thus we arrive at the equality $\hat\Psi_y=\doublehat{U}\hat\Psi$ with $\doublehat{U}=-\frac{\ii(\mu^2-1)}{4\mu}\sigma_3+\widetilde U$, where
$\widetilde U$ is as in \eqref{tilde-U} with $f=\beta_1$.
\end{proof}
%-------------------%

%-------------------%
%:prop 4.3
%-------------------%
\begin{proposition}\label{prop-t}
The function $\hat\Psi(y,t,\mu)$ defined by \eqref{hatpsiy} satisfies the differential equation
\begin{equation}\label{psi-t-i}
\hat\Psi_t=\doublehat{V}\hat\Psi
\end{equation}
with $\doublehat{V}=\frac{2\ii(\mu^2-1)\mu}{(\mu^2+1)^2}\sigma_3+\widetilde V$, where $\widetilde V$ is as in \eqref{tilde-V} with coefficients $q$, $g_1$, and $g_2$ determined by evaluating $\hat M(y,t,\mu)$ as $\mu\to 1$ and $\mu\to\ii$.
\end{proposition}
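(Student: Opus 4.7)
My plan is to mirror the proof of Proposition~\ref{prop-y}, adapted to the $t$-derivative. The new feature is that $\hat p_t=-\frac{2\ii(\mu^2-1)\mu}{(\mu^2+1)^2}$ has double poles at $\mu=\pm\ii$, producing corresponding singularities in $G(\mu)\coloneqq\hat\Psi_t\hat\Psi^{-1}=\hat M_t\hat M^{-1}-\hat p_t\,\hat M\sigma_3\hat M^{-1}$ that were absent in the $y$-case. I would first argue that $G$ is a rational function of $\mu$ with poles only at $\mu\in\{0,\pm 1,\pm\ii,\infty\}$: the $t$-independence of the jump matrix $J_0$ kills the jump across $\D{R}$, while rewriting the residue conditions \eqref{res-M-hat} in terms of $\hat\Psi$ gives $\Res_{\mu_j}\hat\Psi^{(1)}=\rho_j^{-1}\hat\Psi^{(2)}(\mu_j)$ with $\rho_j$ independent of $t$, so the cancellation argument of Proposition~\ref{prop-y} eliminates all poles at the discrete spectrum.

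Next I would identify the local structure of $G$ at each remaining singularity. Near $\mu=\pm 1$, the analysis is verbatim that of Proposition~\ref{prop-y}: the expansion \eqref{sing-M-hat}, together with $\det\hat M\equiv 1$ and the vanishing of the $(\mu\mp 1)^{-2}$ coefficient in $\hat M\hat M^{-1}$, produces rank-one simple poles of the form appearing in the first two terms of \eqref{tilde-V}, with real coefficient $q(y,t)$. Near $\mu=\ii$, expanding $\hat M(\mu)=D+(\mu-\ii)N+\ord((\mu-\ii)^2)$ with $D\coloneqq\hat M(y,t,\ii)$ diagonal by Remark~\ref{rem3-1}, I would get
\[
\hat M\sigma_3\hat M^{-1}=\sigma_3+(\mu-\ii)[N,\sigma_3]D^{-1}+\ord((\mu-\ii)^2),
\]
where $[N,\sigma_3]$ is off-diagonal. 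Combined with the partial-fraction expansion
\[
\hat p_t=\frac{1}{(\mu-\ii)^2}-\frac{\ii}{\mu-\ii}-\frac{1}{(\mu+\ii)^2}-\frac{\ii}{\mu+\ii},
\]
this shows that the double pole and the diagonal part of the simple pole of $-\hat p_t\hat M\sigma_3\hat M^{-1}$ at $\mu=\ii$ coincide with those of $-\hat p_t\sigma_3$, while the off-diagonal part of the simple pole has the form $\frac{1}{\mu-\ii}\bigl(\begin{smallmatrix}0 & g_1\\ g_2 & 0\end{smallmatrix}\bigr)$ with $g_1,g_2$ read off directly from $[N,\sigma_3]D^{-1}$. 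The symmetry $\sigma_2\hat M(-\mu)\sigma_2=\hat M(\mu)$ then transfers this expansion to $\mu=-\ii$ with rows swapped, matching the last two terms of \eqref{tilde-V}. The behavior at $\mu=0$ and decay at $\mu=\infty$ are recovered from the symmetry $\hat M(\mu^{-1})=\sigma_1\hat M(\mu)\sigma_1$ and the normalization $\hat M\to I$.

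Packaging these singular parts as $\doublehat{V}=-\hat p_t\sigma_3+\widetilde V$, the difference $G-\doublehat{V}$ is entire and vanishes at $\mu=\infty$, hence is identically zero by Liouville's theorem, which is \eqref{psi-t-i}. The delicate step --- and the main point of departure from Proposition~\ref{prop-y} --- is verifying that the double pole of $G$ at $\mu=\pm\ii$ is purely diagonal, so it can be absorbed into $-\hat p_t\sigma_3$. This hinges crucially on the diagonal structure of $\hat M(y,t,\pm\ii)$ established in Remark~\ref{rem3-1}: were $\hat M(\ii)$ not diagonal, the leading term of $\hat M\sigma_3\hat M^{-1}$ would fail to commute with $\sigma_3$ and the double pole of $G$ at $\ii$ would acquire off-diagonal entries that no rational ansatz of the form \eqref{tilde-V} could cancel.
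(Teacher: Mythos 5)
Your proof follows essentially the same route as the paper: show that $\hat\Psi_t\hat\Psi^{-1}$ has no jump and no poles at the discrete spectrum (because the $\rho_j$ are constant), compute its singular parts at $\mu=\pm1$ (rank-one simple poles, as in Proposition~\ref{prop-y}) and at $\mu=\pm\ii$ (where the double pole and the diagonal part of the simple pole match $-\hat p_t\sigma_3$ precisely because $\hat M(y,t,\pm\ii)$ is diagonal), and conclude by Liouville's theorem; your expansion $\hat M=D+(\mu-\ii)N+\ord((\mu-\ii)^2)$ reproduces the paper's formula \eqref{hathatVi}, with $g_1,g_2$ read off from $[N,\sigma_3]D^{-1}$ exactly as in the paper's $\gamma_1=2a_2a_1$, $\gamma_2=-2a_3a_1^{-1}$. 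One small imprecision: at $\mu=\pm1$ the substantive relation needed for the rank-one structure is the vanishing of the $(\mu\mp1)^{-1}$ coefficient of $\hat M\hat M^{-1}$ (i.e.\ the analogue of \eqref{sing-rel}), the $(\mu\mp1)^{-2}$ coefficient vanishing automatically since the singular part of $\hat M$ there has zero determinant.
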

%-------------------%

%-------------------%
\begin{proof}
Similarly to Proposition \ref{prop-y}, we notice that $\hat\Psi_t \hat\Psi^{-1}=\hat M_t\hat M^{-1}-\hat p_t\hat M\sigma_3\hat M^{-1}$ has no jump and thus it is a meromorphic function, with possible singularities at $\mu=\infty$, $\mu=0$, $\mu=\pm 1$, and $\mu=\pm\ii$, the latter being due to the singularity of $\hat p_t$ at $\mu=\pm\ii$:
\begin{equation}\label{psi-t-i-1}
\hat p_t(\mu)=\pm\frac{1}{(\mu\mp\ii)^2}-\frac{\ii}{\mu\mp\ii}+\ord(1), \qquad\mu\to\pm\ii.
\end{equation}
Evaluating $\hat\Psi_t\hat\Psi^{-1}$ near these points, we have the following.
\begin{enumerate}[(i)]
\item
As $\mu\to\infty$, we have $\hat p_t(\mu)=\ord(\mu^{-1})$ and thus
\begin{equation}\label{psi-inf-t}
\hat\Psi_t\hat\Psi^{-1}(\mu)=\ord(\mu^{-1}),\qquad\mu\to\infty.
\end{equation}
Then, by symmetry,
\begin{equation}\label{psi-0-t}
\hat\Psi_t\hat\Psi^{-1}(\mu)=\ord(\mu),\qquad\mu\to 0.
\end{equation}
\item
Expanding $\hat M(\mu)$ at $\mu=1$, and proceeding as above to get \eqref{psi-1}, we have
\begin{equation}\label{psi-1-t}
\hat\Psi_t\hat\Psi^{-1}(\mu)=\frac{\ii\beta_2}{\mu-1}\begin{pmatrix}
                  1 & -1 \\
                  1 & -1
                \end{pmatrix}+\ord(1),\qquad \mu\to 1,
\end{equation}
with some $\beta_2(y,t)\in\D{R}$. By symmetry,
\begin{equation}\label{psi+1-t}
\hat\Psi_t\hat\Psi^{-1}(\mu)=\frac{\ii\beta_2}{\mu+1}\begin{pmatrix}
                  1 & 1 \\
                  -1 & -1
                \end{pmatrix}+\ord(1), \qquad \mu\to -1.
\end{equation}
\item
Evaluating $\hat M(\mu)$ as $\mu\to\ii$, we first notice that, due to symmetries,
\begin{equation}\label{hat-M-i}
\hat M(\mu)=\begin{pmatrix}
                  a_1 & 0 \\
                  0 & a_1^{-1}
\end{pmatrix}  
+\begin{pmatrix}
                  0 & a_2 \\
                  a_3 & 0
\end{pmatrix}(\mu-\ii)+\ord((\mu-\ii)^2), \qquad \mu\to\ii,
\end{equation}
with some $a_j\equiv a_j(y,t)$, $j=1,2,3$. Taking into account \eqref{psi-t-i-1}, we have
\begin{equation}\label{hathatVi}
\hat\Psi_t\hat\Psi^{-1}(\mu)=-\frac{1}{(\mu-\ii)^2}\sigma_3+\frac{1}{\mu-\ii}\left(\ii\sigma_3+\begin{pmatrix}
                  0 & 2a_2 a_1 \\
                 -2 a_3 a_1^{-1} & 0
\end{pmatrix}\right)+\ord(1),\qquad \mu\to\ii.
\end{equation}
Then, by symmetry,
\begin{equation}\label{hathatV-i}
\hat\Psi_t\hat\Psi^{-1}(\mu)=\frac{1}{(\mu+\ii)^2}\sigma_3+\frac{1}{\mu+\ii}\left(\ii\sigma_3 +\begin{pmatrix}
                  0 & -2 a_3 a_1^{-1}  \\
                  2a_2 a_1 & 0
\end{pmatrix}\right) +\ord(1),\qquad \mu\to -\ii.
\end{equation}
\end{enumerate}
Combining \eqref{psi-inf-t}, \eqref{psi-1-t}, and \eqref{psi+1-t}, \eqref{hathatVi}, and \eqref{hathatV-i},
we obtain that the function
\begin{align*}
\hat\Psi_t\hat\Psi^{-1}(\mu)-\frac{2\ii(\mu^2-1)\mu}{(\mu^2+1)^2}\sigma_3
&-\frac{1}{\mu-1}\ii\beta_2
\begin{pmatrix}
1 & -1 \\ 1 & -1
\end{pmatrix}-\frac{1}{\mu+1}\ii\beta_2
\begin{pmatrix}
1 & 1 \\ -1 & -1
\end{pmatrix}\\
&-\frac{1}{\mu-\ii}
\begin{pmatrix}
0 & \gamma_1 \\ \gamma_2 & 0
\end{pmatrix}- \frac{1}{\mu+\ii}
\begin{pmatrix}
0 & \gamma_2 \\ \gamma_1 & 0
\end{pmatrix}	
\end{align*}			
with $\gamma_1=2a_2a_1$ and $\gamma_2=-2a_3a_1^{-1}$
is holomorphic in the whole complex $\mu$-plane and, moreover, vanishes as $\mu\to\infty$. Then, by Liouville's theorem, it vanishes identically. Thus we arrive at the equality $\hat\Psi_t=\doublehat{V}\hat\Psi$ with $\doublehat{V}(\mu)=\frac{2\ii(\mu^2-1)\mu}{(\mu^2+1)^2}\sigma_3+\widetilde V(\mu)$, where $\widetilde V(\mu)$ is as in \eqref{tilde-V} with $q=\beta_2$, $g_1=\gamma_1$, and $g_2=\gamma_2$.
\end{proof}
%-------------------%

The next step is to demonstrate that the compatibility condition
\begin{equation}\label{compat}
\doublehat{U}_t - \doublehat{V}_y + [\doublehat{U},\doublehat{V}]=0
\end{equation}
yields the mCH equation in the $(y,t)$ variables, which is as follows:

%-------------------%
%:prop 4.4
%-------------------%
\begin{proposition}
The mCH equation \eqref{mCH-2} in the $(y,t)$ variables reads as follows:
\begin{subequations}\label{ch_y}
\begin{align}\label{eq_y}
(\hat m^{-1})_t(y,t)&= 2\hat u_x(y,t),\\
\label{hatmy}
\hat m(y,t)&\coloneqq\hat u(y,t) - \hat u_{xx}(y,t) +1,
\end{align}
\end{subequations}
where $\hat f(y,t)\coloneqq\tilde f(x(y,t),t)$ for any function $\tilde f(x,t)$ and $x_y(y,t)=\hat m^{-1}(y,t)$.
\end{proposition}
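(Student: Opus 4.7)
The claim is a straightforward change of variables: we want to convert $\tilde m_t + (\tilde\omega\tilde m)_x = 0$ into the $(y,t)$ variables, where $y(x,t) = x - \int_x^{+\infty}(\tilde m(\xi,t)-1)\dd\xi$. In fact almost all the computation has already been done inside the proof of Proposition~\ref{prop-recover}; I would just isolate the relevant identities and cross-differentiate.

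\textbf{Step 1: compute $x_y$ and $x_t$ in the $(y,t)$ scale.} From \eqref{shkala} one gets immediately $y_x = \tilde m$, hence $x_y(y,t) = \hat m^{-1}(y,t)$. For the time derivative, compute
\[
y_t(x,t) = -\int_x^{+\infty}\tilde m_t(\xi,t)\dd\xi = \int_x^{+\infty}(\tilde\omega\tilde m)_\xi(\xi,t)\dd\xi = -\tilde\omega\tilde m(x,t),
\]
where the first equality uses the decay of $\tilde\omega\tilde m$ as $\xi\to+\infty$ (since $\tilde u\to 0$, $\tilde m\to 1$, $\tilde\omega\to 0$) and the second equality uses the mCH equation \eqref{mCH-2}. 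Differentiating the identity $x(y(x,t),t)=x$ in $t$ yields $x_y\,y_t + x_t = 0$, so
\[
x_t(y,t) = -x_y\,y_t = -\hat m^{-1}\cdot(-\hat\omega\hat m) = \hat\omega(y,t).
\]

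\textbf{Step 2: cross-differentiate.} Since $x(y,t)$ is smooth (on the range where the change of variables is a diffeomorphism), mixed partials commute:
\[
(\hat m^{-1})_t = (x_y)_t = (x_t)_y = \hat\omega_y.
\]
By the chain rule $\hat\omega_y(y,t) = \tilde\omega_x(x(y,t),t)\cdot x_y(y,t) = \tilde\omega_x\cdot\hat m^{-1}$. A direct computation from \eqref{tom} gives
\[
\tilde\omega_x = 2\tilde u\tilde u_x - 2\tilde u_x\tilde u_{xx} + 2\tilde u_x = 2\tilde u_x(\tilde u - \tilde u_{xx} + 1) = 2\tilde u_x\tilde m,
\]
so $\hat\omega_y = 2\hat u_x\hat m\cdot\hat m^{-1} = 2\hat u_x$. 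Combining,
\[
(\hat m^{-1})_t(y,t) = 2\hat u_x(y,t),
\]
which is exactly \eqref{eq_y}. The definition \eqref{hatmy} is just the definition of $\hat m$ in the new variables.

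There is essentially no obstacle here; the only substantive input is that $\tilde\omega\tilde m$ decays at $+\infty$, which is needed to turn $\int_x^{+\infty}\tilde m_t$ into $-\tilde\omega\tilde m(x,t)$ via the mCH equation. Everything else is the chain rule and the factorization $\tilde\omega_x = 2\tilde u_x\tilde m$.
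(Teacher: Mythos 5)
Your proof is correct, but it takes a slightly different path than the paper's own proof of this proposition. The paper computes the time derivative of $\hat m$ directly through the chain rule, $\hat m_t=\tilde m_x\,x_t+\tilde m_t$, substitutes $\tilde m_t=-(\tilde\omega\tilde m)_x$ from \eqref{mCH-2} and $x_t=\hat\omega$, and uses $\tilde\omega_x=2\tilde u_x\tilde m$ to arrive at $\hat m_t=-2\hat u_x\hat m^2$, which is \eqref{eq_y} after dividing by $-\hat m^2$. You instead exploit equality of mixed partials, $(\hat m^{-1})_t=(x_y)_t=(x_t)_y=\hat\omega_y$, together with $x_y=\hat m^{-1}$ and $x_t=\hat\omega$; this is in effect the computation the paper already carries out in the proof of Proposition~\ref{prop-recover} (see \eqref{x_y}, \eqref{x_t}, \eqref{x_yt}), so your argument amounts to recognizing that \eqref{x_yt} is literally \eqref{eq_y}. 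Both routes use the same substantive inputs — the identity $x_t=\hat\omega$ (which is where the PDE and the decay of $\tilde\omega\tilde m$ at $+\infty$ enter, and you are right to flag that decay explicitly) and the factorization $\tilde\omega_x=2\tilde u_x\tilde m$. What your version buys is that it lands on $(\hat m^{-1})_t$ directly without passing through $\hat m_t$, at the cost of invoking the symmetry of second derivatives of $x(y,t)$ (harmless under the smoothness assumed throughout), whereas the paper's chain-rule computation is pointwise and needs no interchange of derivatives. Your handling of \eqref{hatmy} as simply the definition of $\hat m$ in the new variables is also consistent with the paper.
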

%-------------------%

%-------------------%
\begin{proof}
Substituting $\tilde m_t=-(\tilde\omega\tilde m)_x$ from \eqref{mCH-2} and $x_t=\hat\omega$ from \eqref{x_t} into the equality
\[
\hat m_t(y,t)=\tilde m_x(x(y,t),t)x_t(y,t)+\tilde m_t(x(y,t),t)
\]
and using that $\tilde\omega_x=2\tilde m\tilde u_x$ we get
\[
\hat m_t(y,t)=\tilde m_x(x(y,t),t)\hat\omega(y,t)-\tilde m_x(x(y,t),t)\hat\omega(y,t)-2\tilde m^2(x(y,t),t)\hat u_x(y,t)=-2\hat u_x\hat m^2(y,t)
\]
and thus \eqref{eq_y} follows.
\end{proof}
%-------------------%

%-------------------%
%:rem 4.5
%-------------------%
\begin{remark}\label{rem-eqy2}
Notice that \eqref{hatmy} can be written as
\begin{equation}\label{eq_y-2}
\hat m(y,t)=\hat u(y,t) - \left(\hat u_x\right)_y(y,t) \hat m(y,t) +1.
\end{equation}
\end{remark}
%-------------------%

Now, evaluating the compatibility equation \eqref{compat} at the singular points for $ \doublehat{U}$ and  $\doublehat{V}$, we get algebraic and differential equations amongst the coefficients of $ \doublehat{U}$ and  $\doublehat{V}$, i.e., amongst $\beta_1$, $\beta_2$, $\gamma_1$, and  $\gamma_2$, that can be reduced to \eqref{eq_y}.

%-------------------%
%:prop 4.6
%-------------------%
\begin{proposition}
Let $\beta_1(y,t)$, $\beta_2(y,t)$, $\gamma_1(y,t)$, and $\gamma_2(y,t)$ be the functions determined in terms of $\hat M(y,t,\mu)$ as in Propositions \ref{prop-y} and \ref{prop-t}. Then they satisfy the following equations:
\begin{subequations}\label{rel}
\begin{align}\label{rel-a}
&\beta_{1 t}+\frac{\gamma_1+\gamma_2}{2}= 0;\\
\label{rel-b}
&\beta_2 - \frac{\gamma_2-\gamma_1}{2}=0;\\
\label{rel-c}
&(\gamma_1-\gamma_2)_y-(1+2\beta_1)(\gamma_1+\gamma_2)=0;\\
\label{rel-d}
&(\gamma_2+\gamma_1)_y+4\beta_1-(1+2\beta_1)(\gamma_1-\gamma_2)=0.
\end{align}
\end{subequations}
\end{proposition}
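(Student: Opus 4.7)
The compatibility equation $\doublehat U_t - \doublehat V_y + [\doublehat U, \doublehat V] = 0$ is rational in $\mu$ with possible singularities only at $\mu \in \{0, \pm 1, \pm\ii, \infty\}$; since it must vanish identically, every Laurent coefficient at each of these points must be zero. The four relations \eqref{rel-a}--\eqref{rel-d} will be extracted by matching four specific coefficients (one at $\infty$, one at $0$, and two at $\mu = \ii$), with the symmetries $\mu \mapsto -\mu$ and $\mu \mapsto \bar\mu$ ensuring that $\mu = -1$ and $\mu = -\ii$ give no further information. Before extracting, one checks that the highest-order singularities cancel automatically: the residue matrices $\left(\begin{smallmatrix}1 & -1\\ 1 & -1\end{smallmatrix}\right)$ and $\left(\begin{smallmatrix}1 & 1\\ -1 & -1\end{smallmatrix}\right)$ of $\widetilde U, \widetilde V$ at $\mu = \pm 1$ are nilpotent of square zero, so the $(\mu \mp 1)^{-2}$ terms of the commutator vanish; and $\doublehat U(\pm\ii)$ turns out to be diagonal (as computed below), so the $(\mu \mp \ii)^{-2}$ terms also vanish.

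Relations \eqref{rel-a} and \eqref{rel-b} come from matching at $\mu = \infty$ and $\mu = 0$ respectively. At $\infty$, using $\widetilde U(\mu) = \ii f C + O(\mu^{-1})$ (with $C := \left(\begin{smallmatrix}0 & -1\\ 1 & 0\end{smallmatrix}\right) = -\ii\sigma_2$) and the expansion $\doublehat V(\mu) = \mu^{-1}[2\ii(1+q)\sigma_3 + (g_1+g_2)\sigma_1] + O(\mu^{-2})$, the $O(1)$ content of the equation collects $\doublehat U_t \sim f_t \sigma_2$ together with the commutator $[-\tfrac{\ii\mu}{4}\sigma_3, \doublehat V]$, which produces $\tfrac{g_1+g_2}{2}\sigma_2$ via $[\sigma_3, \sigma_1] = 2\ii\sigma_2$; setting the sum to zero gives \eqref{rel-a}. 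At $0$, $\doublehat V$ is regular and the only $\mu^{-1}$ contribution is $\tfrac{\ii}{4\mu}[\sigma_3, \doublehat V(0)]$, where a direct evaluation shows $\doublehat V(0) = \widetilde V(0) = \ii(g_2 - g_1 - 2q)C$ is off-diagonal; vanishing of the residue forces $-2q + g_2 - g_1 = 0$, which is \eqref{rel-b}.

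Relations \eqref{rel-c} and \eqref{rel-d} come from matching at $\mu = \ii$. A short computation gives $\doublehat U(\ii) = (\tfrac{1}{2} + \beta_1)\sigma_3$ (hence diagonal) and $\partial_\mu \doublehat U|_{\mu=\ii} = \beta_1 C$; together with the expansion $\doublehat V(\mu) = -(\mu-\ii)^{-2}\sigma_3 + (\mu-\ii)^{-1}\bigl[\ii\sigma_3 + \left(\begin{smallmatrix}0 & g_1\\ g_2 & 0\end{smallmatrix}\right)\bigr] + O(1)$ and $\doublehat V_y(\mu) = (\mu-\ii)^{-1}\left(\begin{smallmatrix}0 & g_{1y}\\ g_{2y} & 0\end{smallmatrix}\right) + O(1)$, the three nonvanishing pieces of the $(\mu-\ii)^{-1}$ coefficient assemble into a matrix equation whose $(1,2)$ and $(2,1)$ entries read $g_{1y} = -2\beta_1 + (1+2\beta_1)g_1$ and $g_{2y} = -2\beta_1 - (1+2\beta_1)g_2$; their sum and difference are exactly \eqref{rel-d} and \eqref{rel-c}.

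The main obstacle is careful bookkeeping of the partial-fraction decomposition: one must verify the automatic cancellations at the leading singularities (nilpotency at $\mu = \pm 1$, diagonality of $\doublehat U$ at $\pm\ii$, off-diagonality of $\doublehat V(0)$), and identify precisely which Laurent coefficients carry the non-trivial information and which are redundant by symmetry. Once this organization is in place, each of \eqref{rel-a}--\eqref{rel-d} drops out of a single direct expansion at one of the four marked points.
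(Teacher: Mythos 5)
Your proposal is correct and follows essentially the same method as the paper: both expand the compatibility condition $\doublehat{U}_t-\doublehat{V}_y+[\doublehat{U},\doublehat{V}]=0$ at the marked points and set the relevant Laurent coefficients to zero, with \eqref{rel-a} coming from the $\ord(1)$ term at $\mu=\infty$ and \eqref{rel-b} from the residue at $\mu=0$, exactly as in the paper. The only (harmless) deviation is that you obtain \eqref{rel-c} and \eqref{rel-d} together as the difference and sum of the two off-diagonal entries of the residue at $\mu=\ii$, whereas the paper gets \eqref{rel-c} from the diagonal part of the residue at $\mu=1$ (using \eqref{rel-a}--\eqref{rel-b}); your variant is equally valid and slightly more self-contained at that step.
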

%-------------------%

%-------------------%
\begin{proof}
Recall $\beta_1$ and $\beta_2$ are given by \eqref{psi-1} and \eqref{psi-1-t}, respectively. Moreover, $\gamma_1\coloneqq 2a_2a_1$ and $\gamma_2\coloneqq -2a_3a_1^{-1}$, where $a_1$, $a_2$, and $a_3$ are defined by \eqref{hat-M-i}.

(i)
Evaluating the l.h.s. of \eqref{compat} as $\mu\to\infty$, the main term (of order $\ord(1)$) is
\[
\left(\beta_{1t}+\frac{\gamma_1+\gamma_2}{2}\right)\sigma_2,
\]
from which \eqref{rel-a} follows.

(ii)
Evaluating the l.h.s.\ of \eqref{compat} as $\mu\to 0$, the main term (of order $\ord(\mu^{-1})$)
is
\[
-\frac{1}{\mu}\left(\beta_2+\frac{\gamma_1-\gamma_2}{2}\right)\sigma_1,
\]
from which \eqref{rel-b} follows.

(iii)
Evaluating the l.h.s.\ of \eqref{compat} as $\mu\to 1$, the diagonal part of the main term (of order $\ord((\mu-1)^{-1})$) is
\[
\frac{\ii}{\mu-1}\left(\beta_{1t}-\beta_{2y}-\beta_1(\gamma_1+\gamma_2)\right)\sigma_3,
\]
from which \eqref{rel-c} follows, taking into account \eqref{rel-a} and \eqref{rel-b}.

(iv)
Evaluating the l.h.s.\ of \eqref{compat} as $\mu\to\ii$, the main term (of order $\ord((\mu-\ii)^{-1})$) is
\[
\frac{1}{\mu-\ii}\left\lbrack\begin{pmatrix}
0 & -\gamma_{1y}\\  -\gamma_{2y} & 0
\end{pmatrix}+ (1+2\beta_1)\begin{pmatrix}
0 & \gamma_{1}\\  -\gamma_{2} & 0
\end{pmatrix}  - 2\beta_1\begin{pmatrix}
0 &1 \\  1 & 0
\end{pmatrix}\right\rbrack,
\]
from which \eqref{rel-d} follows.
\end{proof}
%-------------------%

%-------------------%
%:prop 4.7
%-------------------%
\begin{proposition}\label{reduce}
Let $\hat m(y,t)$, $\hat u(y,t)$, and $x(y,t)$ be defined in terms of $\beta_1$, $\beta_2$, $\gamma_1$, and $\gamma_2$ as follows:
\begin{equation}\label{hmbbgg}
\hat m = (1+2\beta_1)^{-1},\quad\hat u=\beta_2=\frac{\gamma_2-\gamma_1}{2},\quad x_y=1+2\beta_1.
\end{equation}
Then the four equations \eqref{rel} reduce to \eqref{eq_y} and \eqref{eq_y-2}.
\end{proposition}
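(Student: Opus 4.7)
The four equations in \eqref{rel} contain both algebraic identities and differential relations among $\beta_1,\beta_2,\gamma_1,\gamma_2$; the plan is to use the algebraic ones to identify the ``hydrodynamic'' variables $\hat m$ and $\hat u$, and then convert the differential ones into the mCH identities via the chain rule across the change of variable $x=x(y,t)$.

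First, equation \eqref{rel-b} is nothing other than the consistency of the two expressions for $\hat u$ posited in \eqref{hmbbgg}: $\beta_2=(\gamma_2-\gamma_1)/2$. So \eqref{rel-b} plays no further role and the remaining task is to extract \eqref{eq_y} and \eqref{eq_y-2} from \eqref{rel-a}, \eqref{rel-c}, \eqref{rel-d}. The key preparatory computation is to determine $\hat u_x$. Since $\hat u(y,t)=\tilde u(x(y,t),t)$, the chain rule gives $\hat u_y=\hat u_x\cdot x_y$, i.e., $\hat u_x=\hat m\,\hat u_y$ because $x_y=1+2\beta_1=\hat m^{-1}$. Differentiating $\hat u=(\gamma_2-\gamma_1)/2$ in $y$ and rewriting \eqref{rel-c} as $(\gamma_1-\gamma_2)_y=(1+2\beta_1)(\gamma_1+\gamma_2)$, I obtain $\hat u_y=-\hat m^{-1}(\gamma_1+\gamma_2)/2$, hence
\[
\hat u_x=-\frac{\gamma_1+\gamma_2}{2}.
\]
This is the central formula; everything else follows by direct substitution.

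For \eqref{eq_y}, differentiate $\hat m^{-1}=1+2\beta_1$ in $t$ and apply \eqref{rel-a}:
\[
(\hat m^{-1})_t=2\beta_{1t}=-(\gamma_1+\gamma_2)=2\hat u_x,
\]
which is exactly \eqref{eq_y}. For \eqref{eq_y-2}, the identity to verify is $\hat m-1=\hat u-(\hat u_x)_y\,\hat m$. Since $\hat m-1=-2\beta_1\hat m$, it suffices to show $\hat u-(\hat u_x)_y\hat m=-2\beta_1\hat m$. Differentiating $\hat u_x=-(\gamma_1+\gamma_2)/2$ in $y$ and substituting \eqref{rel-d} written as $(\gamma_1+\gamma_2)_y=-4\beta_1+(1+2\beta_1)(\gamma_1-\gamma_2)$, I get
\[
(\hat u_x)_y=2\beta_1-\frac{(1+2\beta_1)(\gamma_1-\gamma_2)}{2}=2\beta_1+\hat m^{-1}\hat u,
\]
so $(\hat u_x)_y\hat m=2\beta_1\hat m+\hat u$ and therefore $\hat u-(\hat u_x)_y\hat m=-2\beta_1\hat m=\hat m-1$, as required.

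There is no real obstacle: the result is a linear-algebra / chain-rule exercise once the dictionary \eqref{hmbbgg} is in place. The only mildly delicate point is bookkeeping signs when passing from the $(y,t)$ derivatives of $\gamma_1\pm\gamma_2$ to the $x$-derivative of $\hat u$; one must remember that $x_y=\hat m^{-1}$ (not $\hat m$) so that $\hat u_x$ involves a factor of $\hat m$, and that \eqref{rel-c} and \eqref{rel-d} supply, respectively, an expression for $(\gamma_1-\gamma_2)_y$ and for $(\gamma_1+\gamma_2)_y$, matching exactly the two derivatives $\hat u_x$ and $(\hat u_x)_y$ that \eqref{eq_y}--\eqref{eq_y-2} require.
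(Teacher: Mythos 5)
Your proof is correct and follows essentially the same route as the paper: the central step in both is to use \eqref{rel-c} together with $x_y=\hat m^{-1}$ to get $\hat u_x=-\frac{\gamma_1+\gamma_2}{2}$, after which \eqref{rel-a} becomes \eqref{eq_y} and \eqref{rel-d} is exactly \eqref{eq_y-2} rewritten in terms of $\beta_1,\gamma_1,\gamma_2$ (you run this last identification forward, the paper backward, which is only a cosmetic difference).
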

%-------------------%

%-------------------%
\begin{proof}
Indeed, defining $\hat u$ and $x(y,t)$ as prescribed in \eqref{hmbbgg}, equation \eqref{rel-c} implies that $\hat u_x=\hat u_yx_y^{-1}$ can be expressed as
\[
\hat u_x=-\frac{\gamma_1+\gamma_2}{2}.
\]
Then, taking into account the definition of $\hat m$ in \eqref{hmbbgg}, equation \eqref{rel-a} takes the form of the equation \eqref{eq_y}. Finally, using the notations introduced above, equation \eqref{eq_y-2} can be written as
\[
\frac{1}{1+2\beta_1} = \frac{\gamma_2-\gamma_1}{2}+  \frac{(\gamma_1+\gamma_2)_y}{2}\frac{1}{1+2\beta_1}+1,
\]
which is just equation \eqref{rel-d}.
\end{proof}
%-------------------%

%-------------------%
%:rem 4.8
%-------------------%
\begin{remark}\label{rem-u}
Formulas $\hat u = \frac{\gamma_2-\gamma_1}{2}$ and $\hat u_x = -\frac{\gamma_1+\gamma_2}{2}$ provide an alternative way to obtain $\hat u$ as well as $\hat u_x$ from the solution $\hat M$ of the RH problem. Indeed, according to Proposition \ref{prop-t}, $\hat u$ and $\hat u_x$ (as functions of $(y,t)$) can be obtained using the coefficients $a_j(y,t)$ (see \eqref{hat-M-i}) of the development of $\hat M(y,t,k)$ as $\mu\to\ii$ (thus avoiding the differentiations used in Section \ref{sec:recover}):
\begin{equation}\label{recsol}
\hat u(y,t)=-a_2a_1-a_3a_1^{-1},\qquad\hat u_x(y,t)=-a_2a_1+a_3 a_1^{-1},
\end{equation}
where $a_j(y,t)$ are determined by \eqref{hat-M-i}. Recall also the representation for $\hat m$ in terms of $\hat M$ evaluated as $\mu\to\infty$, see Proposition \ref{prop-y}:
\begin{equation}\label{m-sol}
\begin{split}
\hat m(y,t) &= \frac{1}{1+2\beta_1(y,t)}=\frac{1}{1-\eta(y,t)},\\ \eta(y,t)&\coloneqq\lim_{\mu\to\infty}\mu\hat M_{12}(y,t,\mu).
\end{split}
\end{equation}
Considered together with the expression for the change of variables \eqref{x(y,t)}, which can be written as (we indeed have $\hat\mu_1=a_1$ and $\hat\mu_2=a_1^{-1}$)
\begin{equation}\label{change}
x(y,t) = y + 2\ln a_1(y,t),
\end{equation}
equations \eqref{recsol} and \eqref{m-sol} give a parametric representation of the solution of the mCH equation \eqref{mCH-2}.
\end{remark}
%-------------------%

%---------------------------------------------------------%
%:s.5
%---------------------------------------------------------%
\section{Solitons}\label{sec:8}
	
In the Riemann--Hilbert variant of the inverse scattering transform method, pure soliton solutions can be obtained  from the solutions of the RH problem assuming that the jump is trivial ($J\equiv I$), which reduces the construction to solving a system of linear algebraic equations generated by the residue conditions.

In order to construct the simplest, one-soliton solution, we consider the RH problem \eqref{Jp-y}--\eqref{sing-M-hat} with specific data, in particular $r(\mu)\equiv 0$, so that $\hat J\equiv I$. Regarding the other data, we require that $\hat M^{(1)}$ has a simple pole on the unit circle, at $\mu_1=\eul^{\ii\theta}$, $\theta\in(0,\frac{\pi}{2})$. It follows that $\hat M^{(1)}$ has also a simple pole at $\mu_2=-\eul^{-\ii\theta}=-\bar\mu_1=-\mu_1^{-1}$. According to the symmetries \eqref{sym-M-hat} the coefficients $\hat\varkappa_j(y,t)=\rho_j\eul^{-2\hat p(y,t,\mu_j)}$, $j=1,2$ in the residue conditions \eqref{res-M-hat} must satisfy the relations $\hat\varkappa_1=\overline{\hat\varkappa_2}=-\mu_1^{-2}\hat\varkappa_2$, that is, $\rho_1=\overline{\rho_2}=-\mu_1^{-2}\rho_2$ which imply $\rho_1=\ii\eul^{-\ii\theta}\hat\delta$ for some $\hat\delta\in\D{R}$. Further we denote $\hat\varkappa(y,t)\coloneqq\hat\varkappa_1(y,t)$ and $\rho\coloneqq\rho_1\in\D{C}$. So $\rho$ satisfies
\begin{equation}\label{ga}
\bar\rho=-\eul^{2\ii\theta}\rho.
\end{equation}
Thus we arrive at the following Riemann--Hilbert problem:

%-------------------%
\begin{sol-rh-pb*}
Given $\theta\in(0,\frac{\pi}{2})$ and $\hat\delta\neq 0$ two real parameters, together with $c\in\D{R}$, find a piece-wise (w.r.t.~$\D{R}$) meromorphic, $2\times 2$-matrix valued function $\hat M(y,t,\mu)$ satisfying the following conditions:
\begin{enumerate}[\textbullet]
\item
The jump condition $\hat J\equiv I$ across $\D{R}$.
\item
The residue conditions \eqref{res-M-hat} at $\mu_1=\eul^{\ii\theta}$ and $\bar\mu_1=\eul^{-\ii\theta}$: 
\begin{subequations}\label{Res-sol}
\begin{align}\label{Res-sol-1}
\Res_{\eul^{\ii\theta}}\hat M^{(1)}(y,t,\mu)&=\frac{1}{\hat\varkappa(y,t)}\hat M^{(2)}(y,t,\eul^{\ii\theta}),\\
\label{Res-sol-2}
\Res_{\eul^{-\ii\theta}}\hat M^{(2)}(y,t,\mu)&=\frac{1}{\overline{\hat\varkappa(y,t)}}\hat M^{(1)}(y,t,\eul^{-\ii\theta}),
\end{align}
\end{subequations}
where $\hat\varkappa(y,t)=\ii\eul^{-\ii\theta}\hat\delta\eul^{-2\hat p(y,t,\eul^{\ii\theta})}$ with $\hat p(y,t,\eul^{\ii\theta})=\frac{\sin\theta}{2}(-y+\frac{2}{\cos^2\theta}t)$, and $\overline{\hat\varkappa}=-\eul^{2\ii\theta}\hat\varkappa$.
\item
The normalization condition $\hat M(y,t,\infty)=I$.
\item
The symmetries \eqref{sym-M-hat}.
\item
The singularity conditions \eqref{sing-M-hat} at $\mu=\pm1$.
\end{enumerate}
\end{sol-rh-pb*}
%-------------------%

The residue conditions at $\mu_2$ and $\bar\mu_2$ follow from \eqref{Res-sol} using the symmetries \eqref{sym-M-hat}:
\begin{subequations}\label{Res-sol2}
\begin{align}\label{Res-sol2-1}
\Res_{-\eul^{-\ii\theta}}\hat M^{(1)}(y,t,\mu)&=\frac{1}{\overline{\hat\varkappa(y,t)}}\hat M^{(2)}(y,t,-\eul^{-\ii\theta}),\\
\label{Res-sol2-2}
\Res_{-\eul^{\ii\theta}}\hat M^{(2)}(y,t,\mu)&=\frac{1}{\hat\varkappa(y,t)}\hat M^{(1)}(y,t,-\eul^{\ii\theta}).
\end{align}
\end{subequations}

To summarize, the soliton RH problem of parameters $(\theta,\hat\delta)$ is the RH problem \eqref{Jp-y}--\eqref{sing-M-hat} with trivial jump condition and residue conditions data $\accol{\mu_j,\rho_j}_1^2$ where $\mu_1=-\bar\mu_2=\eul^{\ii\theta}$ and $\rho_1=\bar\rho_2=\ii\eul^{-\ii\theta}\hat\delta$.

%-------------------%
%:rem 5.1
%-------------------%
\begin{remark}\label{rem:par}
Assume that the data of the soliton RH problem are associated with the spectral data corresponding to some initial data $u_0(x)$, see Section \ref{sec:spectral-data}. In particular, $b(\mu)\equiv 0$ and $a(\mu)$ has two zeros in $\D{C}^+$, each of multiplicity one, $\mu_1=\eul^{\ii\theta}$ and $\mu_2=-\eul^{-\ii\theta}$, both on the unit circle. The coefficient $\hat\varkappa$ in the residue condition for $M^{(1)}$ at $\mu_1$ is given by $\hat\varkappa=\rho\,\eul^{-2\hat p(y,t,\eul^{\ii\theta})}$ with $\rho=\dot a(\eul^{\ii\theta})\delta$, where the constant $\delta$ relates two Jost functions: $\hat\Phi_+^{(2)}(x,t,\mu_1)=\delta\hat\Phi_-^{(1)}(x,t,\mu_1)$. Using the symmetries \eqref{sym-Phi} and the relation $\bar\mu_1=\mu_1^{-1}$ we find that $\sigma_1\hat\Phi_\pm(\eul^{-\ii\theta})\sigma_1=\overline{\hat\Phi_\pm(\eul^{\ii\theta})}=\hat\Phi_\pm(\eul^{\ii\theta})$ and thus $\delta$ is real. Moreover, from the symmetry relation $a(\mu^{-1})=\overline{a(\bar\mu)}$ it follows that $\overline{\dot a(\eul^{\ii\theta})}=-\eul^{2\ii\theta}\dot a(\eul^{\ii\theta})$, and thus $\rho=\dot a(\eul^{\ii\theta})\delta$ satisfies \eqref{ga}. To conclude, in that case, $\hat\delta=-\ii\eul^{\ii\theta}\dot a(\eul^{\ii\theta})\delta$.
\end{remark}
%-------------------%

%-------------------%
%:prop 5.2
%-------------------%
\begin{proposition}\label{lem-M} 
Let $\theta\in(0,\frac{\pi}{2})$ and $\hat\delta\neq 0$ be two real parameters. Then, the soliton RH problem of parameters $(\theta,\hat\delta)$ has a solution $\hat M\equiv\hat M_{\theta,\hat\delta}$ provided that $c=1$:
\begin{align}\label{solM}
\hat M(y,t,\mu) 
&=I+\frac{\ii}{2}\frac{\hat\alpha_+(y,t)}{\mu-1}\begin{pmatrix}-1&1\\-1&1\end{pmatrix}-\frac{\ii}{2}\frac{\hat\alpha_+(y,t)}{\mu+1}\begin{pmatrix}1&1\\-1&-1\end{pmatrix}\notag\\
&\qquad
+\begin{pmatrix}
\frac{\ii\hat\kappa_1(y,t)\eul^{\ii\theta}}{\mu-\eul^{\ii\theta}}+\frac{\ii\hat\kappa_1(y,t)\eul^{-\ii\theta}}{\mu+\eul^{-\ii\theta}} & \frac{-\ii\hat\kappa_2(y,t)\eul^{-\ii\theta}}{\mu-\eul^{-\ii\theta}}+\frac{\ii\hat\kappa_2(y,t)\eul^{\ii\theta}}{\mu+\eul^{\ii\theta}}\\
\frac{\ii\hat\kappa_2(y,t)\eul^{\ii\theta}}{\mu-\eul^{\ii\theta}}+\frac{-\ii\hat\kappa_2(y,t)\eul^{-\ii\theta}}{\mu+\eul^{-\ii\theta}}&\frac{-\ii\hat\kappa_1(y,t)\eul^{-\ii\theta}}{\mu-\eul^{-\ii\theta}}+\frac{-\ii\hat\kappa_1(y,t)\eul^{\ii\theta}}{\mu+\eul^{\ii\theta}}
\end{pmatrix},
\end{align}
where
\begin{subequations}\label{kappa}
\begin{align}\label{kappa2}
\hat\kappa_2^{-1}(y,t)&=-\doublehat\varkappa(y,t)-\frac{\cos^2\theta}{4\doublehat\varkappa(y,t)\sin^2\theta}-\frac{1}{\sin\theta},\\\label{kappa1}
\hat\kappa_1(y,t)&=-\frac{\cos\theta}{2\doublehat\varkappa(y,t)\sin\theta}\hat\kappa_2(y,t),\\\label{alpha}
\hat\alpha_+(y,t)&=2\hat\kappa_2(y,t).
\end{align}
Here, 
\begin{equation}\label{varkappap}
\doublehat\varkappa(y,t)\coloneqq\hat\delta\,\eul^{-2\hat p(y,t,\eul^{\ii\theta})}\quad\text{with}\quad\hat p(y,t,\eul^{\ii\theta})=\frac{\sin\theta}{2}\left(-y+\frac{2}{\cos^2\theta}t\right).
\end{equation}
\end{subequations}
\end{proposition}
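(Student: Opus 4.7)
The plan is as follows. Because $\hat J\equiv I$, the solution of the soliton RH problem must be meromorphic on all of $\D{C}$, with singularities only at the four points $\pm e^{\pm\ii\theta}$ (simple poles prescribed by the residue data \eqref{Res-sol}--\eqref{Res-sol2}) and at $\mu=\pm 1$ (rank-one singular parts specified by \eqref{sing-M-hat}). Normalization $\hat M(\infty)=I$ forces $\hat M$ to equal $I$ plus the sum of its singular parts, and the symmetries \eqref{sym-M-hat} cut the number of free real parameters down to three, which will be $\hat\alpha_+,\hat\kappa_1,\hat\kappa_2$. Setting $c=1$, the resulting ansatz is precisely \eqref{solM}.

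First I would verify that \eqref{solM}, with $\hat\alpha_+,\hat\kappa_1,\hat\kappa_2\in\D{R}$, is a meromorphic matrix function on $\D{C}$ satisfying the singularity conditions \eqref{sing-M-hat} with $c=1$, the normalization $\hat M(\infty)=I$, and the three symmetries \eqref{sym-M-hat}. The Schwarz symmetry pairs the poles at $e^{\pm\ii\theta}$ and those at $-e^{\pm\ii\theta}$ and forces the three parameters to be real. The symmetry $\hat M(-\mu)=\sigma_2\hat M(\mu)\sigma_2$ pairs $\pm e^{\ii\theta}$ and $\pm e^{-\ii\theta}$ and interchanges diagonal with antidiagonal entries, which is exactly why a single scalar $\hat\kappa_1$ controls the diagonal four-pole part and a single scalar $\hat\kappa_2$ controls the antidiagonal. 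The inversion symmetry $\hat M(\mu^{-1})=\sigma_1\hat M(\mu)\sigma_1$ is automatic since $|e^{\pm\ii\theta}|=1$ makes the four-pole set invariant under $\mu\mapsto\mu^{-1}$ and the two $\pm 1$ singular terms interchange with the correct transformation.

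Second I would impose the residue condition \eqref{Res-sol-1} at $\mu=e^{\ii\theta}$. From the ansatz \eqref{solM} the left-hand side is $\ii e^{\ii\theta}(\hat\kappa_1,\hat\kappa_2)^T$, while the right-hand side is $\hat\varkappa^{-1}\hat M^{(2)}(y,t,e^{\ii\theta})$ with the second column of $\hat M$ evaluated explicitly at $\mu=e^{\ii\theta}$. Writing $\hat\varkappa=\ii e^{-\ii\theta}\doublehat\varkappa$ with $\doublehat\varkappa\in\D{R}$ as in \eqref{varkappap}, and separating real and imaginary parts of the resulting two-component complex identity, one obtains a $3\times 3$ real linear system for $(\hat\alpha_+,\hat\kappa_1,\hat\kappa_2)$. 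Eliminating $\hat\alpha_+$ and $\hat\kappa_1$ leaves a single scalar equation for $\hat\kappa_2$, giving \eqref{kappa2}; back-substitution then produces \eqref{kappa1} and \eqref{alpha}. The remaining residue conditions \eqref{Res-sol-2}, \eqref{Res-sol2-1}, \eqref{Res-sol2-2} hold automatically once \eqref{Res-sol-1} does, because the symmetries \eqref{sym-M-hat} map them to \eqref{Res-sol-1}.

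The main obstacle is the algebraic bookkeeping: when $\hat M^{(2)}$ is evaluated at $e^{\ii\theta}$, every contribution (the identity, the singular parts at $\pm 1$, and the four pole terms at $\pm e^{\pm\ii\theta}$, one of which is the pole of interest itself) must be collected and reorganized using the trigonometric identities for $(e^{\ii\theta}\pm 1)^{-1}$ and $(e^{\ii\theta}\pm e^{-\ii\theta})^{-1}$ to expose the real $3\times 3$ structure. Once this is done, solvability reduces to nonvanishing of the denominator of \eqref{kappa2}, so that $\hat\kappa_2$ is well-defined by the stated formula wherever $\doublehat\varkappa$ avoids the zero of that denominator, and the proof is complete.
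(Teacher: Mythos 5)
Your construction follows essentially the same route as the paper: a rational ansatz consisting of $I$ plus the prescribed singular parts at $\mu=\pm1$ and simple poles at $\pm\eul^{\pm\ii\theta}$, symmetry reduction to the three real parameters $\hat\alpha_+,\hat\kappa_1,\hat\kappa_2$ with $c=1$, and then the residue condition \eqref{Res-sol-1} to pin them down, the remaining residue conditions following by symmetry. One assertion in your first step is not accurate, though: the inversion symmetry $\hat M(\mu^{-1})=\sigma_1\hat M(\mu)\sigma_1$ is \emph{not} automatic for \eqref{solM} with three independent real parameters. A direct computation gives $\hat M(\mu^{-1})-\sigma_1\hat M(\mu)\sigma_1=(2\hat\kappa_2-\hat\alpha_+)\left(\begin{smallmatrix}0&\ii\\-\ii&0\end{smallmatrix}\right)$, so this symmetry is precisely equivalent to \eqref{alpha}; the paper imposes it via the equivalent condition $\hat M(0)=\sigma_1\hat M(\infty)\sigma_1=I$ \emph{before} touching the residues, and only then reads \eqref{kappa1} and \eqref{kappa2} off the two rows of \eqref{Res-sol-1}. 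Your argument still closes, because the residue condition by itself delivers the missing relation: with $\hat\varkappa=\ii\eul^{-\ii\theta}\doublehat\varkappa$, the first row of \eqref{Res-sol-1} reads $\hat M_{12}(\eul^{\ii\theta})=-\doublehat\varkappa\,\hat\kappa_1$, whose imaginary part is exactly $2\hat\kappa_2-\hat\alpha_+=0$ and whose real part is \eqref{kappa1}, while the second row $\hat M_{22}(\eul^{\ii\theta})=-\doublehat\varkappa\,\hat\kappa_2$ has identically vanishing imaginary part and yields \eqref{kappa2}; so your ``$3\times3$ real system'' is the correct count and the solved-for matrix satisfies all the symmetries a posteriori. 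The other difference is organizational: the paper derives $c=1$ and the reality structure of the pole coefficients from the symmetries applied to a general coefficient ansatz, whereas you posit them; for the existence statement ``provided that $c=1$'' this is legitimate, but the symmetry verification must then actually be carried out rather than declared automatic, since, as noted, one of the three symmetries encodes the nontrivial constraint \eqref{alpha}.
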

%-------------------%

%-------------------%
\begin{proof} 
Since $\hat M(\mu)\equiv\hat M(y,t,\mu)$ is solution of the soliton RH problem whose jump condition is trivial, it is a rational function, whose pole structure is specified by the singularity conditions \eqref{sing-M-hat} at $\mu=\pm1$ and by the residue conditions \eqref{Res-sol} at $\mu=\pm\eul^{\pm\ii\theta}$:
\begin{equation}\label{M-2}
\hat M(\mu) 
=I+\frac{\ii}{2}\frac{\hat\alpha_+}{\mu-1}\begin{pmatrix}-c & 1 \\-c & 1\end{pmatrix}-\frac{\ii}{2}\frac{\hat\alpha_+}{\mu+1}\begin{pmatrix}c & 1 \\-c & -1\end{pmatrix}
+\begin{pmatrix}\frac{c_1}{\mu-\eul^{\ii\theta}}+\frac{c_3}{\mu+\eul^{-\ii\theta}}&\frac{\tilde c_1}{\mu-\eul^{-\ii\theta}}+\frac{\tilde c_3}{\mu+\eul^{\ii\theta}}\\
\frac{c_2}{\mu-\eul^{\ii\theta}}+\frac{c_4}{\mu+\eul^{-\ii\theta}} &\frac{\tilde c_2}{\mu-\eul^{-\ii\theta}}+\frac{\tilde c_4}{\mu+\eul^{\ii\theta}}\end{pmatrix}
\end{equation}
with some $\hat\alpha_+(y,t)$, $c_j(y,t)$, $\tilde c_j(y,t)$, and $c$. We will specify the coefficients using the symmetries \eqref{sym-M-hat}. The symmetry $\hat M^{(1)}(-\mu)=\sigma_3\sigma_1\hat M^{(2)}(\mu)$ shows that $c=1$, $\tilde c_1=c_4$, $\tilde c_2=-c_3$, $\tilde c_3=c_2$, and $\tilde c_4=-c_1$. On the other hand, the symmetry $\hat M^{(1)}(-\bar\mu)=\sigma_3\overline{\hat M^{(1)}(\mu)}$ shows that $c_3=-\bar c_1$ and $c_4=\bar c_2$. Thus \eqref{M-2} takes the form
\[
\hat M(\mu) 
=I+\frac{\ii}{2}\frac{\hat\alpha_+}{\mu-1}\begin{pmatrix}-1 & 1 \\-1 & 1\end{pmatrix}-\frac{\ii}{2}\frac{\hat\alpha_+}{\mu+1}\begin{pmatrix}1 & 1 \\-1 & -1\end{pmatrix}
+\begin{pmatrix}\frac{c_1}{\mu-\eul^{\ii\theta}}+\frac{-\bar c_1}{\mu+\eul^{-\ii\theta}}&\frac{\bar c_2}{\mu-\eul^{-\ii\theta}}+\frac{c_2}{\mu+\eul^{\ii\theta}}\\
\frac{c_2}{\mu-\eul^{\ii\theta}}+\frac{\bar c_2}{\mu+\eul^{-\ii\theta}} &\frac{\bar c_1}{\mu-\eul^{-\ii\theta}}+\frac{-c_1}{\mu+\eul^{\ii\theta}}\end{pmatrix}.
\]
The symmetry $\hat M^{(1)}(-\mu^{-1})=\sigma_3\hat M^{(1)}(\mu)$ shows that $c_3=c_1\eul^{-2\ii\theta}$ and $c_4=-c_2\eul^{-2\ii\theta}$, so that $\bar c_j=-c_j\eul^{-2\ii\theta}$ for $j=1,2$, that is, $c_j(y,t)=\ii\eul^{\ii\theta}\hat\kappa_j(y,t)$ with $\hat\kappa_j(y,t)\in\D{R}$. Thus we get \eqref{solM}.

Then, using $\hat M(0)=\sigma_1\hat M(\infty)\sigma_1=I$, it follows that $\hat\alpha_+=2\hat\kappa_2$, that is, \eqref{alpha}. Introducing $\doublehat\varkappa(y,t)\coloneqq\hat\delta\,\eul^{-2\hat p(y,t,\eul^{\ii\theta})}$ so that $\hat\varkappa(y,t)=\ii\eul^{-\ii\theta}\doublehat\varkappa(y,t)$ and substituting \eqref{solM} into the residue condition \eqref{Res-sol-1} at $\eul^{\ii\theta}$, we find \eqref{kappa1} on the first row and then \eqref{kappa2} on the second one.
\end{proof}
%-------------------%
 
%-------------------%
%:rem 5.3
%-------------------%
\begin{remark}
Assume that the data of our soliton RH problem are derived from the spectral data corresponding to some initial data $u_0(x)$, as in Remark \ref{rem:par}. Then, it directly follows that $c=1$. Since $b(\mu)\equiv 0$ we indeed have (see Remark \ref{rem:sing} and \eqref{c}) $\rho=0$, $b_1=0$, and $a_1^2=1$; thus $c=1$.
\end{remark}
%-------------------%

According to Section~\ref{sec:7}, a solution of the soliton RH problem gives rise to a solution (at least, locally, in the $(y,t)$ variables) of the mCH equation. Thus, Proposition~\ref{lem-M} provides a family of one-soliton solutions parametrized by two real parameters $\theta\in(0,\frac{\pi}{2})$ and $\hat\delta\neq 0$. 

%-------------------%
%:prop 5.4
%-------------------%
\begin{proposition}
The one-soliton solution $\hat u\equiv\hat u_{\theta,\hat\delta}$ of parameters $(\theta,\hat\delta)$ has the following form in the $(y,t)$-scale:
\begin{subequations}\label{usolitonz}
\begin{equation}\label{usoliton}
\hat u(y,t)=4\tan^2\theta\,\frac{z^2(y,t)+2\cos^2\theta\cdot z(y,t)+\cos^2\theta}{(z^2(y,t)+2z(y,t)+\cos^2\theta)^2}z(y,t),
\end{equation}
where
\begin{equation}\label{z}
z(y,t)=2\hat\delta\sin\theta\,\eul^{\sin\theta\left(y-\frac{2}{\cos^2\theta}t\right)}.
\end{equation}
\end{subequations}
\end{proposition}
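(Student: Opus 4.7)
The plan is to substitute the explicit expression \eqref{solM} of the RH solution from Proposition~\ref{lem-M} into the recovery formula \eqref{recsol} from Remark~\ref{rem-u}, namely $\hat u(y,t) = -a_2(y,t) a_1(y,t) - a_3(y,t) a_1^{-1}(y,t)$, and then reduce the resulting expression to the claimed closed form in terms of $z(y,t)$.

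First, I would expand the rational function $\hat M(y,t,\mu)$ defined by \eqref{solM} in powers of $(\mu-\ii)$, matching \eqref{hat-M-i}, so as to read off $a_1,a_2,a_3$. Only three of the four simple-pole factors $\mu\mapsto 1/(\mu\mp 1)$, $\mu\mapsto 1/(\mu\mp\eul^{\pm\ii\theta})$ need to be evaluated at $\mu=\ii$ (for $a_1$) and differentiated once (for $a_2,a_3$). Using the identities $\frac{\ii}{\ii-1}=\frac{1-\ii}{2}$, $\frac{\ii}{\ii+1}=\frac{1+\ii}{2}$, and simplifications like
\[
\frac{\ii\eul^{\ii\theta}}{\ii-\eul^{\ii\theta}}+\frac{\ii\eul^{-\ii\theta}}{\ii+\eul^{-\ii\theta}}=\frac{1+\sin\theta}{\cos\theta},
\]
one verifies (as required by the symmetries, see Remark~\ref{rem3-1}) that the off-diagonal contributions to $\hat M(y,t,\ii)$ from the two groups of poles cancel exactly (this uses $\hat\alpha_+=2\hat\kappa_2$, which is \eqref{alpha}), so that $\hat M(y,t,\ii)=\mathrm{diag}(a_1,a_1^{-1})$ with
\[
a_1(y,t) = 1-\hat\kappa_2(y,t) + \hat\kappa_1(y,t)\cdot\frac{1+\sin\theta}{\cos\theta}.
\]
The same tactic, applied to the coefficient of $(\mu-\ii)$ in the expansions of $\hat M_{12}$ and $\hat M_{21}$, yields explicit expressions for $a_2$ and $a_3$ in terms of $\hat\kappa_1,\hat\kappa_2,\hat\alpha_+$, and $\theta$.

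Second, I would substitute the expressions for $\hat\kappa_1,\hat\kappa_2$ in terms of $\doublehat\varkappa$ given by \eqref{kappa2}--\eqref{kappa1}. Multiplying \eqref{kappa2} by $-4\doublehat\varkappa\sin^2\theta$ and introducing $z\coloneqq 2\sin\theta\cdot\doublehat\varkappa$ (which coincides with \eqref{z} in view of \eqref{varkappap}), one gets the compact formulas
\[
\hat\kappa_2 = \frac{-2z\sin\theta}{z^2+2z+\cos^2\theta},\qquad \hat\kappa_1 = \frac{2\sin\theta\cos\theta}{z^2+2z+\cos^2\theta},
\]
so that $a_1$ becomes a simple rational function of $z$. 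After inserting all of this into $\hat u=-a_2 a_1-a_3 a_1^{-1}$ and collecting terms with common denominator $(z^2+2z+\cos^2\theta)^2$, the numerator should factor as $4\tan^2\theta\, z\,(z^2+2\cos^2\theta\cdot z+\cos^2\theta)$, producing \eqref{usoliton}.

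The only substantive obstacle is the algebraic simplification at the last step: the raw expression for $\hat u$ contains contributions from six poles of $\hat M$ and their derivatives, with coefficients depending on $\sin\theta,\cos\theta,\hat\kappa_1,\hat\kappa_2$. The bookkeeping is streamlined considerably by systematically exploiting the three symmetry relations \eqref{sym-M-hat} already used in the proof of Proposition~\ref{lem-M} (they force many partial sums to be real and many others to vanish), and by using the identity $\det\hat M\equiv 1$ as an algebraic consistency check at $\mu=\ii$, which amounts precisely to $a_1\cdot a_1^{-1}=1$ and gives a quadratic relation between $\hat\kappa_1$ and $\hat\kappa_2$ that removes one degree of algebraic freedom from the final simplification.
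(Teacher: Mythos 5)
Your proposal is correct and follows essentially the same route as the paper: it substitutes the explicit rational solution \eqref{solM} into the recovery formula \eqref{recsol}, reads off $a_1=\hat M_{11}(\ii)$, $a_2=\partial_\mu\hat M_{12}(\ii)$, $a_3=\partial_\mu\hat M_{21}(\ii)$, rewrites $\hat\kappa_1,\hat\kappa_2$ in terms of $z=2\doublehat\varkappa\sin\theta$ exactly as in \eqref{kappa12z}, and simplifies (note your $\frac{1+\sin\theta}{\cos\theta}$ equals the paper's $\frac{\cos\theta}{1-\sin\theta}$, so the formulas for $a_1$ agree). The final algebra indeed yields $a_1=\frac{z+1+\sin\theta}{z+1-\sin\theta}$ and the claimed factorization of the numerator, so no gap.
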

%-------------------%

%-------------------%
\begin{proof}
Let $z(y,t)$ be defined by
\begin{equation}\label{zdef}
z(y,t)\coloneqq 2\doublehat\varkappa(y,t)\sin\theta.
\end{equation}
Then, $z(y,t)=2\hat\delta\sin\theta\,\eul^{\sin\theta\left(y-\frac{2}{\cos^2\theta}t\right)}$. Thus, $z$ is real-valued. Moreover, $z(y,t)>0$ if $\hat\delta>0$ and $z(y,t)<0$ if $\hat\delta<0$. Using \eqref{kappa2}, \eqref{kappa1}, and \eqref{zdef} we get the following expressions of $\hat\kappa_2$ and $\hat\kappa_1$:
\begin{equation}\label{kappa12z}
\hat\kappa_2=-\frac{2z\sin\theta}{z^2+2z+\cos^2\theta}\quad\text{and}\quad\hat\kappa_1=-\frac{\cos\theta}{z}\hat\kappa_2=\frac{2\sin\theta\cos\theta}{z^2+2z+\cos^2\theta}.
\end{equation}
In order to obtain the formula for the soliton solution $\hat u\equiv\hat u(y,t)$, we use the relation
\begin{equation}\label{recsolu}
\hat u=-a_2a_1-a_3a_1^{-1}
\end{equation}
from \eqref{recsol}. To compute $a_1\equiv a_1(y,t)$ we observe that $a_1=\hat M_{11}(\ii)$. We thus obtain
\[
a_1=1-\frac{\hat\alpha_+}{2}-\ii\kappa_1\frac{1+\eul^{2\ii\theta}}{2(1-\sin\theta)}=1-\hat\kappa_2+\hat\kappa_1\frac{\cos\theta}{1-\sin\theta},
\] 
using the relation $\frac{\hat\alpha_+}{2}=\hat\kappa_2$ from \eqref{alpha}. Using the expressions of $\hat\kappa_1$ and $\hat\kappa_2$ from \eqref{kappa12z} we get
\begin{subequations}\label{a123}
\begin{equation}\label{a_1}
a_1=\frac{z+1+\sin\theta}{z+1-\sin\theta}.
\end{equation}
To compute $a_2\equiv a_2(y,t)$ and $a_3\equiv a_3(y,t)$ we observe that $a_2=\partial_\mu\hat M_{12}(\ii)$ and $a_3=\partial_\mu\hat M_{21}(\ii)$. Using in addition the expression of $\hat\kappa_2$ from \eqref{kappa12z} we obtain
\begin{align}\label{a_2}
a_2&=\frac{\sin\theta}{1+\sin\theta}\hat\kappa_2=-\frac{2z\sin^2\theta}{(1+\sin\theta)(z^2+2z+\cos^2\theta)},\\
\label{a_3}
a_3&=\frac{\sin\theta}{1-\sin\theta}\hat\kappa_2=-\frac{2z\sin^2\theta}{(1-\sin\theta)(z^2+2z+\cos^2\theta)}.
\end{align}
\end{subequations}
Then, substituting \eqref{a123} into \eqref{recsolu}, we arrive at \eqref{usoliton}.
\end{proof}
%-------------------%

It follows from \eqref{usoliton} that if $\hat\delta>0$, then for any $t\geq 0$, $\hat u(y,t)$ is a smooth function of $y$ having a single peak and (exponentially) approaching $0$ as $y\to\pm\infty$. On the other hand, if $\hat\delta<0$, then $\tilde u$ has two singular points
corresponding to $z=-1\pm\sin\theta$.

Now let us discuss the change of variable $(y,t)\mapsto(x,t)$, which can be specified explicitly. This change of variable is associated with $\tilde u_{\theta,\hat\delta}$, that is, it is given by \eqref{x(y,t)} where $\hat\mu_1$ and $\hat\mu_2$ are defined in terms of $\hat M\equiv\hat M_{\theta,\hat\delta}$.

%-------------------%
%:prop 5.5
%-------------------%
\begin{proposition}\label{lem-change-solitons} 
The change of variable $x(y,t)$ associated with the soliton $\tilde u_{\theta,\hat\delta}$ takes the following form:
\begin{equation}\label{x(y)sol}
x(y,t)=y+2\ln\frac{z(y,t)+1+\sin\theta}{z(y,t)+1-\sin\theta}.
\end{equation}
\end{proposition}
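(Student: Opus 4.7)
The plan is to invoke the formula for the change of variables already established in Remark~\ref{rem-u}, namely
\[
x(y,t) = y + 2\ln a_1(y,t),
\]
where $a_1(y,t) = \hat M_{11}(y,t,\ii)$ is read off from the solution $\hat M \equiv \hat M_{\theta,\hat\delta}$ of the soliton RH problem. Once this reduction is in place, the result is immediate from the computation of $a_1$ performed in the proof of the preceding proposition.

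Concretely, the first step is to recall (or briefly re-derive from the expansion $\hat\mu_1 = a_1$, $\hat\mu_2 = a_1^{-1}$ together with $\hat\mu_1 \hat\mu_2 = 1$) that the one-soliton's associated change of variable is governed solely by the $(1,1)$-entry of $\hat M$ at $\mu = \ii$. The second step is to substitute the expression \eqref{a_1}, namely
\[
a_1(y,t) = \frac{z(y,t) + 1 + \sin\theta}{z(y,t) + 1 - \sin\theta},
\]
with $z(y,t) = 2\hat\delta\sin\theta\,\eul^{\sin\theta(y - 2t/\cos^2\theta)}$ as in \eqref{z}. Direct substitution yields \eqref{x(y)sol}.

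The only issue worth flagging is the range of validity of the logarithm, i.e.\ the positivity (or at least non-vanishing) of $a_1(y,t)$. When $\hat\delta > 0$ we have $z(y,t) > 0$ for all $(y,t)$, so both $z + 1 \pm \sin\theta > 0$ since $\sin\theta \in (0,1)$, and $a_1 > 0$ globally; the formula \eqref{x(y)sol} then defines $x(\cdot,t)$ as a smooth diffeomorphism of $\D{R}$ onto its image, consistent with the smoothness of $\hat u_{\theta,\hat\delta}$. When $\hat\delta < 0$, the denominator $z + 1 - \sin\theta$ and numerator $z + 1 + \sin\theta$ vanish precisely at the two values of $z$ signaled after \eqref{usoliton}, matching the two singular points of the non-smooth soliton. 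So there is no true obstacle; the main point is simply to cite Remark~\ref{rem-u} and substitute the already-computed $a_1$.
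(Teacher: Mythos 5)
Your proposal is correct and coincides with the paper's own proof: the paper likewise cites the formula $x(y,t)=y+2\ln a_1(y,t)$ from \eqref{change} (Remark~\ref{rem-u}) and substitutes the expression \eqref{a_1} for $a_1(y,t)=\hat M_{11}(y,t,\ii)$. Your additional remarks on the sign of $a_1$ and the singular case $\hat\delta<0$ are consistent with the discussion following the proposition but are not needed for the proof itself.
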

%-------------------%

%-------------------%
\begin{proof}
As we have shown in Section \ref{sec:7}, $x(y,t)$ can be given by \eqref{change}:
\begin{equation}\label{change-2}
x(y,t)=y+2\ln a_1(y,t),
\end{equation}
where $a_1(y,t)=\hat M_{11}(y,t,\ii)$. Substituting \eqref{a_1} into \eqref{change-2}, we obtain \eqref{x(y)sol}.
\end{proof}
%-------------------%

%-------------------%
%:cor 5.6
%-------------------%
\begin{corollary}
Let $x(y,t)$ be the change of variable associated with $\tilde u_{\theta,\hat\delta}$. Its regularity properties are as follows.
\begin{enumerate}[\rm(a)]
\item
If $\hat\delta<0$, then $x(\,\cdot\,,t)$ is singular: there exist values of $y$ at which $x(y,t)$ is infinite.
\item
If $\hat\delta>0$, then $x(\,\cdot\,,t)\colon\D{R}\to\D{R}$ is a regular map. Moreover, it has the following additional properties:
\begin{enumerate}[\rm(i)]
\item
If $\theta\in(0,\frac{\pi}{3})$, then $x(\,\cdot\,,t)\colon\D{R}\to\D{R}$ is a diffeomorphism for any $t\geq 0$.
\item
If $\theta=\frac{\pi}{3}$, then $x(\,\cdot\,,t)\colon\D{R}\to\D{R}$ is a bijection, but the derivative of the inverse map has a singularity, and only one.
\item
If $\theta\in(\frac{\pi}{3},\frac{\pi}{2})$, then $x(\,\cdot\,,t)$ is not monotonous. More precisely, there are three intervals of monotonicity.
\end{enumerate}
\end{enumerate}
\end{corollary}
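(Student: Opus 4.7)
The plan is to analyze the explicit formula
\[
x(y,t)=y+2\ln\frac{z(y,t)+1+\sin\theta}{z(y,t)+1-\sin\theta},\qquad z(y,t)=2\hat\delta\sin\theta\,\eul^{\sin\theta(y-\frac{2t}{\cos^2\theta})},
\]
provided by Proposition~\ref{lem-change-solitons}, treating $t$ as fixed. The key observation is that $z$ is an exponential in $y$, so for fixed $t$ the map $y\mapsto z(y,t)$ is a diffeomorphism of $\D{R}$ onto $(0,+\infty)$ when $\hat\delta>0$ and onto $(-\infty,0)$ when $\hat\delta<0$. Thus the regularity analysis of $x(\,\cdot\,,t)$ reduces to studying the real function $z\mapsto\Phi(z)\coloneqq\ln\frac{z+1+\sin\theta}{z+1-\sin\theta}$ on the appropriate half-line, and the sign and zeros of $x_y$ reduce to the sign and zeros of a quadratic in $z$.

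For part~(a), when $\hat\delta<0$, note that $1\pm\sin\theta\in(0,2)$, so the equations $z+1-\sin\theta=0$ and $z+1+\sin\theta=0$ are both achieved (at distinct values $y_1(t)\neq y_2(t)$) as $z$ runs through $(-\infty,0)$. At each of these two values the argument of the logarithm is either zero or infinite, so $x(y,t)$ blows up. For part~(b), when $\hat\delta>0$ we have $z>0$, hence $z+1\pm\sin\theta>0$, and $x(\,\cdot\,,t)\in C^\infty(\D{R},\D{R})$. Moreover, $\Phi(z)\to 0$ both as $z\to 0^+$ (after adding the constant $\ln\frac{1+\sin\theta}{1-\sin\theta}$) and as $z\to+\infty$, so $x(y,t)-y$ is bounded; in particular $x(\,\cdot\,,t)$ has limits $\pm\infty$ at $\pm\infty$.

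The heart of the proof is the derivative computation. Using $z_y=z\sin\theta$, a direct calculation gives
\[
x_y=\frac{g(z)}{(z+1)^2-\sin^2\theta},\qquad g(z)\coloneqq z^2+2(1-2\sin^2\theta)z+\cos^2\theta,
\]
where the denominator is strictly positive for $z>0$. The discriminant of $g$ equals $4\sin^2\theta(4\sin^2\theta-3)$, whose sign is governed exactly by the threshold $\theta=\pi/3$. If $\theta\in(0,\pi/3)$ the discriminant is negative, so $g>0$ and $x_y>0$ everywhere, proving~(i). If $\theta=\pi/3$ the discriminant vanishes, $g$ has a double root at $z_0=2\sin^2\theta-1=\tfrac12$, so $x_y\geq 0$ with a single isolated zero; $x(\,\cdot\,,t)$ is then still a bijection but $(x^{-1})'$ blows up at the image point, proving~(ii). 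If $\theta\in(\pi/3,\pi/2)$ the discriminant is positive, and one checks that both roots of $g$ are positive (the sum $2(2\sin^2\theta-1)$ and the product $\cos^2\theta$ are both positive, so both roots have the same sign and in fact both are positive); between them $g<0$ and outside them $g>0$, yielding three intervals of monotonicity and proving~(iii).

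The only nontrivial piece of bookkeeping is verifying that, for $\theta>\pi/3$, both roots of $g$ actually lie in $(0,\infty)$ (the range of $z$), which is where the exponential parametrization $y\leftrightarrow z$ converts them into genuine critical points of $x(\,\cdot\,,t)$. This follows from the sign of the Vieta data above, but equivalently from the identity $(2\sin^2\theta-1)^2-\sin^2\theta(4\sin^2\theta-3)=1-\sin^2\theta>0$, which guarantees that the smaller root $(2\sin^2\theta-1)-\sin\theta\sqrt{4\sin^2\theta-3}$ is still positive. I expect this sign check to be the main (albeit mild) obstacle; everything else is a direct consequence of the sign analysis of the quadratic $g(z)$ combined with the fact that $y\mapsto z(y,t)$ is a strictly increasing diffeomorphism onto its image.
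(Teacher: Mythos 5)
Your proposal is correct and takes essentially the same route as the paper: it computes $x_y$ as the rational function $R(z)=\frac{z^2+2z\cos2\theta+\cos^2\theta}{z^2+2z+\cos^2\theta}$ of the exponential variable $z$ (your $g(z)$ and denominator coincide with the paper's via $\cos2\theta=1-2\sin^2\theta$), and the case analysis via the discriminant $4\sin^2\theta(4\sin^2\theta-3)=-4\sin\theta\sin3\theta$ with threshold $\theta=\pi/3$ is exactly the paper's argument. Your explicit checks that both roots lie in $(0,\infty)$ for $\theta>\pi/3$ and that $x-y$ is bounded (ensuring surjectivity) are mild additions the paper leaves implicit.
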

%-------------------%

The possible singularities of $x(y,t)$ are those for $\hat u(y,t)$: they correspond to $z=-1\pm\sin\theta$. Therefore, if $\hat\delta>0$, then $z(y,t)>0$ and thus there are no singularities, whereas if $\hat\delta< 0$, then $x(y,t)$ is singular at those $y$ where $z=-1\pm\sin\theta$.

We now consider the case $\hat\delta>0$ (and thus $z(y,t)>0$). The derivative $\partial_yx(y,t)\equiv x_y(y,t)$ is given by
\begin{equation}\label{xyR}
x_y(y,t)=R(z(y,t)),\text{ where }R(z)=\frac{z^2+2z\cos2\theta+\cos^2\theta}{z^2+2z+\cos^2\theta}.
\end{equation}
It follows that $R(0)=R(\infty)=1$. Moreover, we have the following:
\begin{enumerate}[1)]
\item
If $\theta\in(0,\frac{\pi}{3})$, then $R(z)>0$ for all $z\geq 0$.
\item
If $\theta=\frac{\pi}{3}$, then $z=\frac{1}{2}$ is a double zero of $R(z)$.
\item
If $\theta\in(\frac{\pi}{3},\frac{\pi}{2})$, then
\begin{enumerate}[a)]
\item
$R(z)>0$ for $z\in [0,-\cos2\theta-\sqrt{-\sin\theta\cdot\sin3\theta})\cup(-\cos2\theta+\sqrt{-\sin\theta\cdot\sin3\theta}),+\infty)$,
\item
$R(z)<0$ for $z\in(-\cos2\theta-\sqrt{-\sin\theta\cdot\sin3\theta},-\cos2\theta+\sqrt{-\sin\theta\cdot\sin3\theta})$.
\end{enumerate}
\end{enumerate}
It follows that for $\theta\in(0,\frac{\pi}{3})$ the solution is smooth (both in the $(y,t)$ and the $(x,t)$ variables). On the other hand, for $\theta=\frac{\pi}{3}$ the solution $\tilde u(x,t)=\hat u(y(x,t),t)$ is given in parametric form by
\begin{subequations}\label{fsmooth}
\begin{align}\label{fsmooth-u}
\hat u(y,t)&=48z(y,t)\frac{4z^2(y,t)+2z(y,t)+1}{(4z^2(y,t)+8z(y,t)+1)^2},\\
\label{fsmooth-z}
z(y,t)&=\hat\delta\sqrt{3}\,\eul^{\frac{\sqrt{3}}{2}y}\eul^{-4\sqrt{3}t},\\
\label{fsmooth-x}
x(y,t)&=y+2\ln\frac{\hat\delta\sqrt{3}\,\eul^{\frac{\sqrt{3}}{2}y}\eul^{-4\sqrt{3}t}+1+\frac{\sqrt{3}}{2}}{\hat\delta\sqrt{3}\,\eul^{\frac{\sqrt{3}}{2}y}\eul^{-4\sqrt{3}t}+1-\frac{\sqrt{3}}{2}}.
\end{align}
\end{subequations}
In particular, in the latter case \eqref{xyR} and \eqref{fsmooth-u} give 
\[
x_y=\frac{2\hat z^2}{2\hat z^2+6\hat z+3}\quad\text{and}\quad\hat u_y=-24\sqrt{3}\,\frac{\hat z^3(\hat z+1)(2\hat z+1)}{(2\hat z^2+6\hat z+3)^3}\,,
\]
where $\hat z\coloneqq z-\frac{1}{2}$. Thus $x_y$ has a double zero at $\hat z=0$, which corresponds to the crest of the solution, whereas, at the same point, $\hat u_y$ has a triple zero, so that $\tilde u_x=\hat u_y/x_y=0$. Consequently, $\tilde u(x,t)$ is still continuous, with a continuous first derivative $\tilde u_x$ that vanish at the crest, but the higher order derivatives become unbounded at this point, e.g., $\tilde u_{xx}\sim-\frac{3}{2}\,\hat z^{-2}$ as $\hat z\to 0$. This unusual (finite) smoothness property of the soliton corresponding to the parameters separating (infinitely) smooth solitons from multivalued solutions (associated with the breaking of bijectivity of $x(\,\cdot\,,t)\colon\D{R}\to\D{R}$) was first reported by Matsuno \cite{M13}, where the soliton solutions were constructed using a direct method. 

Thus we arrive at the following description of the one-soliton solutions (consistent with \cite{M13}*{see (3.4) and (3.14)}):

%-------------------%
%:thm 5.7
%-------------------%
\begin{theorem} \label{thm:one-soliton}
The mCH equation in the form \eqref{mCH2} has a family of one-soliton solutions, regular as well as non-regular, $\tilde u(x,t)\equiv\tilde u_{\theta,\hat\delta}(x,t)$, parametrized by two parameters, $\hat\delta>0$ and $\theta\in(0,\frac{\pi}{2})$. These solitons $\tilde u(x,t)\equiv\hat u(y(x,t),t)$ are given, in parametric form, by
\begin{subequations}\label{solitons}
\begin{align}\label{solution}
\hat u(y,t)&=4\tan^2\theta\frac{z^2(y,t)+2\cos^2\theta\cdot z(y,t)+\cos^2\theta}{(z^2(y,t)+2z(y,t)+\cos^2\theta)^2}z(y,t),\\
\label{changesol}
x(y,t)&=y+2\ln\frac{z(y,t)+1+\sin\theta}{z(y,t)+1-\sin\theta},\\
z(y,t)&=2\hat\delta\sin\theta\,\eul^{y\sin\theta}\eul^{-\frac{2\sin\theta}{\cos^2\theta}t}.
\end{align}
\end{subequations}
They have different properties depending on the value of the parameter $\theta$:
\begin{enumerate}[\rm(i)]
\item
For $\theta\in(0,\frac{\pi}{3})$, the one-soliton solution $\tilde u(x,t)$ is smooth in the $(x,t)$ variables.
\item
For $\theta=\frac{\pi}{3}$, then $\tilde u(x,t)$ is given by \eqref{fsmooth} and has finite smoothness: $u$ and $u_x$ are continuous with $\tilde u_x(x,t)=0$ at the crest when $z(y(x,t),t)=\frac{1}{2}$, but near the crest the higher derivatives become unbounded as $z\to\frac{1}{2}$.
\item
If $\theta\in(\frac{\pi}{3},\frac{\pi}{2})$, then $\tilde u(x,t)=\hat u(y,t)$ is regular in $(y,t)$, multivalued in $(x,t)$, and loop-shaped.
\end{enumerate}
\end{theorem}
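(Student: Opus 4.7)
The plan is to assemble the work already done in the section: the formulas \eqref{solution} and \eqref{changesol}, together with the exponential profile for $z(y,t)$, are exactly \eqref{usoliton}, \eqref{x(y)sol}, and \eqref{z}; each was derived from the solution $\hat M_{\theta,\hat\delta}$ of the soliton RH problem (Proposition \ref{lem-M}) through the recovery formulas \eqref{recsol} and \eqref{change} established in Section \ref{sec:7}. Existence and the explicit form of the family $\accol{\tilde u_{\theta,\hat\delta}}$ are therefore in hand, and what remains is to justify the regularity trichotomy (i)--(iii).

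All three cases hinge on the sign of $x_y(y,t)=R(z(y,t))$, with $R(z)$ as in \eqref{xyR}. Since $\hat\delta>0$ one has $z(y,t)>0$ throughout $\D{R}\times\D{R}_+$, so the denominator of $R$ never vanishes and the sign of $x_y$ coincides with that of $N(z)\coloneqq z^2+2z\cos 2\theta+\cos^2\theta$. A short trigonometric computation gives $\operatorname{disc}(N)=-4\sin\theta\sin 3\theta$. For $\theta\in(0,\pi/3)$ we have $\sin 3\theta>0$, hence $N>0$ everywhere and $x_y>0$, so $y\mapsto x(y,t)$ is a global diffeomorphism of $\D{R}$; composing with the smooth $\hat u(y,t)$ gives a smooth $\tilde u(x,t)$, proving (i). For $\theta\in(\pi/3,\pi/2)$ we have $\sin 3\theta<0$ and $\cos 2\theta<0$, so $N$ has two distinct roots $z_\pm=-\cos 2\theta\pm\sqrt{-\sin\theta\sin 3\theta}$, both positive (since $\cos^2\theta=\cos^2 2\theta+\sin\theta\sin 3\theta$ ensures $-\cos 2\theta>\sqrt{-\sin\theta\sin 3\theta}$), between which $N<0$; hence $x_y$ changes sign twice, $y\mapsto x(y,t)$ has three monotonicity intervals, and the profile is multivalued and loop-shaped, establishing (iii).

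The delicate case is (ii), $\theta=\pi/3$, where the discriminant vanishes and $N$ has a double root at $z_*=1/2$. Introducing $\hat z\coloneqq z-1/2$ and reading off the Taylor expansions from \eqref{fsmooth}, one obtains $x_y=\frac{2\hat z^2}{2\hat z^2+6\hat z+3}$ and $\hat u_y=-24\sqrt{3}\,\frac{\hat z^3(\hat z+1)(2\hat z+1)}{(2\hat z^2+6\hat z+3)^3}$. From $\tilde u_x=\hat u_y/x_y=\ord(\hat z)$ it follows that $\tilde u$ and $\tilde u_x$ remain continuous and $\tilde u_x$ vanishes at the crest. Differentiating once more and using $\hat z_y=(\sqrt{3}/4)(1+\ord(\hat z))$ at the crest, the numerator of $\tilde u_{xx}=\partial_y(\hat u_y/x_y)/x_y$ tends to a nonzero constant while the denominator $x_y$ is $\ord(\hat z^2)$, giving $\tilde u_{xx}\sim -\tfrac{3}{2}\hat z^{-2}$ and hence a blow-up of the second derivative. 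This is the only genuinely subtle point in the argument; once it is in place, the theorem is proved. The claim of the theorem then collects the three regimes.
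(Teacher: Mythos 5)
Your proposal is correct and follows essentially the same route as the paper: the theorem is obtained by assembling \eqref{usoliton}, \eqref{x(y)sol}, and \eqref{z} from Propositions \ref{lem-M}--\ref{lem-change-solitons} via the recovery formulas of Section \ref{sec:7}, and the trichotomy (i)--(iii) is read off from the sign of $x_y=R(z)$ in \eqref{xyR}, with the borderline case $\theta=\pi/3$ handled by the same $\hat z=z-\tfrac12$ expansion giving $\tilde u_x=\ord(\hat z)$ and $\tilde u_{xx}\sim-\tfrac32\hat z^{-2}$. Your explicit discriminant computation for $N(z)=z^2+2z\cos2\theta+\cos^2\theta$ and the positivity check of its roots are just slightly more detailed versions of the paper's case analysis.
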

%-------------------%

%---------------------------------------------------------%
%:acknowledgment
%---------------------------------------------------------%
\section*{Acknowledgment}
We would like to thank Y.~Matsuno for pointing out an error in item (ii) of the last theorem in the first version of this paper.
%---------------------------------------------------------%
%:bib
%---------------------------------------------------------%
\begin{bibdiv}
%---------------------------------------------------------%
\begin{biblist}
%---------------------------------------------------------%
\bib{BC}{article}{
   author={Beals, R.},
   author={Coifman, R. R.},
   title={Scattering and inverse scattering for first order systems},
   journal={Comm. Pure Appl. Math.},
   volume={37},
   date={1984},
   number={1},
   pages={39--90},
}
\bib{BS06}{article}{
   author={Boutet de Monvel, Anne},
   author={Shepelsky, Dmitry},
   title={Riemann--Hilbert approach for the Camassa--Holm equation on the
   line},
   journal={C. R. Math. Acad. Sci. Paris},
   volume={343},
   date={2006},
   number={10},
   pages={627--632},
}
\bib{BS08}{article}{
   author={Boutet de Monvel, Anne},
   author={Shepelsky, Dmitry},
   title={Riemann-Hilbert problem in the inverse scattering for the
   Camassa-Holm equation on the line},
   conference={
      title={Probability, geometry and integrable systems},
   },
   book={
      series={Math. Sci. Res. Inst. Publ.},
      volume={55},
      publisher={Cambridge Univ. Press},
      place={Cambridge},
   },
   date={2008},
   pages={53--75},
}
\bib{BS13}{article}{
   author={Boutet de Monvel, Anne},
   author={Shepelsky, Dmitry},
   title={A Riemann--Hilbert approach for the Degasperis--Procesi equation},
   journal={Nonlinearity},
   volume={26},
   date={2013},
   number={7},
   pages={2081--2107},
}  
\bib{BS15}{article}{
   author={Boutet de Monvel, Anne},
   author={Shepelsky, Dmitry},
   title={The Ostrovsky--Vakhnenko equation by a Riemann--Hilbert approach},
   journal={J. Phys. A},
   volume={48},
   date={2015},
   number={3},
   pages={035204, 34},
}
\bib{BS16}{article}{
   author={Boutet de Monvel, Anne},
   author={Shepelsky, Dmitry},
   author={Zielinski, Lech},
   title={A Riemann--Hilbert approach for the Novikov equation},
   journal={SIGMA Symmetry Integrability Geom. Methods Appl.},
   volume={12},
   date={2016},
   pages={Paper No. 095, 22},
}
\bib{BC07}{article}{
   author={Bressan, Alberto},
   author={Constantin, Adrian},
   title={Global conservative solutions of the Camassa--Holm equation},
   journal={Arch. Ration. Mech. Anal.},
   volume={183},
   date={2007},
   number={2},
   pages={215--239},
}
\bib{CH93}{article}{
   author={Camassa, Roberto},
   author={Holm, Darryl D.},
   title={An integrable shallow water equation with peaked solitons},
   journal={Phys. Rev. Lett.},
   volume={71},
   date={1993},
   number={11},
   pages={1661--1664},
}
\bib{CHH94}{article}{
   author={Camassa, Roberto},
   author={Holm, Darryl D.},
   author={Hyman, James M.},
   title={A new integrable shallow water equation},
   journal={Adv. Appl. Mech.},
   volume={31},
   date={1994},
   number={1},
   pages={1--33},
}
\bib{C00}{article}{
   author={Constantin, Adrian},
   title={Existence of permanent and breaking waves for a shallow water
   equation: a geometric approach},
   journal={Ann. Inst. Fourier (Grenoble)},
   volume={50},
   date={2000},
   number={2},
   pages={321--362},
}
\bib{C01}{article}{
   author={Constantin, Adrian},
   title={On the scattering problem for the Camassa--Holm equation},
   journal={R. Soc. Lond. Proc. Ser. A Math. Phys. Eng. Sci.},
   volume={457},
   date={2001},
   number={2008},
   pages={953--970},
   issn={1364-5021},
   review={\MR{1875310}},
   doi={10.1098/rspa.2000.0701},
}
\bib{CE98-1}{article}{
   author={Constantin, Adrian},
   author={Escher, Joachim},
   title={Global existence and blow-up for a shallow water equation},
   journal={Ann. Scuola Norm. Sup. Pisa Cl. Sci. (4)},
   volume={26},
   date={1998},
   number={2},
   pages={303--328},
}
\bib{CE98-2}{article}{
   author={Constantin, Adrian},
   author={Escher, Joachim},
   title={Well-posedness, global existence, and blowup phenomena for a
   periodic quasi-linear hyperbolic equation},
   journal={Comm. Pure Appl. Math.},
   volume={51},
   date={1998},
   number={5},
   pages={475--504},
}
\bib{CE98-3}{article}{
   author={Constantin, Adrian},
   author={Escher, Joachim},
   title={Wave breaking for nonlinear nonlocal shallow water equations},
   journal={Acta Math.},
   volume={181},
   date={1998},
   number={2},
   pages={229--243},
}
\bib{CL09}{article}{
   author={Constantin, Adrian},
   author={Lannes, David},
   title={The hydrodynamical relevance of the Camassa-Holm and
   Degasperis-Procesi equations},
   journal={Arch. Ration. Mech. Anal.},
   volume={192},
   date={2009},
   number={1},
   pages={165--186},
}
\bib{CM00}{article}{
   author={Constantin, Adrian},
   author={Molinet, Luc},
   title={Global weak solutions for a shallow water equation},
   journal={Comm. Math. Phys.},
   volume={211},
   date={2000},
   number={1},
   pages={45--61},
}
\bib{CS00}{article}{
   author={Constantin, Adrian},
   author={Strauss, Walter A.},
   title={Stability of peakons},
   journal={Comm. Pure Appl. Math.},
   volume={53},
   date={2000},
   number={5},
   pages={603--610},
}
\bib{D01}{article}{
   author={Danchin, Rapha\"{e}l},
   title={A few remarks on the Camassa-Holm equation},
   journal={Differential Integral Equations},
   volume={14},
   date={2001},
   number={8},
   pages={953--988},
}
\bib{DT79}{article}{
   author={Deift, P.},
   author={Trubowitz, E.},
   title={Inverse scattering on the line},
   journal={Comm. Pure Appl. Math.},
   volume={32},
   date={1979},
   number={2},
   pages={121--251},
}
\bib{F95}{article}{
   author={Fokas, A. S.},
   title={On a class of physically important integrable equations},
   note={The nonlinear Schr\"{o}dinger equation (Chernogolovka, 1994)},
   journal={Phys. D},
   volume={87},
   date={1995},
   number={1-4},
   pages={145--150},
}
\bib{Fu96}{article}{
   author={Fuchssteiner, Benno},
   title={Some tricks from the symmetry-toolbox for nonlinear equations:
   generalizations of the Camassa-Holm equation},
   journal={Phys. D},
   volume={95},
   date={1996},
   number={3-4},
   pages={229--243},
}
\bib{GL18}{article}{
   author={Gao, Yu},
   author={Liu, Jian-Guo},
   title={The modified Camassa-Holm equation in Lagrangian coordinates},
   journal={Discrete Contin. Dyn. Syst. Ser. B},
   volume={23},
   date={2018},
   number={6},
   pages={2545--2592},
}
\bib{GLOQ13}{article}{
   author={Gui, Guilong},
   author={Liu, Yue},
   author={Olver, Peter J.},
   author={Qu, Changzheng},
   title={Wave-breaking and peakons for a modified Camassa-Holm equation},
   journal={Comm. Math. Phys.},
   volume={319},
   date={2013},
   number={3},
   pages={731--759},
}
\bib{HFQ17}{article}{
   author={Hou, Yu},
   author={Fan, Engui},
   author={Qiao, Zhijun},
   title={The algebro-geometric solutions for the Fokas-Olver-Rosenau-Qiao
   (FORQ) hierarchy},
   journal={J. Geom. Phys.},
   volume={117},
   date={2017},
   pages={105--133},
}
\bib{J02}{article}{
   author={Johnson, R. S.},
   title={Camassa--Holm, Korteweg--de Vries and related models for water
   waves},
   journal={J. Fluid Mech.},
   volume={455},
   date={2002},
   pages={63--82},
}
\bib{M13}{article}{
   author={Matsuno, Yoshimasa},
   title={B\"{a}cklund transformation and smooth multisoliton solutions for a
   modified Camassa--Holm equation with cubic nonlinearity},
   journal={J. Math. Phys.},
   volume={54},
   date={2013},
   number={5},
   pages={051504, 14},
}
\bib{MN02}{article}{
   author={Mikhailov, A. V.},
   author={Novikov, V. S.},
   title={Perturbative symmetry approach},
   journal={J. Phys. A},
   volume={35},
   date={2002},
   number={22},
   pages={4775--4790},
}
\bib{N09}{article}{
   author={Novikov, Vladimir},
   title={Generalizations of the Camassa--Holm equation},
   journal={J. Phys. A},
   volume={42},
   date={2009},
   number={34},
   pages={342002, 14},
}
\bib{OR96}{article}{
   author={Olver, P. J.},
   author={Rosenau, P.},
   title={Tri-hamiltonian duality between solitons and solitary-wave
solutions having compact support},
   journal={Phys. Rev. E},
   volume={53},
   date={1996},
   number={2},
   pages={1900},
}
\bib{Q03}{article}{
   author={Qiao, Zhijun},
   title={The Camassa--Holm hierarchy, $N$-dimensional integrable systems,
   and algebro-geometric solution on a symplectic submanifold},
   journal={Comm. Math. Phys.},
   volume={239},
   date={2003},
   number={1-2},
   pages={309--341},
}
\bib{Q06}{article}{
   author={Qiao, Zhijun},
   title={A new integrable equation with cuspons and W/M-shape-peaks
   solitons},
   journal={J. Math. Phys.},
   volume={47},
   date={2006},
   number={11},
   pages={112701, 9},
}
\bib{XZ00}{article}{
   author={Xin, Zhouping},
   author={Zhang, Ping},
   title={On the weak solutions to a shallow water equation},
   journal={Comm. Pure Appl. Math.},
   volume={53},
   date={2000},
   number={11},
   pages={1411--1433},
}
\bib{YQZ18}{article}{
   author={Yan, Kai},
   author={Qiao, Zhijun},
   author={Zhang, Yufeng},
   title={On a new two-component $b$-family peakon system with cubic
   nonlinearity},
   journal={Discrete Contin. Dyn. Syst.},
   volume={38},
   date={2018},
   number={11},
   pages={5415--5442},
}
%---------------------------------------------------------%
\end{biblist}
%---------------------------------------------------------%
\end{bibdiv}
%---------------------------------------------------------%
\end{document}